\newtheorem{theorem}{Theorem}
\newtheorem{definition}[theorem]{Definition}
\newtheorem{lemma}[theorem]{Lemma}
\newtheorem{proposition}[theorem]{Proposition}
\newtheorem{remark}[theorem]{Remark}
\title{Stochastic Stability of 
a Recency Weighted Sampling Dynamic\thanks{We thank Michel Benaïm, Linus Bergqvist, Lee Dinetan, Boualem Djehiche, Isak Trygg Kupersmidt, Mark Voonerweld, Jörgen Weibull, Peter Wikman, and seminar participants at SSE, MIT theory lunch, Stockholm PhD Math Fest, and SING15 for helpful comments. In particular, Lee Dinetan was involved in the early stages of this project and we would like to thank him for many helpful insights, not least into questions about ergodicity. The research of A. Aurell was supported by the Swedish Research Council (2016-04086) and AFOSR \# FA9550-19-1-0291; G. Karreskog by Tom Hedelius and Jan Wallander Foundation, and Knut and Alice Wallenberg Research Foundation.}}
\author{Alexander Aurell\footnote{Department of Operations Research and Financial Engineering, Princeton University aaurell@princeton.edu} \and Gustav Karreskog\footnote{Department of Economics, Stockholm School of Economics \mbox{gustav.karreskog@phdstudent.hhs.se}}}
\date{}
\begin{document}

\maketitle

\begin{abstract}
We introduce and study a model of long-run convention formation for rare interactions. Players in this model form beliefs by observing a recency-weighted sample of past interactions, to which they noisily best respond. We propose a continuous state Markov model, well-suited for our setting, and develop a methodology that is relevant for a larger class of similar learning models. We show that the model admits a unique asymptotic distribution which concentrates its mass on some minimal CURB block configuration. In contrast to existing literature of long-run convention formation, we focus on behavior inside minimal CURB blocks and provide conditions for convergence to (approximate) mixed equilibria conventions inside minimal CURB blocks. 
\medskip \\ \medskip
\textbf{JEL:} C72, C73
\\
\textbf{Keywords:} Evolutionary game theory, learning in games, stochastic stability, recency, mixed Nash equilibria, minimal CURB blocks
\end{abstract}

\newpage
\tableofcontents
\section{Introduction}

Social conventions form and evolve in many real life situations. In this paper we consider the formation of conventions as a repetitive feedback process where expectations of behavior are formed by observing other interactions in a society, expectations then inform decisions which alters the history from which other members of society form expectations. For example, when buying a house each bidder (player) might not have participated in the exact same bidding (game) before, but has knowledge about some, but not all, previous interactions and assumes that the other bidders interacting with her will behave similarly to how bidders have historically behaved. 

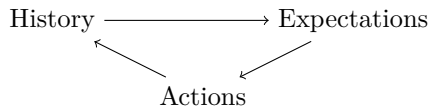
\begin{figure}[H]
\centering
\begin{tikzpicture}
    \node (1) at (0,0) {History};
    \node (2) at (4,0)  {Expectations};
    \node (3) at (2, -1) {Actions};
    \draw[->] (1) edge (2) (2) edge  (3) (3) edge (1);
\end{tikzpicture}
\caption{Two players are randomly selected from large populations and assigned roles. The players form expectations by sampling from historical records of interactions and then act based on those expectations. The realized play is appended to the history, and the process is repeated.}
\label{fig:diag}
\end{figure}

More specifically, the general setting in this paper is the evolution of social conventions as introduced by \cite{Young1993}. We imagine that for each player role there is a large population of candidates from which players are randomly drawn to play a normal form game. The populations are large in the sense that the same player is never selected twice.\footnote{For example, we can imagine that there is a continuum of players in each population and that the random draw is done by sampling from an atomless probability measure over the continuum.} Before deciding which action to take the players access a sample of historical interactions. The players use the sample to form beliefs about the opposite roles' historical behavior (this is the only information the players can base their decision on, since they have never played the game before). Thereafter, the players simultaneously respond to the mixed strategy induced by the sample. Once they have played, their interactions is appended to the history, new players are randomly drawn from the populations, and the process is repeated with the updated history. 

By modeling repeated play based on historical records as diagrammed in Figure~\ref{fig:diag}, one hopes to answer questions about which actions will be taken in the long run, and therefore which stable conventions, if any, will arise. We will refer to a dynamical model for the likelihood of the interactions as a \textit{learning process}. We interpret stable points of this learning process as stable social conventions. This setting differs from that of a large population repeatedly playing the same game, examples of which can be found in for example \cite{sandholm2010population}. In such models, each player plays the same game many times, and over time learn which actions to take. In our setting, any given player only plays the game once, but has partial information about how other players typically play the game. It is thus suitable for studying the formation of conventions in strategic interactions where any given individual only rarely participates. 

When studying the asymptotic distribution of the (state of the) learning process it is convenient, both theoretically and numerically, if it is an ergodic Markov process. That is achieved in the original formulation of \cite{Young1993} by defining the state of the learning process as a finite sequence, the "finite memory" containing the most recent interactions, by letting the players form beliefs by sampling strategies from the memory without replacement, and by assuming a small mistake probability with which a random action is taken instead of a best reply. The finite state space and the noisy action (the possibility of making a mistake) ensures that Young's learning process has a unique invariant distribution to which it converges asymptotically.

Much of the work building Young's original model contains the finite memory and noisy action structure, which is well suited for studying the relative stability of different pure (i.e., strict)
Nash equilibria or minimal CURB blocks.\footnote{A subset (block) of strategy profiles $C$ is called Closed Under Rational Behavior (CURB) if the best replies to any strategy profile with support in $C$ is also in $C$. It is called a minimal CURB block if it does not contain any strictly smaller CURB block \cite{Basu1991}.}
However, finite memory based learning is ill-suited to answer questions about the players' behavior around mixed Nash equilibria since its evolution depends heavily on the ordering of the history, not just the sampling probabilities. Furthermore, it exhibits behavior around even simple mixed Nash equilibria that is better viewed as a  modeling artifact than as a realistic description of behavior. The purpose of this paper is to define a new learning process with the following features: firstly, it converges to a minimal CURB configuration and secondly, it behaves reasonably also inside non-singleton minimal CURB containing one unique mixed Nash equilibrium.

To address the problem of potentially unwanted cycling and increase the stability of social conventions we introduce the \textit{Recency Weighted Sampler} (RWS). It is a learning process that differs from previous work in its structure of the historical record of plays. The history is assumed to be infinite with recent interactions more likely to be sampled. A finite sample is drawn with replacement by each player at each period. The probability of sampling the interaction of a past game decreases by a factor $\beta$, $0 < \beta < 1$, per game that has been played since. This geometric decrease allows us to use the sampling probabilities, of strategies, as the state space of the learning process. The Markovian property of the process is preserved and we can in a meaningful way analyze it at a finer level inside 
minimal CURB blocks (and determine properties of the distribution of interactions, i.e., the social convention, inside a minimal CURB block).

\subsection{Related Literature}   

Already in his dissertation John Nash gave a second interpretation of the Nash equilibrium, the \emph{mass action} interpretation \citep{Nash1950}. He assumes that a large population is associated to each player role, that one player per role is selected in each period to play the game, and that the individual players accumulate empirical information on the relative advantage of the different available pure strategies. He then argues, informally, that in such a setting, the stable points correspond to Nash equilibria and those points should 
eventually be reached by the process. 

The mass action interpretation is appealing since its assumptions about bounded rationality and repeated interactions are more credible than those underlying the rationalistic interpretation built on assumptions of perfect rationality and common knowledge.\footnote{Especially since perfect rationality and common knowledge by itself only leads to rationalizability but not all the way to Nash equilibrium.} Furthermore, experimental evidence often favors some kind of learning and adjustment over the rationalistic motivation. The general result is that in a one-shot interaction, play rarely corresponds to a Nash equilibrium, but if the players have a chance to learn and adjust, 
play often (but far from always) moves to a Nash equilibrium. See \cite[Ch. 6]{Camerer2003} for an overview of experimental models and results. 

Appealing as the motivation might be, the theoretical picture has turned out to be considerably more complicated than indicated by Nash's informal argument. One of the first, and most studied, models formalizing a setting similar in spirit to the mass action interpretation is that of fictitious play in \cite{brown1951iterative}.
Even though Brown thought fictitious play would in general converge to a Nash equilibrium, it was shown in \cite{shapley1964some} that even in a game with a unique Nash equilibrium there might only exist a stable cycle and no convergence to the mixed equilibrium. In general, it is the case that if the process has a stationary point, it must be a Nash equilibrium, but the existence of such a stationary point is not guaranteed. See e.g. \cite{fudenberg1998theory, weibull1997evolutionary, sandholm2010population} for overview of such results. Existing general results do not address convergence to stable points (which normally correspond to Nash equilibria) but convergence to stable sets. \cite{Ritzberger1995} show set-convergence results for evolutionary dynamics and \cite{Balkenborg2013} for best reply dynamics. Similarly \cite{Hurkens1995, young1998individual} show set-convergence results for dynamics similar to those studied in this paper. 

Smooth fictitious play, first introduced in \cite{Fudenberg1993}, 
is a variant of fictitious play where players respond with a perturbed best response. In contrast to the standard version of fictitious play, 
not only the empirical frequency but also actual play can converge to 
a Nash equilibrium. In \cite{Benam1999, Hofbauer2002}, global convergence results are shown for some games with unique Nash equilibria, including interior ESS, two-player zero-sum, supermodular, and potential games. 

A downside with standard versions of fictitious play and smooth fictitious play is that the increments of the learning processes decrease in size over time. In practice the point of initialization is therefore crucial for convergence. Furthermore, if the behavior is cyclic the cycles take longer and longer to complete. Introducing a bias towards more recent plays, similar to that used in this paper, yields processes with increments of similar size over time, which for many applications is natural. Such processes are studied in \cite{Benaim2009}, where the time average in unstable games is studied, and in \cite{fudenberg2014learning}. 

The one class of dynamics for which we have quite general results for convergence to equilibrium rely on a combination of noisy behavior and satisfaction \citep{Foster2003,PeytonYoung2006,Hart2006,Block2019learning}. A given player randomly explores the action space until she is satisfied, e.g., her received payoff is higher than some threshold 
or close enough to the maximum payoff observed. Then she keeps taking that action as long as she still is satisfied. The exact setting and formulation of results vary, but in general models in this category are able to converge to a Nash equilibrium under general circumstances. A possible downside is that the path to equilibrium can be very long and somewhat unrealistic. The players are thus, in a sense, too unsophisticated, at least if they have full knowledge about the game.

The existing literature building on \cite{Young1993, young1998individual} has not focused on the convergence to mixed Nash equilibria, but instead on the speed of convergence \cite{Kreindler2013} or improving tools for finding stochastically stable subsets \cite{Ellison2000}. To the best of our knowledge no one has conducted a careful study of the convergence for this type of learning processes to mixed Nash equilibria. 

\subsection{Summary and Outline}

In Section~\ref{sec:model} the proposed learning process, the Recency Weighted Sampler, is denfied and we introduce the probabilistic and analytical tools needed to analyze the process. Since we define a framework different from existing models (most crucially, RWS has a continuous state space) we cannot rely directly on any existing results. We therefore begin by proving some standard properties of the learning process in Section \ref{sec:mainresults}. We prove uniform ergodicity for a class of learning process of which the RWS is a member. Next, we show that in the small-error limit the invariant distribution of the RWS will concentrate on minimal CURB blocks. Once we have recovered these properties, we analyze RWS's behavior inside minimal CURB blocks that are non-singleton, and show that for any generic game where the minimal CURB blocks are at most $2\times 2$ play will eventually concentrate around Nash equilibria or, when the sample size is small, around points close to the Nash equilibria. The paper concludes with Section~\ref{sec:concl} where we discuss the results and possible extensions. Proofs have been appended in the end of the paper.

\section{The Recency Weighted Sampler}
\label{sec:model}

Let $G$ be a finite two-player game, iteratively played by two players drawn from large populations. We assume that a player never plays the game more than once. The game has two asymmetric player roles, $1$ and $2$. The sets of pure strategies in the game are $S_{1}$ and $S_{2}$, containing $m_{1}\in \mathbb{N}$ and $m_{2}\in \mathbb{N}$ pure
strategies respectively; the spaces of mixed strategies are thus $\Delta\left( S_{1}\right) $ and $\Delta \left( S_{2}\right) $. Throughout the paper, $-i$ denotes the index $\{1,2\}\backslash\{i\}$, $i\in\{1,2\}$. For $\sigma \in \Delta \left( S_{-i}\right) $, we denote by $BR_{i}\left(\sigma \right) \subset S_{i}$ the set of best replies of player $i$ to the mixed strategy $\sigma$. We identify $\Delta \left( S_{i}\right) $ with the $\left( m_{i}-1\right) $-dimensional
simplex and denote $\square \left(S\right) :=\Delta \left( S_{1}\right) \times \Delta\left( S_{2}\right)$, $\square(S)$ being endowed with the Euclidean distance $\| \cdot \| $. We denote by $\mathcal{B}(\square(S))$ and $\mathcal{P}(\square(S))$ the Borel $\sigma$-field over $\square(S)$ and the set of Borel probability measures over $\square(S)$, respectively.

\subsection{The Stochastic Best Reply in the RWS}

Each interaction is recorded as a strategy pair $\left( s_{1},s_{2}\right) $. Denoting $s_{1}\left( t\right) $ and $s_{2}\left( t\right) $ the actions taken at time $t \in \mathbb{Z}$, for player 1 and player 2 respectively. The history is thus a sequence of plays
\begin{equation}
    \left( \left( s_{1}\left( t\right) ,s_{2}\left( t\right) \right) \right)_{t\in \mathbb{Z}}.
\end{equation}
The purpose of extending the history to $t<0$ is purely technical. We note here that such an artificial "initial history" represents an initial sampling distribution at $t=0$ over the action space and later we will see that any such distribution can be represented by an infinite history of interactions. If the payers initially have no knowledge, this can be taken as a uniform distribution over the strategy space.  

At each time $t$, each player of role $i\in \left\{ 1,2\right\} $ samples $k\in \mathbb{N}$ plays (with replacement) from the history of
the opposing player role $-i$. Each sample is drawn independently 
and samples are drawn with bias towards more recent plays in a geometric fashion. Namely, players of role $i$ have a bias $\beta
\in \left( 0,1\right)$, called the \textit{recency parameter}, such that at time $t$ the probability of sampling the interaction from time
period $t-\tau$, $\tau \in \{1,2,\dots\}$ is
\begin{equation}
    \left( 1-\beta\right) \beta^{\tau-1}.
\end{equation}
Therefore, a play of the strategy $s \in S_{-i}$ 
will be sampled by player $i$ with probability
\begin{equation}
    p_{-i,s}\left(t\right) 
    = \left( 1-\beta\right) \sum_{\tau=1}^{\infty}\beta^{\tau-1} 1_s(s_{-i}(t-\tau)),
\end{equation}
where $1_s$ is the indicator function on $s$.

We will call $p_{i}\left( t\right) :=\left( p_{i,1}\left( t\right),\ldots p_{i,m_{i}}\left( t\right) \right) $ the state process of player role $i$ at time $t$ and $p(t) := (p_1(t),p_2(t))$ the state process or the learning process, interchangeably. $p_i(t)$ is a vector of sampling probabilities obtained by player $i$ from player $-i$'s history and is an element of $\Delta \left( S_{-i}\right) $. The result of player $i$'s sampling is a random vector $\left( n_{-i,1}\left( t\right) ,\ldots n_{-i,m_{-i}}\left( t\right) \right) $ of integers,
multinomially distributed with parameters $k$ and $p_{-i}\left( t\right)$. For $s \in S_{i}$, let $\overrightarrow{1_{i,s}}\in \Delta \left(S_{i}\right) $ be the unit vector representing the pure strategy $s \in S_{i} $, i.e., a vector of of size $m_{i}$ with $0$ everywhere except at position $s$, where it is $1$. From her sample, player $i$ forms an empirical (average) opposing strategy profile
\begin{equation}
\label{eq:samples}
    D_{-i}\left( t\right) 
    := \frac{1}{k}\sum_{s=1}^{m_{-i}}n_{-i,s}\left(t\right) \overrightarrow{1_{-i,s}}\in \Delta \left( S_{-i}\right).
\end{equation}

Player $i$ now acts as if her opponent will play according to the
mixed strategy $D_{-i}\left( t\right) $ and tries to play a best response to it. However, she can make a mistake. Player $i$'s \textit{error parameter} (or mistake frequency) $\varepsilon \in \left[ 0,1\right] $ indicates the probability she will fail to play a strategy in $BR_{i}\left( D_{-i}\left( t\right) \right) $, and instead play a strategy in $S_{i}$ at random (with uniform probability). If $BR_{i}\left( D_{-i}\left( t\right) \right) $ is not a singleton, the realized action is sampled uniformly from all the elements of $BR_{i}\left( D_{-i}\left( t\right) \right)$. We denote the outcome of the uniform sampling between all best replies to $\sigma\in \Delta (S_{-i})$ by $\widehat{BR}_i(x) \in S_i$. The distinction we want to emphasize with this notation is that $BR_i(x)$ is set-valued (the set of all best replies to $x$) while $\widehat{BR}_i(x)$ is $S_i$-valued
and random (since a best reply was randomly selected from the set $BR_i(x)$).

Ultimately, player $i$ will play $\widehat{BR}_{i}\left( D_{-i}\left( t\right) \right)$, with $D_{-i}(t)$ obtained as described above, with a probability of $1-\varepsilon $; and additionally, play any strategy $ s \in S_{i}$ with probability $\varepsilon/m_i$. We complete this section by defining the random variable
\begin{equation}
    \widetilde{BR}_{i}\left( p_{-i}\right) \in S_i
\end{equation}
to be the random choice of strategy obtained by a player $i$ through the following process:
\begin{enumerate}
    \item Accessing the history of interactions from which plays by the opposing role are sampled with probabilities given by $p_{-i}$;
    \item Sampling $k$ opponent actions to form the empirical belief $D_{-i} \in \Delta(S_{-i})$;
    \item Playing the best response $\widehat{BR}_{i}\left( D_{-i}\right) $, except in a fraction $\varepsilon $ of the time when a randomly selected strategy is played.
\end{enumerate}

\subsection{The Dynamics of RWS}
\label{sec:run}

At $t=0$, an initial history $\left( \left( s_{1}\left( u\right) ,s_{2}\left(u\right) \right) \right) _{u\in \mathbb{Z}_{-}}$, $s_i(u)\in S_i$, is given. At each time $t\in \mathbb{N}_{0}$, 
two new individuals are assigned to the roles. They use the same parameters values $k$, $\beta$, and $\varepsilon $.\footnote{This is not a necessary assumption but we make it for the sake of presentation. The RWS can be defined with a different set of parameter values for each player role.} After sampling from the history with recency parameter $\beta$, they play $s_{i}\left( t\right) =\widetilde{BR}_{i}\left( p_{-i}\left( t\right) \right) $, $i=1,2$, where $p_{-i}\left( t\right) $ is exactly the historical distribution of plays with recency bias. The realized strategy profile $\left(s_{1}\left( t\right) ,s_{2}\left( t\right) \right) $ is appended to the history, and the procedure repeats. The exponential nature of sampling leads to the following characterization of the RWS learning process. A proof can be found in Appendix~\ref{app:exponential_history}.

\begin{proposition}
\label{exphist}
The state process of player $i$ obeys the equation
\begin{equation}
\label{eq:RWS-dynamics}
    p_{i}\left( t+1\right) 
    = \beta p_{i}\left( t\right) +\left( 1-\beta \right)\overrightarrow{1_{i,s_{i}(t)}}, \quad t \in \mathbb{Z}
\end{equation}
where $s_{i}\left( t\right)  = \widetilde{BR}_{i}\left( p_{-i}\left( t\right)\right) $ is drawn randomly according to the model.
\end{proposition}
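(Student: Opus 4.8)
The plan is to prove the identity by direct pathwise computation from the closed-form definition of the recency-weighted sampling probabilities; the statement is purely algebraic once the history $\big((s_1(u),s_2(u))\big)_u$ is fixed, so no probabilistic argument is required, and only the geometric structure of the weights $(1-\beta)\beta^{\tau-1}$ is used. The whole argument reduces to shifting the index of a convergent series by one.

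First I would write the state componentwise. Relabeling the defining display for $p_{-i,s}(t)$, the $s$-component of role $i$'s state is
\[
p_{i,s}(t) = (1-\beta)\sum_{\tau=1}^{\infty}\beta^{\tau-1}\,1_s\big(s_i(t-\tau)\big), \qquad s \in S_i.
\]
Because $0<\beta<1$ and each indicator lies in $\{0,1\}$, this series is dominated by $(1-\beta)\sum_{\tau\ge 1}\beta^{\tau-1}=1$, so it converges absolutely and $p_{i,s}(t)$ is well defined; summing over $s$ and using that exactly one pure strategy is realized in each period confirms $p_i(t)\in\Delta(S_i)$, a useful consistency check although not part of the claim itself.

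The main step is to evaluate $p_{i,s}(t+1)$ and reindex. Substituting $t+1$ and replacing the summation variable $\tau$ by $\tau+1$ gives $p_{i,s}(t+1)=(1-\beta)\sum_{\tau=0}^{\infty}\beta^{\tau}\,1_s\big(s_i(t-\tau)\big)$. Splitting off the $\tau=0$ term isolates the just-appended play at time $t$, contributing $(1-\beta)\,1_s\big(s_i(t)\big)$, which is exactly the $s$-component of $(1-\beta)\overrightarrow{1_{i,s_i(t)}}$; factoring one power of $\beta$ out of the remaining tail over $\tau\ge 1$ reproduces $\beta\,p_{i,s}(t)$. Adding the two pieces yields $p_{i,s}(t+1)=\beta\,p_{i,s}(t)+(1-\beta)\,1_s\big(s_i(t)\big)$ for every $s$, which in vector form is the asserted recursion; substituting the model's definition $s_i(t)=\widetilde{BR}_i\big(p_{-i}(t)\big)$ completes the proof.

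I do not expect a genuine obstacle, since the computation is a single index shift on a geometric series. The only place demanding care is the bookkeeping of lags: one must check that appending the new play shifts every previously recorded interaction one step further into the past (multiplying its weight by $\beta$) and that the newest play enters with the smallest lag $\tau=1$ and weight $1-\beta$, so that no off-by-one error creeps into the exponent of $\beta$.
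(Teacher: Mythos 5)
Your proposal is correct and follows essentially the same route as the paper's own proof: both start from the defining series $p_{i,s}(t+1)=(1-\beta)\sum_{\tau\ge 1}\beta^{\tau-1}1_s(s_i(t+1-\tau))$, shift the index by one, split off the newest play's term $(1-\beta)1_s(s_i(t))$, and recognize the remaining tail as $\beta p_{i,s}(t)$. The only difference is cosmetic — your added convergence and simplex-membership check is a consistency remark the paper omits.
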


The order of historical plays is not necessary to characterize the
model, all the relevant information is captured in $\left( p_{1}\left( t\right) ,p_{2}\left( t\right) \right) \in \square\left( S\right)$. From the position $\left( p_{1}\left( t\right) ,p_{2}\left( t\right) \right) \in \square\left( S\right)$, at most $m_{1}m_{2}$ different points $\left(p_{1}\left( t+1\right) ,p_{2}\left( t+1\right) \right) $ may be reached. Conditioned on $p(t)$, for any $s_{1}\in S_{1}$ and $s_{2}\in S_{2}$ the point
\begin{equation}
    \left( \beta p_{1}\left( t\right) +\left( 1-\beta\right) 
    \overrightarrow{1_{1,s_{1}}},\beta p_{2}\left( t\right) +\left(
    1-\beta\right) \overrightarrow{1_{2,s_{2}}}\right)
\end{equation}
will be reached when $s_{1}\left( t\right) =s_{1}$ and $s_{2}\left( t\right)=s_{2}$, which happens with probability
\begin{equation}
    \prod_{i=1}^2 \mathbb{P}\left(\widetilde{BR}_i(p_{-i}(t)) = s_i \ |\ p_{-i}(t)\right),
\end{equation}
since we assume players sample independently of each other, and
\begin{equation}
\label{eq:the_prob_that_is_Lip}
    \mathbb{P}\left( \widetilde{BR}_{i}\left( p_{-i}\left( t\right) \right) = s_{i}\right) 
    = \left( 1-\varepsilon \right) \mathbb{P}\left( \widehat{BR}_{i}\left(D_{-i}\left( t\right) \right) =s_{i}\right) +\varepsilon/m_i,
\end{equation}
where $D_{-i}\left( t\right) \in \Delta \left(S_{-i}\right)$ is a multinomial combination of strategies (with parameters $k$ and $p_{-i}\left( t\right) $). 

\subsection{Markovianity}
\label{sec:K}

By construction $(p(t); t\in\mathbb{N})$ is a Markov chain taking values in $\square\left( S\right)$. Since its state space is the continuous set $\square(S)$, its transition kernel is a function $P : \square(S) \times \mathcal{B}(\square(S)) \rightarrow \mathbb{R}$ with the standard Markov kernel properties. The kernel takes a tuple $(x,B)$ and returns the probability of the chain transitioning from $x$ to $B$ in one period. The kernel is the continuous state space equivalent of the transition rate matrix in discrete state space models. From the dynamics of RWS, we have that for all $(p_1,p_2)\in\square(S)$ and $B\in\mathcal{B}(\square(S))$,
\begin{equation}
\label{eq:def_of_kernel}
    \begin{aligned}
    P\left(( p_{1},p_{2}), B\right) 
    &= \sum_{s_{1}=1}^{m_{1}}\sum_{s_{2}=1}^{m_{2}}
    \mathbb{P}\left( \widetilde{BR}_{1}\left( p_{2}\right) =s_{1},\widetilde{BR}_{2}\left( p_{1}\right) =s_{2}\right) \times
    \\
    &\qquad \qquad 1_B\left( \beta p_{1}+\left( 1-\beta\right) \overrightarrow{1_{1,s_{1}}},\beta p_{2}+\left( 1-\beta\right) \overrightarrow{1_{2,s_{2}}} \right).
    \end{aligned}
\end{equation}

\begin{remark}
An underlying assumption of the RWS is that there exists a probability space $\left( \Omega ,\mathcal{F},\mathbb{P}\right) $ carrying all the random variables necessary for defining the learning process. The space is filtered by $\mathbb{F}$, the natural filtration of the state process, and satisfies the usual conditions. The assumption is innocent, it only requires the space to carry a countable number of independent random variables. It is in this filtered space that we subsequently study the learning process as a Markov chain.
\end{remark}

\subsection{Convention Formation in Matching Pennies}

To better understand why RWS is suitable for the formation of mixed strategy conventions, consider perhaps the  simplest normal form game with a unique mixed Nash equilibrium: Matching Pennies, presented in Table~\ref{fig:match-pennies-bimatrix}. In this elementary example, we compare the behavior of RWS to the behavior of the finite memory process of \cite{Young1993} by simulation. 
\begin{table}	
\centering
    \begin{tabular}{c|rc|rc|}
    \multicolumn{0}{c}{}&  \multicolumn{2}{c}{$\mathbf{1}$} & \multicolumn{2}{c}{$\mathbf{2}$} 
    \\
    \cline{2-5}
    $\mathbf{1}$ & $1,$ & $-1$ &  $-1,$ &$ 1$ 
    \\ 
    \cline{2-5} 
    $\mathbf{2}$ & $-1,$ & $1$ & $1,$ &  $-1$ 
    \\ 
    \cline{2-5}
    \end{tabular}
\caption{The Matching Pennies payoff bimatrix. The row player has the "agreeing" role, aiming to match strategy with the column player, who has the "disagreeing" role, and aims to play differently than the row player. The unique mixed Nash equilibrium is $\left(\frac{1}{2}, \frac{1}{2}\right)$, fifty-fifty randomization for both players.}
\label{fig:match-pennies-bimatrix}
\end{table}
Consider a finite memory learning process where the length of the history is $m = 9$, and both players sample the whole history and play without a mistake, i.e., $k=m$ and $\varepsilon = 0$. Assume that the history contains, reading from the oldest to the latest entry, four interactions where both players took action \textbf{1}, followed by five interactions where both took action \textbf{2}. The row player will then take action \textbf{2} and the column player action \textbf{1}. However, since the interaction that falls out of the history is one where the column player played \textbf{1}, 
the sample to which the row player responds will not change until the 
\textbf{1}:s in the end of the history have all fallen out and the first interaction with a \textbf{2} falls out of the history. At that point, the history contains five interactions where the column player played \textbf{1}, so now the row player wants to play \textbf{1} as well. However, by now all the interactions in the history are such that the row player played \textbf{2}. So for the coming five interactions they will both take action \textbf{1}. 
\begin{equation} 
	 \begin{pmatrix}
		111122222 
		\\
		111122222	
	\end{pmatrix}  
	\hspace{-.1cm}
	\to  
	\hspace{-.1cm}
	\begin{pmatrix}
	    222222222 
	    \\
	    222211111	
	\end{pmatrix} 
	\hspace{-.1cm}
	\to 
	\hspace{-.1cm}
	\begin{pmatrix}
    	222211111 
    	\\
    	111111111	
	\end{pmatrix}
	\hspace{-.1cm}
	\to
	\hspace{-.1cm}
	\begin{pmatrix}
    	111111111 
    	\\
    	111122222	
	\end{pmatrix}
	\hspace{-.1cm}
	\to
	\cdots
\end{equation}
The behavior in the next period depends as much on what falls out of the history as what is added, generating a cycling behavior. The cycling behavior does not only happen in this special case but is a general feature observed when simulating finite memory based learning processes, see Figure~\ref{fig:matchingpennies1} for a simulated example. In Figure~\ref{fig:recency-matchingpennies} we see that this cycling behavior is not appearing in the RWS dynamic. The lack of cycling is explained by the fact that past interactions never leave the memory of the RWS (which is defined without a memory limit), which when they do generates cycling in for example finite memory based learning models. 
\begin{figure}[H]
    \centering
    \includegraphics[width=0.8\textwidth]{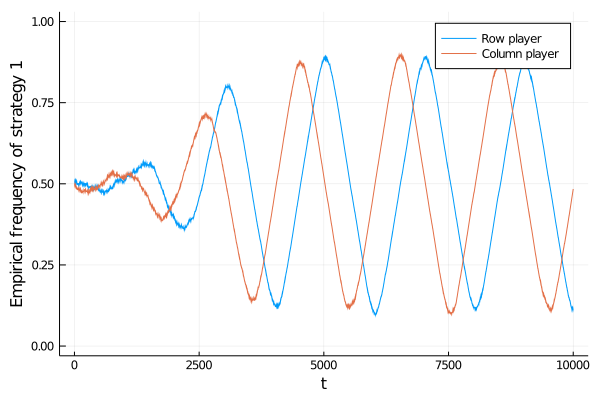}
    \caption{A 10 000 period simulation of Young's finite memory learning process on Matching Pennies with $m = 1000$, $k = 20$, $\varepsilon = 0.05$. Initiated in the mixed Nash equilibrium.}
    \label{fig:matchingpennies1}
\end{figure}
\begin{figure}[H]
	\centering
	\includegraphics[width=0.8\textwidth]{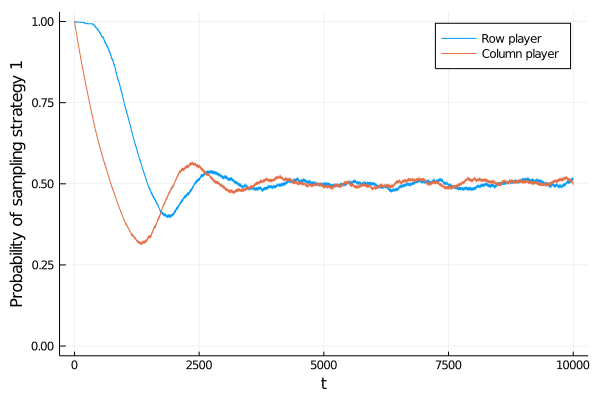}
	\caption{A 10 000 period simulation of the Recency Weighted Sampler on Matching Pennies with $\beta = 0.999$, $k = 20$, $\varepsilon = 0.05$. Initiated in the corner (1,1).}
	\label{fig:recency-matchingpennies}
\end{figure}

\section{Main Results}
\label{sec:mainresults}

\subsection{Ergodicity}
\label{sec:convergence}

Our first result is Theorem~\ref{thm:convergence_for_us} which states conditions for when the RWS state process is uniformly ergodic. Our proof relies on both the theory of Markov processes and functional analysis, and can be found in Appendix~\ref{sec:dinetan}. When $P$ is viewed as an operator on real-valued functions on $\square(S)$, it has a certain squeezing property that "flattens" Lipschitz continuous functions. Combined with the open-set accessibility this yields geometric convergence of RWS's distribution to a unique probability measure on $\square(S)$, invariant with respect to $P$. We state the theorem in the language of the theory of Markov processes where our sought property is called uniform ergodicity.

\begin{theorem}
\label{thm:convergence_for_us}
If $\varepsilon>0$ and $\beta \in (1-\max\{m_1,m_2\}^{-1},1)$, then the Markov chain with kernel $P$ is uniformly ergodic.
\end{theorem}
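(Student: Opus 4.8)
The plan is to read off from \eqref{eq:RWS-dynamics} and \eqref{eq:def_of_kernel} that $P$ is the Markov operator of an \emph{iterated function system with place-dependent probabilities}. Writing $s=(s_1,s_2)$, the at most $m_1m_2$ branches are the affine maps $w_s(p_1,p_2)=(\beta p_1+(1-\beta)\overrightarrow{1_{1,s_1}},\,\beta p_2+(1-\beta)\overrightarrow{1_{2,s_2}})$, selected with probabilities $q_s(p)=\prod_{i}\mathbb{P}(\widetilde{BR}_i(p_{-i})=s_i)$, so that $(Pf)(p)=\sum_s q_s(p)\,f(w_s(p))$ for $f\in C(\square(S))$. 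Two structural facts will drive everything. First, each $w_s$ is a $\beta$-contraction on $\square(S)$, uniformly in $s$. Second, by \eqref{eq:the_prob_that_is_Lip} each $q_s$ is a polynomial in $p$, hence Lipschitz with some constant $L_q$, and because $\varepsilon>0$ every factor is bounded below by $\varepsilon/m_i$, so $q_s(p)\ge(\varepsilon/m_1)(\varepsilon/m_2)>0$ for all $p$ and all branches.

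\textbf{Step 1 (squeezing of Lipschitz functions).} I would estimate $\mathrm{Lip}(Pf)$ by splitting, for $p,p'\in\square(S)$,
\[
(Pf)(p)-(Pf)(p')=\sum_s q_s(p)\big(f(w_s(p))-f(w_s(p'))\big)+\sum_s\big(q_s(p)-q_s(p')\big)f(w_s(p')).
\]
The first sum is bounded by $\beta\,\mathrm{Lip}(f)\,\|p-p'\|$ using $\sum_s q_s\equiv1$ and the contractivity of the $w_s$; in the second sum I would subtract a constant (legitimate since $\sum_s(q_s(p)-q_s(p'))=0$) to replace $f$ by its oscillation, obtaining a bound of the form $C\,L_q\,\mathrm{osc}(f)\,\|p-p'\|$. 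Since $P$ is an averaging operator it is a weak contraction in sup-norm, $\|Pf\|_\infty\le\|f\|_\infty$, so iterating yields $\mathrm{Lip}(P^nf)\le\beta^n\mathrm{Lip}(f)+C'\|f\|_\infty$: the Lipschitz constant is asymptotically flattened to a bound depending only on $\|f\|_\infty$. This is the announced squeezing property, and it is most naturally phrased as a contraction estimate for the dual action of $P$ on $\mathcal{P}(\square(S))$ in the Kantorovich--Rubinstein metric.

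\textbf{Step 2 (open-set accessibility).} Here the recency condition enters. A direct computation shows $q\in w_{i,s}(\Delta(S_i))$ iff its $s$-th coordinate satisfies $q_s\ge1-\beta$, so $\bigcup_{s}w_{i,s}(\Delta(S_i))=\{q:\max_s q_s\ge1-\beta\}$ covers all of $\Delta(S_i)$ precisely when $1-\beta\le m_i^{-1}$, which the hypothesis $\beta\in(1-\max\{m_1,m_2\}^{-1},1)$ guarantees with strict inequality for both roles. Together with the uniform lower bound $q_s\ge(\varepsilon/m_1)(\varepsilon/m_2)>0$ from $\varepsilon>0$, this covering lets me show that finite compositions of the $w_s$ reach into every open subset of $\square(S)$ from every starting point with positive probability. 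This accessibility pins down the support of any invariant measure as all of $\square(S)$ and supplies the non-degeneracy that upgrades the bounded-Lipschitz flattening of Step~1 into a genuine \emph{strict} geometric contraction, hence a unique invariant $\pi\in\mathcal{P}(\square(S))$ together with geometric convergence to it, uniform over initial conditions.

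\textbf{Main obstacle.} The crux is the place-dependence of the branch probabilities: the oscillation term $C\,L_q\,\mathrm{osc}(f)$ in Step~1 obstructs the naive Lipschitz-seminorm contraction one would get for an ordinary (probability-independent) IFS, so the decay of $\mathrm{Lip}(P^nf)$ must be extracted by combining the flattening with the accessibility of Step~2 rather than from contractivity alone. A secondary subtlety is that $P^n(x,\cdot)$ is purely atomic for every $n$, so the convergence cannot be obtained from a total-variation Doeblin minorization and must instead be established in the weak (Kantorovich) metric dual to the Lipschitz functions on which the squeezing operates; matching the resulting quantitative rate to the stated notion of uniform ergodicity is where the functional-analytic bookkeeping will concentrate.
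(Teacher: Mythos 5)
Your proposal is correct and follows essentially the same architecture as the paper's proof: (i) a squeezing estimate showing $P$ flattens Lipschitz functions (your Step~1 splitting of $(Pf)(p)-(Pf)(p')$ into a $\beta\,\mathrm{Lip}(f)$ term and an oscillation term is the paper's ``Lipschitz-friendliness'' lemma, proved there via the identity $ab-cd=\tfrac12(a+c)(b-d)+\tfrac12(a-c)(b+d)$); (ii) open-set accessibility from the hypotheses $\varepsilon>0$ and $\beta>1-\max\{m_1,m_2\}^{-1}$; and (iii) the combination of the two, yielding geometric decay of the spread $[P^nf]$ and hence a unique invariant measure with uniform geometric convergence in the Kantorovich--Rubinstein/Wasserstein metric --- including your observation that atomicity of $P^n(x,\cdot)$ rules out any total-variation Doeblin argument, which is exactly why the paper works in the weak metric. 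The one genuine difference is in step (ii): you prove accessibility by the covering characterization $w_{i,s}(\Delta(S_i))=\{q: q_s\geq 1-\beta\}$, so that $\bigcup_s w_{i,s}(\Delta(S_i))=\Delta(S_i)$ precisely when $(1-\beta)m_i\leq 1$, and then iterate preimages backward, using that each composition of $N$ branches is a $\beta^N$-contraction selected with probability at least $(\varepsilon^2/m_1m_2)^N$. The paper instead runs a forward greedy algorithm (its Lemma on truncated histories) that constructs an explicit finite play record approximating any target state, with the same hypothesis $(1-\beta)m_i\leq 1$ entering through an induction invariant. Your backward covering argument is cleaner and makes the role of the $\beta$-threshold transparent; the paper's forward construction has the side benefit of exhibiting any state as (the truncation of) an actual history of plays, which the paper also uses to motivate the model. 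One caveat: your Step~3 compresses the real work of the upgrade --- the paper needs an Arzel\`a--Ascoli compactness and finite-covering argument (its Lemma~\ref{lemma:B}) to turn accessibility plus flattening into a uniform spread contraction, and Schauder's theorem for existence of the invariant measure --- but the ideas you name are the right ones and the step does not fail.
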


In other words, for whichever initial distribution $\nu\in\mathcal{P}(\square(S))$ the initial history $p(0)$ is sampled from, the distribution of $p(t)$ will converge "geometrically uniformly" as $t\rightarrow \infty$ to the probability measure $\mu^*_\varepsilon$ which is the unique solution of $\mu^*_\varepsilon P=\mu^*_\varepsilon $. More precisely, for every $\varepsilon\in(0,1]$ there exists a unique $\mu^\ast_\varepsilon\in\mathcal{P}(\square(S))$ such that $\mu^*P = \mu^*$ and for all $\alpha\geq 1$,
\begin{equation}
    \left(W_\alpha(\nu P^n, \mu^*_\varepsilon)\right)^\alpha 
    \leq c\theta^n, \quad  \nu\in\mathcal{P}(\square(S)),
\end{equation}
where $W_\alpha$ is the Wasserstein distance of order $\alpha$ between measures on $\square(S)$ (see Definition 6.1 in \cite{villani2008optimal}), $\theta\in (0,1)$, and $c$ is a positive constant depending only on $\max_{x\in\square(S)} |x|$ and $\alpha$.

The theorem is more general than what is needed for the goal of this paper. The result holds for any Markov chain with a compact state space and with a dynamic of the form \eqref{eq:RWS-dynamics}, as long as there is a positive lower bound for the probability that any strategy is played (in any state) and that this probability is Lipschitz continuous as a function of the state. Examples of best response functions to which Theorem~\ref{thm:convergence_for_us} applies are the logit best reply where
\begin{equation}
    \mathbb{P}\left[\widetilde{BR}_i (p) = s_i \right] 
    = \frac{\exp(\eta \pi_i(s_i,p_{-i}))}{\sum_{a \in S_i}\exp(\eta \pi_i(a,p_{-i}))},\quad \eta > 0,
\end{equation}
models where $k$ itself is a random parameter, and models where only robust best responses to the sample are considered. 

\subsection{Convergence to Minimal CURB Configurations}

Before turning to the convergence to  minimal CURB blocks, one minor technical detail most be resolved. A minimal CURB block is a
collection of strategy profiles $C = C_1 \times C_2 \subset S$  such that the best reply to all mixed strategies in the sub-simplex spanned by those strategies is always inside the spanning set, i.e. $BR(\sigma) \subset C$ for all $\sigma \in \square\left( C\right)$, where 
$\square\left( C\right):= \Delta\left(C_1\right)\times \Delta\left(C_2\right)$. However, since our agents only reply to samples of size $k$, it might be the case that the mixed strategy from the simplex that has a best reply outside a non-CURB block simply never is sampled. The game below is a simple illustration of this point. 
\begin{table}[H]
    \centering
    \begin{tabular}{c|rc|rc|}
    \multicolumn{0}{c}{}&  \multicolumn{2}{c}{$\mathbf{1}$} & \multicolumn{2}{c}{$\mathbf{2}$} 
    \\
    \cline{2-5}
    $\mathbf{1}$ & $2,$ &$-100$ &  $-100,$ &$2$ 
    \\ 
    \cline{2-5} 
    $\mathbf{2}$ & $-100,$ &$2$ &  $2,$ &$-100$ 
    \\ 
    \cline{2-5} 
    $\mathbf{3}$ & $1,$ &$0$ &  $1,$ &$0$ 
    \\ 
    \cline{2-5} 
    \end{tabular}
\end{table}

If $k=1$, then only the best replies to pure strategies will ever be considered. If the process initially has support only on the block 
$\{\mathbf{1},\mathbf{2}\} \times \{\mathbf{1},\mathbf{2}\}$, the best reply to any sample will be inside that block, even though $\mathbf{3}$ is the best reply to most properly mixed strategies. We could call this smaller set of blocks that are closed under best replies to any 
strategies on the $k$-lattice k-CURB blocks. In most settings, a relatively small $k$ is enough for the k-CURB blocks to coincide with the CURB blocks. In the rest of the paper, we will speak of CURB blocks and by that mean $k$-CURB blocks. Alternatively, one can think of $k$ as sufficiently large so that the notions coincide. 

In what follows, we first prove that the RWS concentrates (in probability) on minimal CURB blocks for general two player games. Then we prove the concentration of RWS paths to an approximate mixed Nash equilibrium for games with $m_1 = m_2 = 2$ and a unique mixed Nash equilibrium. 

\subsubsection{Concentration on Minimal CURB Blocks}

To prove concentration of the RWS on minimal CURB blocks we will partially rely on results for the original finite memory learning process. The RWS dynamics introduces some difficulties that are not present in the original model, mainly that once a strategy has been played it never truly disappears from memory but always has a positive probability of being sampled. However, the probability of sampling that strategy decreases over time as long as the strategy is not played again. A notion well-suited for the RWS is therefore the neighbourhood $B_{\delta}(C)$, $\delta>0$, of $C := C_1 \times C_2 \subset S$, defined as all pairs $(p_1,p_2)$ in $\square\left( S\right)$ such that each of the components puts at least $1 - \delta$ probability on the block $C$.

\begin{definition}
\label{def:area_around_C}
For all $\delta>0$,
\begin{equation}
    B_\delta(C_1\times C_2) 
    := \left\{ p=(p_1,p_2) \in \square\left( S\right) \ |\ \sum_{s=1}^{m_i}p_{i,s}1_{C_i}(s) \geq 1- \delta,\ i=1,2 \right\}.
\end{equation}
\end{definition}

Let $\mathscr{C}$ denote the union of all minimal CURB blocks in the game. To prove the concentration result Theorem~\ref{thm:min_curb}, we show that expected time to go from $B_\delta(\mathscr{C})^C$ to $B_\delta(\mathscr{C})$ is always bounded, but the expected time spent inside $B_\delta(\mathscr{C})$ once entered goes to infinity as $\varepsilon$ goes to zero. This in turn will imply that as $\varepsilon$ goes to zero, the invariant distribution concentrates on a neighbourhood of $\mathscr{C}$, the union of all minimal CURB blocks.

\begin{theorem}
\label{thm:min_curb}
If $\beta \in (1-\max\{m_1,m_2\}^{-1},1)$, then $\mu^*_\varepsilon$ concentrates on $B_\delta(\mathscr{C})$, $\delta > 0$, as $\varepsilon $ goes to zero,
\begin{equation}
    \lim_{\varepsilon\rightarrow 0}\mu^\ast_\varepsilon\left(B_\delta(\mathscr{C})\right) 
    = 1,\quad \delta > 0.
\end{equation}
\end{theorem}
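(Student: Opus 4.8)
The plan is to prove the statement by a renewal-reward (ratio-of-occupation-times) argument built on the uniform ergodicity from Theorem~\ref{thm:convergence_for_us}. For each $\varepsilon\in(0,1]$ that theorem guarantees that $\mu^*_\varepsilon$ is the unique invariant law and that the mass it assigns to any set equals the long-run fraction of time the process spends there. Writing $A:=B_\delta(\mathscr{C})$ and partitioning a trajectory into alternating excursions inside and outside $A$, the regenerative structure gives
\[
    \mu^*_\varepsilon(A^C)
    = \frac{\mathbb{E}[T_{\mathrm{out}}]}{\mathbb{E}[T_{\mathrm{in}}] + \mathbb{E}[T_{\mathrm{out}}]}
    \leq \frac{\mathbb{E}[T_{\mathrm{out}}]}{\mathbb{E}[T_{\mathrm{in}}]},
\]
where $T_{\mathrm{in}}$ is the residence time in $A$ per visit and $T_{\mathrm{out}}$ the time spent in $A^C$ between consecutive visits. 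It then suffices to establish two facts: (i) $\mathbb{E}[T_{\mathrm{out}}]$ is bounded above, uniformly over starting states and over small $\varepsilon$; and (ii) $\mathbb{E}[T_{\mathrm{in}}]\to\infty$ as $\varepsilon\to 0$.

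For (i), the bounded entry time, I would import from the finite-memory best-reply theory the fact that best-reply/sampling dynamics are absorbed into minimal CURB blocks from any configuration: from any state there is a finite, bounded-length sequence of error-free best-reply plays, of conditional probability bounded below, after which the recent sampling weight on some minimal CURB block exceeds $1-\delta$. The point requiring care specific to RWS is that old plays never vanish---they only decay by the factor $\beta$ per period---so I must show that steering the recent window toward a block for enough steps drives the residual out-of-block mass below $\delta$; this holds because that residual mass is at most $\beta^{L}$ after $L$ in-block steps and $\beta^{L}<\delta$ for $L$ large. Chaining such positive-probability, bounded-length segments over the finitely many blocks yields a uniform bound on $\mathbb{E}[T_{\mathrm{out}}]$ via a standard geometric-trials estimate on hitting times.

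For (ii), the diverging residence time, the key is that inside a minimal CURB block $C$ the CURB property makes best replies self-reinforcing: when $p$ is deep in $B_\delta(C)$, a size-$k$ sample drawn from $p_{-i}$ lands entirely in $C_{-i}$ with probability at least $1-k\delta$, and then the ($k$-)CURB property forces $\widehat{BR}_i$ into $C_i$. Hence no error-free step can leave $A$; escaping requires pushing mass at least $\delta$ onto strategies outside the block, which, since each play shifts the state only by $1-\beta$ and past plays decay geometrically, demands a burst of at least $n_0=n_0(\delta,\beta)\geq 1$ out-of-block plays within a bounded window. Each such play runs against the best-reply flow and so occurs with probability $O(\varepsilon)$, making the burst probability $O(\varepsilon^{n_0})$; a standard rare-event lower bound on exit times then gives $\mathbb{E}[T_{\mathrm{in}}]\gtrsim \varepsilon^{-n_0}\to\infty$. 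Substituting (i) and (ii) into the displayed inequality yields $\mu^*_\varepsilon(A^C)\to 0$, i.e.\ $\mu^*_\varepsilon(B_\delta(\mathscr{C}))\to 1$.

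I expect the main obstacle to be part (i): making "absorption into minimal CURB blocks" rigorous for the continuous-state RWS. Unlike the finite-memory model, where old plays are deleted, here one must simultaneously invoke the combinatorial best-reply-to-CURB arguments and control the slowly decaying tail of past plays, so that the process genuinely crosses into $B_\delta(\mathscr{C})$ rather than merely approaching it, and with a bound uniform over the arbitrary starting configuration. I anticipate that the hypothesis $\beta\in(1-\max\{m_1,m_2\}^{-1},1)$ enters precisely here (as in the ergodicity proof), ensuring that the step size $1-\beta$ is compatible with steering the empirical distribution into the relevant best-reply regions.
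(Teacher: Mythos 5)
Your overall architecture --- an occupation-time ratio bound combined with (i) a uniformly bounded expected entry time into $B_\delta(\mathscr{C})$ and (ii) an expected residence time that diverges as $\varepsilon\to 0$ --- is exactly the paper's (Steps 1--3 of its proof), and your part (i), including the $\beta^L<\delta$ tail control and the bounded-length best-reply chain into a CURB block, matches the paper's Step 1. The genuine gap is in part (ii). You assert that once the state is deep in $B_\delta(C)$ ``no error-free step can leave $A$,'' so that every out-of-block play ``runs against the best-reply flow and so occurs with probability $O(\varepsilon)$.'' That is false: a state in $B_\delta(C)$ still carries up to $\delta$ of sampling mass outside $C$, so a player can, without any tremble, draw a sample containing out-of-block strategies, and the ($k$-)CURB property says nothing about the best reply to such a sample --- it may well lie outside $C$. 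At entry into $B_\delta(C)$ this error-free exit channel has probability up to roughly $1-(1-\delta)^{2k}\approx 2k\delta$ per period, which does not vanish as $\varepsilon\to 0$. Consequently your per-play estimate $O(\varepsilon)$, the burst probability $O(\varepsilon^{n_0})$, and the bound $\mathbb{E}[T_{\mathrm{in}}]\gtrsim\varepsilon^{-n_0}$ are all unjustified (and the last is false in general).

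What rescues (ii) --- and is the heart of the paper's Step 2 --- is that this non-tremble exit channel has geometrically decaying probability along in-block trajectories: as long as only in-block strategies are played, the out-of-block mass decays like $\beta^t\delta$, so the probability of never sampling out-of-block is $\prod_{t}(1-\beta^t\delta)^{2k}$, which converges to a positive limit $Q^*$ precisely because $\sum_t\beta^t\delta<\infty$; see \eqref{eq:Qvareps}--\eqref{eq:limit_q_star}. Hence with probability at least $Q^*>0$, uniformly in $\varepsilon$, the sampling channel never fires, and conditional on that event exits do require trembles, giving $\mathbb{E}[T_{\mathrm{in}}]\geq Q^*\cdot\Theta(1/\varepsilon)\to\infty$ (the paper makes this precise by optimizing $t^*(1-\varepsilon)^{2t^*}Q^*$ over $t^*$). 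Your argument as written skips exactly this convergent-product step, and without it the divergence of the expected residence time --- and therefore the conclusion $\mu^*_\varepsilon(B_\delta(\mathscr{C}))\to 1$ --- does not follow.
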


\subsubsection{Behavior Inside Minimal CURB}
\label{sec:insideCURB}

We saw in the previous section that the RWS spends almost all the time inside minimal CURB blocks in the small-error limit, possibly with rare excursions between different minimal CURB blocks. As discussed in the introduction, it is known that many learning processes hold this property, but we had to reprove it for the RWS because of the continuous state space formulation. Theorem~\ref{thm:concetration_on_Nash} below further refines the long-run behaviour of the RWS for some games, it shows that the RWS concentrates mass on a neighbourhood of mixed Nash equilibria inside minimal CURB blocks. This refinement of conventions, which is clearly visible in simulations like those in the introduction, is not typically found in other models for learning and the main motivation for introducing the RWS.

Consider the deterministic mean-value process $x$, 
\begin{equation}
\label{eq:mean_val_proc}
    \dot{x}_i(t) 
    = \mathbb{E}\left[\widetilde{BR}_i(x_{-i}(t))\right]-x_i(t), \quad x_i(0) = p_i(0).
\end{equation}
The process in \eqref{eq:mean_val_proc} is a deterministic process that can be thought of as a continuous-time evolution of the expected value of the RWS state process \eqref{eq:RWS-dynamics}. If $p(0)$ is in a minimal CURB block the process \eqref{eq:mean_val_proc} converges to either a stable point or a stable orbit with constant distance to a stable point for $\varepsilon$ small enough, as a consequence of Lemma \ref{lemma:gas}. Furthermore, over a given time horizon divided the probability that the RWS trajectory is arbitrarily close to the deterministic process \eqref{eq:mean_val_proc} goes to $1$ as $\beta$ goes to 1, as shown in Lemma \ref{lemma:bdd-time-intervals} (a result we borrow from stochastic approximation theory). Taken together, if the deterministic process behaves well in the minimal CURB blocks of a game, the RWS's concentration around stable points or stable orbits is controlled by $\beta$. 

The next theorem states that for a $2 \times 2$  minimal CURB block with a unique mixed Nash equilibrium the RWS concentrates around the stable point of \eqref{eq:mean_val_proc}, which turns out to be unique. Its proof, and the proofs of the Lemmas cited above, are found in the appendix.

\begin{theorem}
\label{thm:concetration_on_Nash}
Let $G$ be a $2\times 2$ normal form game with a unique completely mixed Nash equilibrium. If $\beta > 1/2$, then there exists a positive constant $K$ such that
\begin{equation}
    \mu^*_\varepsilon\left(x\in\square(S) : \|x - x^*\|_\infty \geq \eta \right) 
    = O\left(\exp\left(-\frac{K\eta^2}{1-\beta}\right)\right),\quad \varepsilon > 0,\ \eta > 0,
\end{equation}
where $x^*$ is the unique stationary point of the mean-value process \eqref{eq:mean_val_proc}.
\end{theorem}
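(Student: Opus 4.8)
The plan is to read the RWS recursion \eqref{eq:RWS-dynamics} as a constant-step stochastic approximation of the mean-value ODE \eqref{eq:mean_val_proc}, and to show that its invariant measure concentrates around the ODE's equilibrium $x^*$ with Gaussian fluctuations of width $\sqrt{1-\beta}$. Write $a:=1-\beta$ for the step size, set $F(x):=\mathbb{E}[\widetilde{BR}(x)]-x$ (the drift of \eqref{eq:mean_val_proc}), and put $e(t):=p(t)-x^*$. I would first decompose the increment as
\begin{equation}
    e(t+1)=e(t)+a\,F(p(t))+a\,M(t),\qquad M_i(t):=\overrightarrow{1_{i,s_i(t)}}-\mathbb{E}\!\left[\overrightarrow{1_{i,s_i(t)}}\mid p(t)\right].
\end{equation}
Here $M(t)$ is a bounded martingale difference, so the conditional increment variance is $O(a^2)$ and every jump satisfies $\|p(t+1)-p(t)\|\le a\cdot\mathrm{diam}\,\square(S)=O(a)$. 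These two facts — jumps scaling like $a$ and noise variance like $a^2$ against a mean-reversion that also scales like $a$ — are what will ultimately put $1/(1-\beta)$ in the exponent.

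Next I would build a Lyapunov function from the stability of \eqref{eq:mean_val_proc}. By Lemma~\ref{lemma:gas}, for a $2\times2$ game with a unique completely mixed Nash equilibrium the continuous-time process \eqref{eq:mean_val_proc} has $x^*$ as its unique, globally asymptotically stable equilibrium inside the block; moreover the linearisation $A:=DF(x^*)$ is a spiral sink whose symmetric part is negative definite. Crucially, the large rotational part of $A$ (whose size grows with the steepness of the best-response probabilities) cancels in $\langle e,Ae\rangle$, so the decay rate of a weighted quadratic $V(x):=(x-x^*)^\top Q(x-x^*)\asymp\|x-x^*\|^2$ is robust: one gets $\nabla V\cdot F\le -2c\,V$ with $c>0$ on the block. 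Transferring to the chain via a second-order Taylor expansion of $V$ along the increment yields the geometric drift
\begin{equation}
    \mathbb{E}\!\left[V(p(t+1))\mid p(t)\right]\le(1-2c\,a)\,V(p(t))+C\,a^2 ,
\end{equation}
where $-2caV$ comes from $aF$ and $Ca^2$ collects the Hessian remainder of the bounded drift together with the $O(a^2)$ martingale variance. This is the step that uses $\beta>1/2$: the hypothesis $a<1/2$ keeps the jumps small enough, relative to the curvature of $V$ and the rate $c$, that the first-order pull dominates the discretisation-plus-noise term and $1-2ca\in(0,1)$; the $2\times2$ unique-mixed-NE geometry is what guarantees $c>0$ in the first place.

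From here the moment bound is immediate: testing the drift against $\mu^*_\varepsilon$ and using $\mu^*_\varepsilon P=\mu^*_\varepsilon$ gives $\mathbb{E}_{\mu^*_\varepsilon}[V]\le Ca^2/(2ca)=O(a)$, hence $\mathbb{E}_{\mu^*_\varepsilon}\|x-x^*\|^2=O(1-\beta)$. This alone yields only polynomial (Markov) concentration, so the real work is the sub-Gaussian upgrade. I would combine the geometric drift with the bounded-increment property: since the one-step change of $V$ is bounded by $O(a)$ with conditional mean $\le -2caV+Ca^2$, a Bernstein/Hoeffding estimate gives $\mathbb{E}[e^{\lambda V(p(t+1))}\mid p(t)]\le\exp(\lambda(1-2ca)V(p(t))+\lambda a b+\lambda^2 C'a^2)$ for $\lambda$ up to order $1/a$; feeding this into the stationarity identity produces a finite exponential moment $\mathbb{E}_{\mu^*_\varepsilon}\exp(V/(C''a))<\infty$. (Equivalently, one may invoke a transportation / coarse-Ricci concentration theorem for the chain, whose contraction rate $\kappa\asymp a$ and jump scale $\asymp a$ give a variance proxy $\asymp a$.) Since $\|x-x^*\|_\infty\le\|x-x^*\|\le C\sqrt{V}$, either route delivers $\mu^*_\varepsilon(\|x-x^*\|_\infty\ge\eta)\le C\exp(-K\eta^2/a)=O(\exp(-K\eta^2/(1-\beta)))$, with the small-$\eta$ regime (where the bound exceeds one) absorbed into the implied constant.

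The main obstacle is the geometric drift with the correct constants held globally, not just near $x^*$. Near the equilibrium the danger is the discretisation overshoot — the Euler step can cross $x^*$ when $DF$ is large — and taming this is precisely the role of $\beta>1/2$; away from $x^*$ one must patch the estimate using the global asymptotic stability from Lemma~\ref{lemma:gas} (for instance by combining the local quadratic drift with the bounded-time tracking of Lemma~\ref{lemma:bdd-time-intervals}) so that $V$ is a genuine Lyapunov function on the whole block. The secondary difficulty is making the exponential-moment computation close with the sharp $1/a$ scaling rather than a loose $1/a^2$, which requires exploiting that both the jump size and the mean-reversion are of the same order $a$.
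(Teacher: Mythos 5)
Your route is genuinely different from the paper's. The paper proves Theorem~\ref{thm:concetration_on_Nash} by an ODE-tracking argument: the finite-horizon shadowing estimate of Lemma~\ref{lemma:bdd-time-intervals} is combined with global asymptotic stability of \eqref{eq:mean_val_proc} (Lemma~\ref{lemma:gas}) via a triangle inequality, and the bound follows by sending the horizon and then time to infinity. You instead work directly at stationarity with a Foster--Lyapunov geometric drift plus an exponential-moment (Hajek/Bernstein-type) upgrade. The second half of your plan is sound: given a drift inequality $\mathbb{E}[V(p(t+1))\mid p(t)]\le(1-2ca)V(p(t))+Ca^2$ together with the increment bound $|V(p(t+1))-V(p(t))|\le C\bigl(a\sqrt{V(p(t))}+a^2\bigr)$, Hoeffding's lemma with $\lambda\asymp 1/a$, stationarity of $\mu^*_\varepsilon$, and Jensen's inequality do give $\mathbb{E}_{\mu^*_\varepsilon}[e^{\kappa V/a}]=O(1)$ uniformly in $a=1-\beta$, hence the sub-Gaussian tail. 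Had it been completed, this would in fact sharpen the paper's own Appendix~\ref{sec:extension}, which derives exactly such a drift inequality (proof of Lemma~\ref{lem:variance}) but only extracts a second moment and a Chebyshev, i.e.\ polynomial, bound from it.

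The genuine gap is the drift inequality itself. Your claim that a constant weighted quadratic $V(x)=(x-x^*)^\top Q(x-x^*)$ satisfies $\nabla V\cdot F\le-2cV$ on the whole block is false in general. Write $u=x_1-x_1^*$, $v=x_2-x_2^*$, $h_a(v)=\rho_a(x_2^*+v)-x_1^*$ and $h_d(u)=\rho_d(x_1^*+u)-x_2^*$, with $\rho_a$ increasing and $\rho_d$ decreasing as in Lemma~\ref{lemma:unique_fp}; then $\tfrac12\nabla V\cdot F=q_1u\,h_a(v)+q_2v\,h_d(u)-q_1u^2-q_2v^2$, and the choice $q_1\rho_a'(x_2^*)=q_2|\rho_d'(x_1^*)|$ cancels the cross terms only to first order. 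For moderate or large $k$ the maps $h_a,h_d$ are sigmoid-like: steep near $0$, saturated at distance of order $1/\rho_a'(x_2^*)$. At a point where $v$ is of the order of the transition width while $u$ is of order one with $u<h_a(v)$ (say $u=0.2$, $h_a(v)=0.4$ in a symmetric game with $q_1=q_2$), one gets $\tfrac12\nabla V\cdot F\approx q_1u\,(h_a(v)-u)-O(v)>0$: the rotational part of the flow makes every constant-coefficient quadratic increase somewhere on the block, so no choice of $Q$ works (this is the familiar fact that Euclidean distance to a mixed equilibrium is not monotone along Matching-Pennies-type dynamics). Your proposed patch --- splicing a local drift with Lemma~\ref{lemma:bdd-time-intervals} --- is precisely the hard step, and carrying it out rigorously essentially reproduces the paper's actual proof. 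The clean repair inside your framework is to abandon the quadratic and use the paper's nonlinear Lyapunov function from the proof of Lemma~\ref{lem:variance}: $G=g_d+g_a$, with $g_d,g_a$ primitives of the recentred best-reply probabilities. For that $G$ the first-order cross terms cancel identically (they have conditional expectation zero), convexity gives the global drift $\mathbb{E}[G(t+1)\mid\mathcal{F}_t]\le\beta\,G(t)+M(1-\beta)^2$, and $G$ is comparable to $\|x-x^*\|_2^2$ (with $k$-dependent constants) by the paper's estimate \eqref{eq:g_estimate}; feeding this $G$ into your exponential-moment step closes the argument.
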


The stationary point of \eqref{eq:mean_val_proc} naturally depends on $k$. Under the assumptions in the theorem above, as $k \rightarrow \infty$ the equation $(\dot{x}_1(t), \dot{x}_2(t)) = (0,0)$ is satisfied only by the Nash equilibrium, call it $\hat x$, and we have that $\lim_{k\rightarrow \infty}x^* = \hat x$. So $x^*$ can be interpreted as a approximation of the Nash equilibrium.

The result of Theorem \ref{thm:concetration_on_Nash} can be extended to games of any size in the small-error limit $\varepsilon\rightarrow 0$, as long as the minimal CURB blocks that are either $1\times 1$ or $2 \times 2$ and satisfy the assumptions of Theorem \ref{thm:concetration_on_Nash}. We present a detailed argument in Appendix~\ref{sec:extension}.
 In generality, this can be extend to those types of minimal CURB blocks for which convergence of the deterministic process at a sufficient rate can be established.

\section{Conclusions and Outlook}
\label{sec:concl}

We have introduced a new learning process, the RWS, whose main feature is a procedure of sampling from historical interactions biased towards more recent events. We have shown that the RWS has several interesting properties. The invariant distribution of the RWS (which is a Markov process with continuous state space) concentrates on minimal CURB blocks in the small-error limit. So in the long run, the RWS will almost always be inside a minimal CURB block, perhaps with rare transitions between them. While the process is inside a minimal CURB block, the (deterministic) mean evolution of RWS will converge to either a stable point or a stable orbit, and the (stochastic) RWS state process will (with a high probability) not deviate far from it during any finite time horizon if the recency parameter $\beta$ is sufficiently close to 1. Combining these results we see that in the small-bias small-error double limit (as $\varepsilon$ and $\beta$ approach $0$ and $1$, respectively) the RWS almost always is in the neighbourhood of a stable point or a stable orbit inside a minimal CURB. Furthermore, since the sampling best reply function $\widetilde{BR}$ is continuous, we have that if the RWS state is close to some stable point, then so is play. 

For $2 \times 2$ minimal CURB blocks with a unique Nash equilibrium, we have shown that the deterministic mean process has a unique stable point which is close to the Nash equilibrium for most values of $k$ (in essence, values of $k$ that does not permit the players to be indifferent between strategies after sampling, cf. Lemma~\ref{eq:fp}). For games with minimal CURB blocks larger than $2 \times 2$, the picture is more complicated, and it is beyond the scope of this paper to completely map it out. However, for small to intermediate $k$ the RWS behaves well, at least numerically, when other learning dynamics does not. Consider the unstable rock paper scissors game, see Table~\ref{tab:RPS}, studied in e.g. \cite{Benaim2009}.

\begin{table}[H]
\centering
    \begin{tabular}{c|c|c| c|}
    \multicolumn{0}{c}{}& \multicolumn{0}{c}{R} & \multicolumn{0}{c}{P} & \multicolumn{0}{c}{S} 
    \\
    \cline{2-4}
    R & $0, 0$ &  $-3, 1$ & $1, -3$ 
    \\ 
    \cline{2-4} 
    P & $1, -3 $ & $0, 0$ & $-2, 1$ 
    \\ 
    \cline{2-4}
    S & $-3, 1$ & $1, -2$ & $0,0$ 
    \\ 
    \cline{2-4}
    \end{tabular}
\caption{The payoff in the Unstable Rock Paper Scissors game. The unique symmetric Nash equilibrium is  $\left( \frac{9}{32}, \frac{10}{32}, \frac{13}{32} \right)$.}
\label{tab:RPS}
\end{table}
Classical learning processes such as fictitious play or reinforcement learning circles the Nash equilibrium in a stable cycle. In Figure \ref{fig:unstable_RPS} we compare the performance of RWS with $k=20$ and fictitious play with recency. The RWS remains close to the equilibrium over time, even in this unstable game, while the fictitious play dynamic circles the equilibrium. When $k$ is larger the RWS behaves as fictitious play with recency. This is expected, as $k$ grows the sampled beliefs $(D_1, D_2)$, see \eqref{eq:samples}, become more and more similar to the sampling probabilities by the law of large numbers.
 \begin{figure}[H]
 \centering
     \begin{subfigure}[b]{0.49\textwidth}
     \centering
     \includegraphics[width=\textwidth]{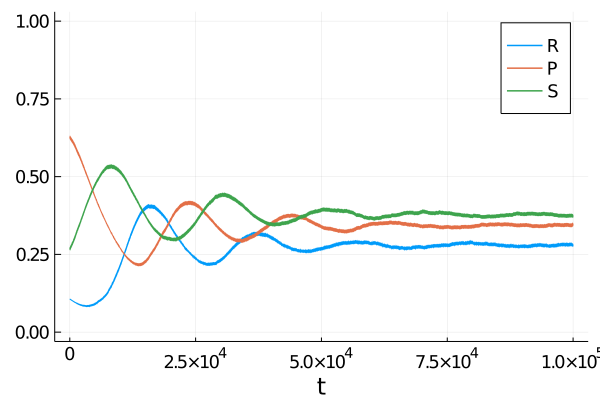}
     \caption{RWS with $\varepsilon = 0$ and $k=20$.}
     \end{subfigure}
     \hfill
     \begin{subfigure}[b]{0.49\textwidth}
     \centering
     \includegraphics[width=\textwidth]{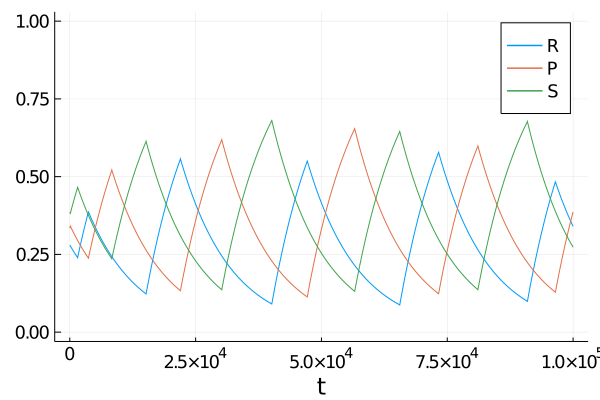}
     \caption{Fictitious play with recency.}
     \end{subfigure}
\caption{Simulations of behavior in the Unstable Rock Paper Scissors game. \textit{Left:} RWS with a low $k$-value and no noise. \textit{Right:} fictitious play with recency. The recency parameter was set to $\beta=0.9999$ in both simulations.}
\label{fig:unstable_RPS}
\end{figure}
We can numerically validate the bound from Theorem~\ref{thm:concetration_on_Nash} for the Unstable Rock Paper Scissors game. We compute the probability of $\|p(t)-x^*_{RPS}\|_\infty \leq \eta$ under $\mu^*_\varepsilon$, where $x^*_{RPS}$ is the Nash equilibrium (see Table~\ref{tab:RPS}), for values of $\beta$ in $[0.9,1]$ and for $\eta = 0.01$. The decay the probability with increasing $\beta$ is compared to the theoretical bound $o(\exp(-K\eta^2/(1-\beta))$ in Figure~\ref{fig:eta_sim}, and we see a clear agreement of the simulated probability and the theoretical bound.
\begin{figure}[H]
\centering
    \includegraphics[width=0.7\linewidth]{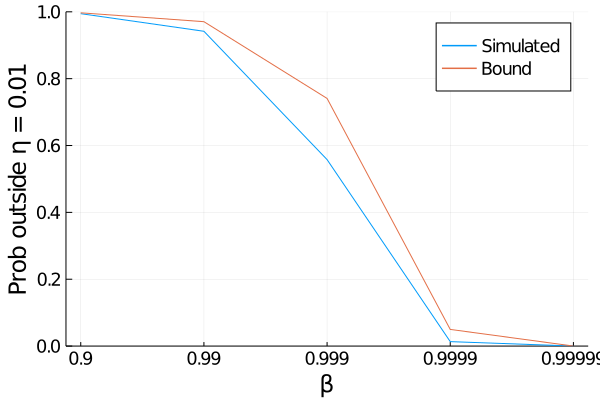}
    \caption{The estimated probability of the process $p$ being further away than $\eta=0.01$ in a given time period, and the bound with $K=3$. }
    \label{fig:eta_sim}
\end{figure}

This paper leaves some obvious questions about the RWS learning process open, which we hope to address in future work. For example, which minimal CURB configurations will have positive measure in the long run? We expect that this question can be approached using standard radius and co-radius arguments as in \cite{Ellison2000} or \cite{benaim2003deterministic}. On a final note, we expect that the results for games with minimal CURB blocks of size $1\times 1$ and $2\times 2$ generalize to games with minimal CURB blocks that are interior ESS, zero-sum games, potential games, and supermodular games, e.g., by an argument similar to that in \cite{Hofbauer2002}. There, the converge of a mean process (with close resemblence to the RWS mean process) to Nash equilibrium is proven using Lyapunov theory. Convergence rates are not derived explicitly. The rates are needed in the analysis of this paper to balance the time the process spends in a neighbourhood of Nash equilibrium with the escape probability from a minimal CURB block. It is however often possible to find convergence rates in Lyapunov stability analysis. Exploring this direction will be the topic of future work.

\appendix

\section{The Basic Properties of the Learning Process}
\label{sec:prof1}

\subsection{Exponential History}
\label{app:exponential_history}

Let us prove Proposition~\ref{exphist}. Starting from the definition, we have
\begin{equation}
    p_{-i,s}\left( t+1\right) 
    = \left( 1-\beta\right) \sum_{\tau=1}^{\infty}\beta^{\tau-1} 1_{s}(s_{-i} (t-\tau+1)).
\end{equation}
After index substitution $v=\tau-1$, splitting the term $v=0$ yields
\begin{equation}
    p_{-i,s}\left( t+1\right) 
    = \left( 1-\beta\right) \left( 1_s(s_{-i}(t)) + \sum_{v=1}^{\infty}\beta^{v} 1_s(s_{-i}(t-v)) \right).
\end{equation}
In other words,
\begin{equation}
    p_{-i,s}\left( t+1\right) 
    = \left( 1-\beta\right) \beta \sum_{v=1}^{\infty }\beta^{v-1}
    1_s(s_{-i}(t-v)) + \left( 1-\beta\right) 1_s(s_{-i}(t)).
\end{equation}
We recognize the first term as $p_{-i,s}\left( t\right) $, so we are left for every $s \in S_{-i}$ with
\begin{equation}
    p_{-i,s}\left( t+1\right) 
    = \beta p_{-i,s}\left( t\right) +\left(1-\beta\right) 1_s(s_{-i}(t)),
\end{equation}
which is the representation we seek.

\subsection{Lipschitz Continuity}

\begin{lemma}
\label{lemma:lipschitz}
For all $k\in \mathbb{N}$, $i\in\{1,2\}$, and $a \in \{1,\dots, m_i\}$,
\begin{equation}
    \Delta(S_{-i}) \ni p \to \mathbb{P}\left(\widetilde{BR}_i(p) = a\right)
\end{equation}
is Lipschitz continuous with Lipschitz coefficient at most $(1-\varepsilon)km_{-i}$.
\end{lemma}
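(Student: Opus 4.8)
The plan is to start from the decomposition \eqref{eq:the_prob_that_is_Lip},
\[
    \mathbb{P}\left(\widetilde{BR}_i(p)=a\right) = (1-\varepsilon)\,\mathbb{P}\left(\widehat{BR}_i(D_{-i})=a\right) + \varepsilon/m_i ,
\]
where $D_{-i}$ is the empirical belief \eqref{eq:samples} built from a multinomial sample with parameters $k$ and $p$. The additive constant $\varepsilon/m_i$ does not affect Lipschitz continuity and the prefactor $(1-\varepsilon)$ only rescales the constant, so it suffices to show that $p \mapsto g(p):=\mathbb{P}(\widehat{BR}_i(D_{-i})=a)$ is Lipschitz on $\Delta(S_{-i})$ with constant at most $k\,m_{-i}$; multiplying by $(1-\varepsilon)$ then gives the claimed coefficient.

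The key observation is that all of the $p$-dependence of $g$ sits in the multinomial weights. For a lattice point $n=(n_1,\dots,n_{m_{-i}})$ of nonnegative integers with $\sum_s n_s = k$, the empirical belief $n/k\in\Delta(S_{-i})$ is fixed, and the probability that the tie-break $\widehat{BR}_i$ returns $a$, namely $w_a(n):=|BR_i(n/k)|^{-1}\,1_{\{a\in BR_i(n/k)\}}\in[0,1]$, depends only on $n$, not on $p$. Writing $M_k(n;p)$ for the multinomial mass function, I would record that $g(p)=\sum_n w_a(n)\,M_k(n;p)$, i.e.\ $g$ is a weighted sum, with weights in $[0,1]$, of the multinomial probabilities.

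From here the estimate goes through the total variation distance between multinomials. For $p,q\in\Delta(S_{-i})$, the triangle inequality together with $0\le w_a(n)\le 1$ gives $|g(p)-g(q)|\le \sum_n|M_k(n;p)-M_k(n;q)| = 2\,d_{\mathrm{TV}}(\mathrm{Mult}(k,p),\mathrm{Mult}(k,q))$. Since a multinomial sample is the symmetric pushforward of $k$ independent categorical draws, and total variation neither increases under a pushforward nor exceeds the sum of the coordinatewise distances for product measures, $d_{\mathrm{TV}}(\mathrm{Mult}(k,p),\mathrm{Mult}(k,q))\le k\,d_{\mathrm{TV}}(\mathrm{cat}(p),\mathrm{cat}(q)) = \tfrac{k}{2}\|p-q\|_1$. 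Combining the two bounds and then using $|p_s-q_s|\le\|p-q\|$ for each of the $m_{-i}$ coordinates, I get $|g(p)-g(q)|\le k\|p-q\|_1\le k\,m_{-i}\|p-q\|$, and multiplying by $(1-\varepsilon)$ yields the lemma.

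I expect the only genuinely delicate point to be the reduction in the second paragraph: one must argue that, after conditioning on the realized sample counts $n$, the uniform randomization $\widehat{BR}_i$ introduces no further dependence on $p$, so that the tie-breaking probabilities $w_a(n)$ can be frozen and pulled out of the $p$-dependent average. The treatment of best-reply ties, where $BR_i(n/k)$ fails to be a singleton, is then automatic, since $w_a(n)$ is well defined and lies in $[0,1]$ at every lattice point; no measure-zero caveat is needed because the sample support is the discrete $k$-lattice. The remaining steps are standard properties of the total variation distance.
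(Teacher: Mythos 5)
Your proof is correct, and it reaches the stated constant $(1-\varepsilon)km_{-i}$ by a genuinely different route than the paper. Both arguments begin with the same reduction: conditioning on the realized sample counts $n$ freezes the tie-breaking weights $w_a(n)=\mathbb{P}\bigl(\widehat{BR}_i(n/k)=a\bigr)\in[0,1]$, so that all $p$-dependence sits in the multinomial mass function --- this is exactly the paper's representation of $\mathbb{P}(\widetilde{BR}_i(p)=a)$ as a weighted sum over the simplex grid $\Delta^{(-i,k)}_a$, and the point you flag as delicate is indeed harmless, since the uniform randomization in $\widehat{BR}_i$ is by construction independent of the sample and of $p$. Where you diverge is the key estimate. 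The paper proceeds by calculus: it differentiates the multinomial pmf in each coordinate $p_b$, sums the absolute partial derivatives, and recognizes that sum as the multinomial formula for $k-1$ draws, yielding the gradient-type bound $km_{-i}$ (a Lipschitz constant relative to $\|\cdot\|_\infty$). You instead argue probabilistically: weights in $[0,1]$ reduce the difference $|g(p)-g(q)|$ to at most twice the total variation distance between $\mathrm{Mult}(k,p)$ and $\mathrm{Mult}(k,q)$, which by the data-processing inequality and subadditivity of total variation over product measures is at most $\tfrac{k}{2}\|p-q\|_1$. Your route avoids differentiation entirely and gives the intermediate bound $|g(p)-g(q)|\le k\|p-q\|_1$, which is in fact slightly sharper than the paper's, since $\|p-q\|_1\le m_{-i}\|p-q\|_\infty$; it could be improved by another factor of $2$ by using the one-sided total variation bound for $[0,1]$-valued weights in place of the crude triangle inequality. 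What the paper's computation buys in exchange is self-containedness: it needs no external facts about total variation, only the multinomial theorem.
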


\begin{proof}
At the beginning there is a sample with respect to probabilities $p$, yielding a random vector $N:=\left(n_{-i,1}\left( t\right) ,\ldots n_{-i,m_{-i}}\left( t\right) \right) $ of integers from the (discrete) probability distribution%
\begin{equation}
    \mathbb{P}\left( N=\left( n_{1},\ldots n_{m_{-i}}\right) \right)
    = k!\prod\limits_{j=1}^{m_{-i}}\frac{p_{s}^{n_{s}}}{n_{s}!}.
\end{equation}
Each $N$ will lead to an empirical opposing strategy profile $D$, that must belong to some finite 'simplex grid'
\begin{equation}
    \Delta ^{\left(-i, k\right) } 
    := \left\{ \frac{1}{k}\sum_{s \in S_{-i}}n_{s} \overrightarrow{1_{-i,s}}\ ;\ n_{s}\in \mathbb{N}_0,\, \sum_{s \in S_{-i}}n_{s}=k\right\}.
\end{equation}
Now let us form $m_{i}$ subsets from $\Delta ^{\left(-i, k\right) }$ (which is finite), named $\Delta _{s}^{\left(-i,  k\right) }$ for $s \in S_i$, where $x\in \Delta _{s}^{\left(-i, k\right) }$ whenever $s\in BR_i(x)$. Note that $(\Delta^{(-i,k)}_s)_s$ is not a disjoint cover of $\Delta^{(-i,k)}$ except in the special case when each $x \in \Delta^{(-i,k)}$ has a unique best response. Also, $\cup_s \Delta^{-i,k}_s = \Delta^{-i,k}$ since the best response set is never empty.

For $a\leq m_{i}$, the probability that $\widetilde{BR}_{i} \left( p\right) = a$ is going to be played is thus obtained as follows :
\begin{itemize}
    \item If the player $i$ trembles, which happens a fraction $\varepsilon $ of the time, strategy $a$ is played with a probability $1/m_i$, totalling $\varepsilon /m_i$.
    \item Otherwise the player selects its best response, so it will be $a$ with the probability $\mathbb{P}\left( D\in \Delta _{a}^{\left(-i, k\right) },\, \widehat{BR}_{i}(D) = a \right) $.
\end{itemize}
In short,
\begin{equation}
\label{eq:in_short_eq}
    \mathbb{P}\left( \widetilde{BR}_{i}\left( p\right) = a\right) 
    = \varepsilon r_{a}+\left( 1-\varepsilon \right) \sum_{x\in \Delta _{a}^{\left(-i, k\right) }}\mathbb{P}\left( \widehat{BR}_{i}(x) = a\right)\mathbb{P}\left( D=x\right).
\end{equation}
However $D=x$ is an event of the shape $N=\left( n_{1},\ldots n_{m_{-i}}\right) $, so considering $\mathbb{P}\left( D=x\right) $ as a
function of $p_{1},\ldots p_{n}$, we get 
\begin{equation}
    \frac{\partial \mathbb{P}\left( N=\left( n_{1},\ldots n_{m_{-i}}\right) \right) }{\mathbb{\partial }p_{b}}
    = k!\frac{p_{b}^{n_{b}-1}}{\left( n_{b}-1\right) !}\prod\limits_{j\neq b}\frac{p_{j}^{n_{j}}}{n_{j}!},
\end{equation}
with the convention $1/\left( -1\right) !=0$ for continuity. So relatively to the norm $\left\Vert .\right\Vert _{\infty }$ over $\Delta \left( S_{-i}\right) $, the Lipschitz constant of the probabilities $\mathbb{P} \left( D\in \Delta _{a}^{\left(-i, k\right) }\right) $ are at most
\begin{equation}
    \sum_{b=1}^{m_{-i}}\left\vert \frac{\partial \mathbb{P}\left( D\in \Delta_{a}^{\left(-i, k\right) }\right) }{\mathbb{\partial }p_{b}}\right\vert 
    \leq \sum_{b=1}^{m_{-i}}\sum_{x\in \Delta _{a}^{\left( -i,k\right) }}k!\frac{p_{b}^{n_{b}-1}}{\left( n_{b}-1\right) !}\prod\limits_{\substack{j=1 \\ j\neq b}}^{m_{-i}}\frac{ p_{j}^{n_{j}}}{n_{j}!}.
\end{equation}
However we know that
\begin{equation}
    \sum_{x\in \Delta^{\left(-i,k\right)}}\frac{p_{b}^{n_{b}-1}}{\left( n_{b}-1\right) !}\prod\limits_{\substack{j=1 \\ j\neq b}}^{m_{-i}}\frac{p_{j}^{n_{j}}}{n_{j}!}
    = \frac{1}{\left( k-1\right) !},
\end{equation}
as this is the multinomial formula for $k-1$ draws. Since $\Delta
_{a}^{\left(-i, k\right) }\subset \Delta ^{\left( -i,k\right) }$, the Lipschitz constant of $\mathbb{P}\left( D\in \Delta _{a}^{\left(-i, k\right) }\right) $ is at most
\begin{equation}
    \sum_{b=1}^{m_{-i}}k!\frac{1}{\left( k-1\right) !}
    = km_{-i}.
\end{equation}
Bounding $\mathbb{P}\left( \widehat{BR}_i(x) = a \right)$ from~\eqref{eq:in_short_eq} by 1, the Lipschitz constant for \begin{equation}
    p \mapsto \mathbb{P}\left( \widetilde{BR}_{i}\left( p\right) = a\right)
\end{equation}
is at most $\left( 1-\varepsilon \right) km_{-i}$.
\end{proof}

\subsection{Proof of Theorem~\ref{thm:convergence_for_us}}
\label{sec:dinetan}

The proof relies on arguments from functional analysis, therefore we introduce here some function spaces. Let $\mathcal{C} := C(\square(S); \mathbb{R})$, the set of real-valued continuous functions on $\square(S)$, endowed with $\| \cdot \|_\infty$ the uniform norm. Let $\mathcal{L}\subset \mathcal{C}$ be the subspace of Lipschitz continuous functions over $\square(S)$, endowed with the seminorm $\| \cdot \|_{\mathcal{L}}$ 
\begin{equation}
    \| f\|_\mathcal{L} 
    := \sup_{x,x'\in \square(S): x\neq x'}\left(|f(x) - f(x')|/|x-x'|\right).
\end{equation}
The norm $\| f \|_{\mathcal{Z}} := \max (\|f\|_\infty, \|f\|_{\mathcal{L}})$ makes $\mathcal{L}$ a Banach space. The unit ball in $(\mathcal{L},\|\cdot\|_{\mathcal{Z}})$ is compact. 

The RWS transition kernel $P$ can be interpreted as an operator from $\mathcal{C}$ to $\mathcal{C}$
\begin{equation}
\label{eq:def_of_operator}
    \begin{aligned}
    P: \mathcal{C} &\rightarrow \mathcal{C}
    \\
    f\left( x \right) 
    &\mapsto \sum_{s_{1}=1}^{m_{1}}\sum_{s_{2}=1}^{m_{2}}
    \mathbb{P}\left( \sigma(x,(s_1,s_2))\right) f\left(\Gamma(x,(s_1,s_2)) \right).
    \end{aligned}
\end{equation}
where we define for $x = (x_1,x_2) \in \square(S)$ and $s = (s_1,s_2) \in S$
\begin{equation}
    \begin{aligned}
    \sigma(x,s) 
    &:= \left( \widetilde{BR}_{1}\left( x_{2}\right) =s_{1},\widetilde{BR}_{2}\left( x_{1}\right) =s_{2}\right)
    \\
    \Gamma(x,s) 
    &:= \left(\beta x_{1}+\left( 1-\beta\right) \overrightarrow{
    1_{1,s_{1}}},\beta x_{2}+\left( 1-\beta\right) \overrightarrow{
    1_{2,s_{2}}} \right).
    \end{aligned}
\end{equation}
As an operator, $P$ is linear and continuous. However, it is not a compact operator (as $P$ maps continuous functions to a linear combination of continuous functions, the image of the unit ball in $\mathcal{C}$ under $P$ is not a compact subset of $\mathcal{C}$). The lack of compactness prevents us from using for example the Krein-Rutman theorem to determine ergodicity. A more suitable set to use for the study of convergence is instead the set of Lipschitz continuous functions over $\square(S)$. 

The core idea of this appendix is to show that for a Lipschitz-continuous function $f$, our operator squeezes $f$ in the sense that $\| P^nf\|$ tends to a constant function as $n$ tends to infinity. We take the following path to a proof of Theorem~\ref{thm:convergence_for_us}: we get open set accessibility of $P$ (viewed as a Markov kernel) in \ref{subsubsec:lower_bound} by first deriving a useful lower bound in \ref{ref:infra}. The squeezing property of $P$ is derived in \ref{sec:lipschitz-friendly} and used in \ref{subsubsec:geometric} to show that repeated application of $P$ eventually results in a constant function, at a geometric rate. Finally, the previous steps are combined to conclude uniform ergodicity in \ref{subsubsec:ergodicity}.\footnote{Some of the arguments are inspired by an unpublished manuscript written by Lee Dinetan, which studies Krein-Rutman like theorems for non-compact operators with the squeezing property described above.}

\subsubsection{Approximation of History with Truncated Paths}
\label{ref:infra}

For $i\in \left\{ 1,2\right\} $, $j\in\{1,\dots,m_{i}\}$, and $t\in \mathbb{N}$, let $\omega _{i,j,t} : =1_j\left( s_{i}\left( t\right)\right) $ be the indicator of a play $j$ by player $i$ at time $t$, so that 
\begin{equation}
    p_{i,j}\left( t\right) 
    = \left( 1-\beta\right) \sum_{\tau=1}^{\infty }\beta^{\tau-1}\omega _{i,j,t-\tau}.
\end{equation}
We will call $\Sigma ^{\left( i\right) }:=\left\{ 0,1\right\}
^{m_{i}\times \mathbb{N}}$ the set of binary arrays, indexed by $s\in\{1,\dots, m_{i}\}$ and $t\in \mathbb{N}$, such that for every $t$ there is exactly one $s$ such that $\Sigma _{s,t}^{\left( i\right) }=1$. In other words, $\Sigma ^{\left( i\right) }$ represents a possible history for player $i$, where $1$ at the entry $\left( s,t\right) $ indicates that $s$ was played at time $t$. Likewise, for $n\in \mathbb{N}$, we will call $\Sigma ^{\left(i,N\right) }:=\left\{ 0,1\right\} ^{m_{i}\times N}$ the set of binary arrays indexed by $s\in\{1,\dots, m_{i}\}$ and $t\in \left\{ 1,\ldots N\right\} $
obeying the same condition, in other words the history up to time $N$.

Let $p_{i}\in \Delta \left( S_{i}\right) $. We are going to exhibit a sequence of plays of finite length $N$, i.e., an $\omega \in \Sigma ^{\left(i,N\right) }$ for some $N\in \mathbb{N}$, such that the partial sum
\begin{equation*}
    p_{i,j}^{\left( N\right) }
    := \left( 1-\beta \right) \sum_{\tau=1}^{N}\beta^{t-1}\omega_{i,j,\tau}
\end{equation*}
falls close to $p_{i}$. Namely, we want to prove the following.

\begin{lemma}
\label{apphis}
Let $p_{i}\in \Delta \left( S_{i}\right) $ and $\delta >0$. We assume
that $\left( 1-\beta\right) m_{i}\leq 1$. There exists an $N(\delta)\in
\mathbb{N}$, independent of $i$ and $p_i$, such that there is a history $\omega_i^{(N)} \in \Sigma^{(i,N)}$ for each $N\geq N(\delta)$ which satisfies
\begin{equation}
\label{eq:lemma_13_eq}
    p_{i,j}^{\left( N\right) } 
    = \left( 1-\beta\right)\sum_{\tau=1}^{N}\beta^{\tau-1}\omega^{(N)}_{i,j,\tau} \in \left(\max\{p_{i,j}-\delta,0\} ,p_{i,j}\right]
\end{equation}
for all $j\in\{1,\dots,m_i\}$.
\end{lemma}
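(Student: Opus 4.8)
The plan is to read the target identity as a packing problem. The geometric weights $w_\tau := (1-\beta)\beta^{\tau-1}$, $\tau = 1,\dots,N$, are masses that must each be assigned \emph{in full} to exactly one coordinate $j$ (since at each time period exactly one strategy is played), and the goal is to make the total mass landing in coordinate $j$ sit in $(\max\{p_{i,j}-\delta,0\}, p_{i,j}]$. My first move is a reduction. Because $\sum_{\tau=1}^N w_\tau = 1-\beta^N$ while $\sum_j p_{i,j} = 1$, if I can assign all the weights without ever exceeding any target $p_{i,j}$ (no overshoot), then the coordinatewise deficits $r_j := p_{i,j} - p_{i,j}^{(N)}$ are all nonnegative and sum to exactly $\beta^N$; hence each individual deficit is at most $\beta^N$, and choosing $N(\delta)$ so that $\beta^N < \delta$ for all $N\geq N(\delta)$ forces $p_{i,j}^{(N)} \in [p_{i,j}-\beta^N, p_{i,j}] \subset (p_{i,j}-\delta, p_{i,j}]$. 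Crucially this $N(\delta)$ depends only on $\beta$ and $\delta$, not on $i$ or $p_i$, as the statement demands.

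Everything therefore reduces to exhibiting an overshoot-free assignment, which I would do greedily. I process the weights in decreasing order $\tau = 1,2,\dots,N$, maintain the remaining capacities $r_j = p_{i,j} - (\text{mass already placed in }j)$, and at each step place $w_\tau$ into a coordinate of currently maximal remaining capacity. The key invariant, proved by induction on $\tau$, is that just before step $\tau$ all $r_j \geq 0$ and $\sum_j r_j = \beta^{\tau-1}$, the total mass still to be placed. Given the invariant, the largest capacity satisfies $\max_j r_j \geq \tfrac{1}{m_i}\sum_j r_j = \tfrac{\beta^{\tau-1}}{m_i} \geq (1-\beta)\beta^{\tau-1} = w_\tau$, where the middle step is ``max $\geq$ average'' and the final inequality is exactly the hypothesis $(1-\beta)m_i \leq 1$, i.e. $\tfrac{1}{m_i} \geq 1-\beta$. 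Thus the chosen coordinate can absorb $w_\tau$ without overshoot, the invariant is preserved (the total drops by precisely $w_\tau$, so $\sum_j r_j = \beta^\tau$), and the induction runs all the way to $\tau = N$.

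The main obstacle, and the reason the hypothesis $(1-\beta)m_i \leq 1$ appears, is precisely this ``largest gap dominates the next weight'' step: without it the greedy could get stuck with a weight too large for every remaining slot, making overshoot unavoidable. The one remaining delicate point is the strictness at the left endpoint: the interval is open at $\max\{p_{i,j}-\delta,0\}$, so for a coordinate with $0 < p_{i,j} \leq \beta^N$, which the greedy may leave empty with $p_{i,j}^{(N)} = 0$, the bound $[p_{i,j}-\beta^N, p_{i,j}]$ does not by itself suffice. I would patch this by reserving, for each such small-but-positive coordinate, one sufficiently late (hence arbitrarily small, since $w_\tau \to 0$) weight $w_\tau \leq p_{i,j}$ to seed it with strictly positive mass before running the greedy on the remaining weights, enlarging $N(\delta)$ if needed to guarantee such weights exist; coordinates with $p_{i,j}=0$ receive no mass and are read off directly. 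Assembling these pieces produces the desired history $\omega_i^{(N)}$ for every $N \geq N(\delta)$.
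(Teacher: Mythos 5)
Your core argument is exactly the paper's proof. The paper runs the same greedy scheme (at each step, append the pure strategy whose gap $p_{i,j}-p^{(t-1)}_{i,j}$ is maximal), proves the same two-part invariant (no overshoot, together with $\sum_j p^{(t)}_{i,j}=1-\beta^t$, i.e.\ your statement that the deficits stay nonnegative and sum to $\beta^{t}$), and closes the induction with the same ``max $\geq$ average'' step, using $(1-\beta)m_i\leq 1$ precisely as you do to show the next weight $(1-\beta)\beta^{t}$ fits into the largest remaining gap; the conclusion then follows because the deficits are nonnegative and sum to $\beta^N<\delta$. Up to this point your proposal is correct and matches the paper essentially line by line.

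The divergence is your final ``patch'' for the open left endpoint, and that patch does not work. For a \emph{fixed} $N$ the smallest available weight is $w_N=(1-\beta)\beta^{N-1}$, so a coordinate with $0<p_{i,j}<(1-\beta)\beta^{N-1}$ cannot be seeded: any single weight placed on it already overshoots $p_{i,j}$. Enlarging $N(\delta)$ until small enough weights exist makes the threshold depend on $p_i$, which the lemma explicitly forbids, and would in any case leave the intermediate values of $N$ unhandled. In fact no patch can succeed, because the statement with the strict left endpoint is false: given any candidate threshold $N(\delta)$ independent of $p_i$, take $N=N(\delta)$ and a $p_i$ with $0<p_{i,2}<(1-\beta)\beta^{N-1}$; every weight $w_1,\dots,w_N$ then exceeds $p_{i,2}$, so any history of length $N$ gives coordinate $2$ either zero mass or mass greater than $p_{i,2}$, and membership in $(0,p_{i,2}]$ is impossible (for $p_{i,j}=0$ the target interval $(0,0]$ is even empty outright). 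This is a defect of the lemma as stated rather than of your greedy argument: the paper's own proof silently ignores the issue and really establishes only the closed-interval version $p^{(N)}_{i,j}\in[\max\{p_{i,j}-\delta,0\},\,p_{i,j}]$, which is all the downstream open-set-accessibility argument needs (it only uses $|p_{i,j}(N)-x_{i,j}|<\varepsilon$, recoverable by applying the lemma with $\delta=\varepsilon/2$). So the right move is to drop the seeding step and prove the closed-endpoint version, at which point your proof coincides with the paper's and is complete.
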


\begin{proof}
The following algorithm provides a proof of Lemma~\ref{apphis}. 
Start by setting $p_{i,j}^{\left( 0\right) } =  0$ for all $j=1,\dots, m_i$, and $\omega^{(0)}_{i}$ to the empty array of dimensions $0$ and $m_{i}$. Define $N(\delta)$ as the smallest $N\in \mathbb{N}$ such that $\beta^{N}< \delta $, i.e., 
\begin{equation}
    N(\delta) := \inf\{N\in\mathbb{N} : \beta^N < \delta \}.
\end{equation}
For $t\in \left\{ 1,\ldots N(\delta)\right\} $, repeat the following steps:
\begin{enumerate}
    \item Look for the indices $j \in \{1,\dots, m_i\}$ such that $p_{i,j}-p_{i,j}^{\left(t-1\right) }$ is maximal, and call any of these indices $a$.
    \item Append $\overrightarrow{1_{1,a}}$ to $\omega_{i}^{(t)}$. Now $\omega_{i,a,t}^{(t)}=1$ and $\omega^{(t)}_{i,j,t}=0$ for $j\neq a$.
    \item Compute $p_{i,j}^{\left( t\right) }$ accordingly to~\eqref{eq:lemma_13_eq} and the updated history $\omega^{(t)}_{i}$.
\end{enumerate}
Return the final history $\omega^{(N(\delta))}_{i}$ and values $p_{i,j}^{\left(N(\delta)\right) }$.

We are going to prove inductively that for every $t\in \mathbb{N}$, we
always have 
\begin{equation}
\label{eq:induction_ineq}
    p_{i,j}^{\left( t\right) }\leq p_{i,j},\quad j=1,\dots,m
\end{equation} 
and
\begin{equation}
\label{eq:induction_eq}
    \sum_{j=1}^{m_{i}}p_{i,j}^{\left( t\right) } = 1-\beta^{t}.
\end{equation}
For $t=0$, \eqref{eq:induction_ineq} is true since $p_{i,j}$ is non-negative and $p_{i,j}^{\left( 0\right) }=0$ for all $j=1,\dots,m_i$, which also yields that \eqref{eq:induction_eq} holds at $t=0$. Now assume that~\eqref{eq:induction_ineq}--\eqref{eq:induction_eq} hold at time $t$. Since $\sum_{j=1}^{m_i} p_{i,j} = 1$, the maximal difference $\max_{1\leq j\leq m_i}(p_{i,j}-p_{i,j}^{\left(t\right)})$ must be at least $\beta^{t}/m_{i}$. By definition, then $\omega_{i,a,t+1}=1$ for some $a\in\{2,\dots, m_i\}$ and
\begin{equation*}
    \begin{aligned}
    p_{i,a}^{\left( t+1\right) }
    &= (1-\beta)\sum_{\tau=1}^{t+1}\beta^{\tau-1}\omega^{(t+1)}_{i,a,\tau}
    \\
    &= (1-\beta)\beta^t\omega^{(t+1)}_{i,a,t+1} + (1-\beta)\sum_{\tau=1}^t \beta^{\tau-1}\omega^{(t)}_{i,a,\tau}
    \\
    &= \left( 1-\beta \right)\beta^t + p_{i,a}^{\left( t\right) }
    \\
    &\leq \left((1-\beta)-\frac{1}{m_i}\right)\beta^t + p_{i,a}
    \end{aligned}
\end{equation*}
Therefore, since $\left( 1-\beta\right) m_{i}\leq 1$ as assumed, the
right-hand side is also bounded by $p_{i,j}$. As for other strategies $j\neq a$, since $p_{i,j}^{\left( t+1\right) }=p_{i,j}^{\left( t\right) }$ the inequality $p_{i,j}^{\left( t+1\right) }\leq p_{i,j}$ holds and we have proven the induction step for \eqref{eq:induction_ineq}. Now we also know that
\begin{equation*}
    p_{i,j}^{\left( t+1\right) }-p_{i,j}^{\left( t\right) }
    = \left( 1-\beta \right) \beta^{t}\omega^{(t+1)}_{i,j,t+1},
\end{equation*}
and since exactly one among the $m_{i}$ entries in $\omega^{(t+1)}_{i,t+1}$ is $1$, the other being zero, we have
\begin{equation*}
    \sum_{j=1}^{m_{i}}\left( p_{i,j}^{\left( t+1\right) }-p_{i,j}^{\left( t\right) }\right) 
    = \left( 1-\beta\right) \beta^{t}.
\end{equation*}
The induction hypothesis thus leads us to%
\begin{equation*}
    \sum_{j=1}^{m_{i}}p_{i,j}^{\left( t+1\right) }
    = 1-\beta^{t}+\left(1-\beta\right) \beta^{t}=1-\beta^{t+1},
\end{equation*}
which proves \eqref{eq:induction_eq} by induction. So in particular after time $N(\delta)$, by choice of $N(\delta)$, for every $ N\geq N(\delta)$ we have
\begin{equation*}
    \sum_{j=1}^{m_{i}}p_{i,j}^{\left( N\right) } > 1-\delta,
\end{equation*}
while $p_{i,j}^{\left( N\right) }\leq p_{i,j}$ for every $j$. Since $\sum_j p_{i,j} = 1$, this is possible only if $p_{i,j}^{(N)}>p_{i,j}-\delta$ for each $j =1,\dots, m_i$, leading to the result.
\end{proof}

\subsubsection{Open Set Accessibility}
\label{subsubsec:lower_bound}

Let $x=\left( x_{1},x_{2}\right)\in \square (S)$. We apply Lemma~\ref{apphis} with $\delta = \varepsilon$ to $x_{1}$ and $x_{2}$,
yielding play records $\omega^{(N(\varepsilon))}_{1}$ and $\omega^{(N(\varepsilon))}_2$, and values $p_{i,j}^{\left( N(\varepsilon)\right) }$ such that for every $1\leq j \leq m_{i}$ and $N\geq N(\varepsilon)$,
\begin{equation}
    p_{i,j}^{\left( N\right) }\in \left( \max\{x_{i,j}-\varepsilon,0\} , x_{i,j}\right].
\end{equation}
We know that for any $B\in \mathcal{B}(\square(S))$,
\begin{equation}
    P\left( x, B\right)
    = \sum_{s_{1}=1}^{m_{1}}\sum_{s_{2}=1}^{m_{2}}\sigma \left( x,s\right) 1_B\left( \Gamma \left( x,s\right) \right),
\end{equation}
and $\sigma(x,s)$ is uniformly bounded from below by $\eta = \varepsilon/(m_1m_2)> 0$.

The play records $\omega^{(N)}_{1}$ and $\omega^{(N)}_{2}$ up to time $N$ from the previous Lemma are now read in reverse time order. At each time step $t\in \left\{ 0,\ldots N-1\right\} $, there is a probability at least $\eta^{2}$ that player $1$ chooses the strategy $1\leq a\leq m_{1}$ given by $\omega^{(N)} _{1,a,N-t}=1$, and player $2$ chooses the strategy $1\leq b\leq m_{2}$ given by $\omega^{(N)}_{2,b,N-t}=1$. Therefore the plays up to time $N$ have a probability at least $\eta ^{2N}>0$ of being dictated by $\omega^{(N)}_{1}$ and $\omega^{(N)}_{2}$. When this happens, thanks to the Proposition~\ref{exphist}, a history having started by $\left( p_{1}\left( 0\right) ,p_{2}\left( 0\right) \right) =p$ will now be at the position
\begin{eqnarray}
\label{eq:sum_in_Lip_proof}
    \left( p_{1}(N), p_{2}( N) \right) 
    &=& \sum_{t=1}^{N}\left( \left( 1-\beta \right) \beta^{t-1} \omega^{(N)}_{1,t},\left( 1-\beta \right) \beta ^{t-1} \omega^{(N)}_{2,t} \right)  
    \\
    && +\left( \beta^{N}p_{1}\left( 0\right) ,\beta^{N}p_{2}\left(0\right) \right),
\end{eqnarray}
with probability greater or equal to $\eta^{2N}$.

By Lemma~\ref{apphis}, the choice of the records $\omega^{(N)}_{1}$ and $\omega^{(N)}_{2}$ makes $j$:th component of the sum on the right-hand side of \eqref{eq:sum_in_Lip_proof} take some value between 
$(\max\{x_{1,j}-\varepsilon,0\},\max\{x_{2,j}-\varepsilon, 0\})$ and $(x_{1,j}, x_{2,j})$. As we also have $\beta^{N}<\varepsilon $ and $p_{i,j}\left( 0\right) \leq 1$, we get $p_{i,j}\left( N\right) \in \left( x_{i,j}- \varepsilon, x_{i,j} + \varepsilon\right)$. We conclude that for all $N\geq N(\varepsilon)$
\begin{equation}
\label{useful_lower_bound}
    \mathbb{P}\left(|p(N) - x | < \varepsilon\right) \geq \eta^{2N}.
\end{equation}
In other words, it means that the point $y=\left( p_{1}\left( N\right)
,p_{2}\left( N\right) \right)$, which is in an $\varepsilon$-neighbourhood of $x$, is accessible from $p$ in $N$ steps. 

\subsubsection{Lipschitz Friendliness Property}
\label{sec:lipschitz-friendly}

The next lemma proves that $P$ has a "squeezing" property on Lipschitz functions in the following sense: there are $\lambda^+ > \lambda^- \in \mathbb{R}_+$ such that for every $f\in \mathcal{C}$ holding $\|f\|_{\mathcal{L}}\leq \lambda^+$ and $\|f\|_\infty \leq 1$, then $\|Pf\|_{\mathcal{L}} \leq \lambda^-$ and $\|Pf\|_\infty \leq 1$. We say that $P$ is "Lipschitz-friendly" since it has this property. 

\begin{lemma}
There is a positive constant $c$ such that $P$ is \textit{Lipschitz-friendly} if $\lambda^+ = \frac{2c}{1-\beta}$ and $\lambda^- = \beta\lambda^+ + c$. 
\end{lemma}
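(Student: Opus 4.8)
The goal is to show that $P$ contracts the Lipschitz seminorm while keeping the sup-norm below $1$. Since $P$ preserves constants and is an averaging operator (the coefficients $\mathbb{P}(\sigma(x,s))$ are nonnegative and sum to $1$ over $s$), the bound $\|Pf\|_\infty \leq \|f\|_\infty \leq 1$ is immediate from the triangle inequality, so the whole content of the lemma lies in estimating $\|Pf\|_{\mathcal{L}}$. I would fix two points $x, x' \in \square(S)$ and estimate $|Pf(x) - Pf(x')|$ by writing out the definition \eqref{eq:def_of_operator} at both points and splitting the difference into two groups of terms: one where the transition \emph{probabilities} $\mathbb{P}(\sigma(\cdot,s))$ change, and one where the \emph{evaluation points} $\Gamma(\cdot,s)$ change.

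\textbf{Decomposition and the two estimates.}
Concretely, I would insert a telescoping middle term and bound
\begin{equation*}
    |Pf(x) - Pf(x')|
    \leq \sum_{s} \left| \mathbb{P}(\sigma(x,s)) - \mathbb{P}(\sigma(x',s)) \right| \, |f(\Gamma(x,s))|
    + \sum_{s} \mathbb{P}(\sigma(x',s)) \left| f(\Gamma(x,s)) - f(\Gamma(x',s)) \right|.
\end{equation*}
For the first sum, I use $\|f\|_\infty \leq 1$ together with the Lipschitz continuity of $s \mapsto \mathbb{P}(\sigma(\cdot,s))$ established in Lemma~\ref{lemma:lipschitz}; summing the $m_1 m_2$ Lipschitz constants against $|x - x'|$ gives a bound of the form $c\,|x-x'|$ for a constant $c$ depending only on $k$, $m_1$, $m_2$, and $\varepsilon$ (via the $(1-\varepsilon)km_{-i}$ coefficients). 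This is exactly the additive constant $c$ appearing in $\lambda^- = \beta\lambda^+ + c$. For the second sum, the key observation is that $\Gamma$ is a \emph{contraction} by the factor $\beta$: from its definition, $\|\Gamma(x,s) - \Gamma(x',s)\| = \beta\,\|x - x'\|$ uniformly in $s$, since the $(1-\beta)\overrightarrow{1}$ terms cancel. Applying $\|f\|_{\mathcal{L}} \leq \lambda^+$ and using that the $\mathbb{P}(\sigma(x',s))$ sum to $1$, this sum is bounded by $\beta \lambda^+ |x - x'|$. Combining the two estimates and dividing by $|x - x'|$ yields $\|Pf\|_{\mathcal{L}} \leq \beta\lambda^+ + c = \lambda^-$.

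\textbf{Closing the friendliness condition.}
It remains to check that this is genuinely a \emph{contraction} of the seminorm, i.e. that one can choose $\lambda^+$ so that $\lambda^- < \lambda^+$; otherwise the property is vacuous for the later geometric-decay argument. With $\lambda^- = \beta\lambda^+ + c$, the gap is $\lambda^+ - \lambda^- = (1-\beta)\lambda^+ - c$, which is positive precisely when $\lambda^+ > c/(1-\beta)$. The stated choice $\lambda^+ = 2c/(1-\beta)$ satisfies this with room to spare (it gives $\lambda^- = \lambda^+ - c < \lambda^+$), which is presumably why the factor $2$ is built into the statement. I would close by verifying this inequality explicitly, so that the lemma delivers both the invariance of the sup-norm ball and a strict decrease of the Lipschitz constant, which is what \ref{subsubsec:geometric} needs.

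\textbf{Main obstacle.}
The only delicate point is the first sum: one must be careful that Lemma~\ref{lemma:lipschitz} gives Lipschitz continuity of each single-coordinate probability $\mathbb{P}(\widetilde{BR}_i(\cdot) = a)$, whereas $\mathbb{P}(\sigma(x,s))$ is the \emph{product} $\mathbb{P}(\widetilde{BR}_1(x_2)=s_1)\,\mathbb{P}(\widetilde{BR}_2(x_1)=s_2)$ of two such factors. I would handle this by the standard product-of-Lipschitz-functions bound (each factor is bounded by $1$ and Lipschitz), so the product is Lipschitz with constant at most the sum of the two individual constants; tracking these constants carefully through the double sum over $(s_1,s_2)$ is what pins down the exact value of $c$. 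Everything else is routine once the contraction property of $\Gamma$ is noted.
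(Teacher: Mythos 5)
Your proposal is correct and follows essentially the same route as the paper: both split $Pf(x)-Pf(x')$ into a term where the transition probabilities $\sigma(\cdot,s)$ vary (bounded via the Lipschitz constants from Lemma~\ref{lemma:lipschitz} and $\|f\|_\infty\leq 1$, giving the additive constant $c$) and a term where the evaluation points $\Gamma(\cdot,s)$ vary (bounded via the $\beta$-contraction of $\Gamma$ and $\|f\|_{\mathcal{L}}\leq\lambda^+$, giving $\beta\lambda^+$). The only cosmetic difference is that the paper symmetrizes via the identity $ab-cd=\tfrac12(a+c)(b-d)+\tfrac12(a-c)(b+d)$ whereas you use a plain telescoping $ab-cd=(a-c)b+c(b-d)$; your handling of the product structure of $\sigma$ and the verification that $\lambda^-<\lambda^+$ match what the paper does implicitly.
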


\begin{proof}
Let $f$ hold these properties in the Lemma statement, let $p,p^{\prime }\in \square\left(S\right)$, and consider the difference
\begin{equation}
\label{eq:summand_LPF}
    Pf\left( p^{\prime }\right) - Pf\left( p\right) 
    = \sum_{s\in S}\sigma \left(p^{\prime },s\right) f\left( \Gamma \left( p^{\prime },s\right) \right)-\sigma \left( p,s\right) f\left( \Gamma \left( p,s\right) \right).
\end{equation}
Here, let us use the identity
\begin{equation}
    ab-cd
    = \frac{1}{2}\left( a+c\right) \left( b-d\right) +\frac{1}{2}\left(
    a-c\right) \left( b+d\right)
\end{equation}
to transform the summand in \eqref{eq:summand_LPF} into%
\begin{equation}
\label{eq:summand_LPF_transformed}
    \begin{aligned}
    &\frac{1}{2}\left( \sigma \left( p^{\prime },s\right) +\sigma \left(p,s\right) \right) \left( f\left( \Gamma \left( p^{\prime },s\right) \right)-f\left( \Gamma \left( p,s\right) \right) \right)
    \\
    &\qquad +\frac{1}{2}\left( \sigma \left( p^{\prime },s\right) -\sigma \left(p,s\right) \right) \left( f\left( \Gamma \left( p^{\prime },s\right) \right)+f\left( \Gamma \left( p,s\right) \right)\right).
    \end{aligned}
\end{equation}
On the top row of \eqref{eq:summand_LPF_transformed}, from $\Gamma $ Lipschitz of constant $\beta $, we have
\begin{equation}
    \left\vert f\left( \Gamma \left( p^{\prime },s\right) \right) -f\left(\Gamma \left( p,s\right) \right) \right\vert 
    \leq \left\Vert p^{\prime}-p\right\Vert _{\infty }\beta \Vert f\Vert _{L}.
\end{equation}
Therefore we get
\begin{eqnarray}
    &&\left\vert \sum_{s\in S}\left( \sigma \left( p^{\prime },s\right) +\sigma\left( p,s\right) \right) \left( f\left( \Gamma \left( p^{\prime },s\right)\right) -f\left( \Gamma \left( p,s\right) \right) \right) \right\vert 
    \\
    &\leq& \sum_{s\in S}\left( \sigma \left( p^{\prime },s\right) +\sigma \left(p,s\right) \right) \left\Vert p^{\prime }-p\right\Vert _{\infty }\beta \Vert f\Vert _{L} 
    \\
    &=& 2\beta \Vert f\Vert _{L}\left\Vert p^{\prime }-p\right\Vert_{\infty }.
\end{eqnarray}
The last equality holds since the sum is taken over a disjoint partition of the space of outcome.

As a function of only the first argument $\sigma$ is Lipschitz continuous, uniformly over $s$, with Lipschitz constant at most
\begin{equation*}
    \left( 1-\varepsilon \right) k \left( m_{1} + m_{2} \right).
\end{equation*}
To simplify notation, denote the Lipschitz constant by $L_\sigma$. For the bottom row of \eqref{eq:summand_LPF_transformed} we then have the estimate
\begin{equation*}
    \left\vert \sigma \left( p^{\prime },s\right) -\sigma \left( p,s\right) \right\vert 
    \leq L_\sigma\left\Vert p^{\prime }-p\right\Vert _{\infty },
\end{equation*}
and we get
\begin{eqnarray*}
    &&\left\vert \sum_{s\in S}\left( \sigma \left( p^{\prime },s\right) -\sigma\left( p,s\right) \right) \left( f\left( \Gamma \left( p^{\prime },s\right)\right) +f\left( \Gamma \left( p,s\right) \right) \right) \right\vert 
    \\
    &\leq& \sum_{s\in S} L_\sigma \left\Vert p^{\prime }-p\right\Vert _{\infty } 2\left\Vert f\right\Vert _{\infty } 
    \\
    &\leq& 2L_\sigma\left\Vert f\right\Vert _{\infty }m_{1}m_{2}\left\Vert p^{\prime }-p\right\Vert _{\infty }.
\end{eqnarray*}
Let us denote
\begin{equation*}
    \lambda := \max \left\{ 1,m_{1}m_{2} L_\sigma\right\},
\end{equation*}
so that in the end, we have
\begin{equation*}
    \left\vert Pf\left( p^{\prime }\right) - Pf\left( p\right) \right\vert 
    \leq \left( \beta \Vert f\Vert _{L}+\lambda\left\Vert f\right\Vert _{\infty }\right) \left\Vert p^{\prime }-p\right\Vert _{\infty }.
\end{equation*}
Therefore, whenever $\left\Vert f\right\Vert _{\infty }\leq 1$ and
\begin{equation*}
    \Vert f\Vert _{L}\leq \lambda ^{+}:=\frac{2\lambda}{1-\beta },
\end{equation*}
the Lipschitz constant of $Pf$ will be at most $\beta \lambda
^{+}+\lambda=:\lambda ^{-}<\lambda ^{+}$ by choice of $\lambda ^{+}$. It follows that $P$ is Lipschitz-friendly with these $\lambda ^{+}$ and $\lambda ^{-}$.
\end{proof}

\subsubsection{Geometric Convergence of the Spread of Lipschitz Functions}
\label{subsubsec:geometric}

Denote by $B$ the subset of $\mathcal{L}$ such that $f\in B$ if $\|f\|_\infty \leq 1$ and $\|f\|_{\mathcal{L}} \leq \lambda^+$. $B$ is compact in $\mathcal{L}$ (endowed with the norm $\|\cdot\|_{\mathcal{Z}}$) by the Arzelà–Ascoli theorem. Essentially, $B$ is the set that is squeezed by $P$. $P$ squeezes $B$ into a compact set $C_1 := \{Pf; f\in B\} \subset B$. Also, $C_{n+1} := \{Pf; f\in C_{n}\} \subset C_n$ are compact. 

We define the spread of a function $f\in \mathcal{C}$ to be
\begin{equation*}
    [f] := \max_{x\in\square(S)} f(x) - \min_{x\in \square(S)} f(x)
\end{equation*}

\begin{lemma}
\label{lemma:B}
There exists constants $c\in (0,1)$ and $n\in \mathbb{N}$ such that for all non-negative $f\in C_1$:
$$[P^nf]\leq c.$$
\end{lemma}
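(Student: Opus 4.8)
The plan is to combine three ingredients: the open-set accessibility bound \eqref{useful_lower_bound}, the Lipschitz regularity enjoyed by every $f\in C_1\subset B$, and the elementary fact that any Markov operator is a weak contraction for the spread, namely $[Pf]\le[f]$ (because $\min f\le Pf(x)\le\max f$ pointwise, $P(x,\cdot)$ being a probability measure). The tempting first move — a Doeblin-type minorization $P^N(x,\cdot)\ge\alpha\nu(\cdot)$ yielding a total-variation contraction of the spread — is \emph{not} available here: from each state only finitely many successors are reachable, so the measures $P^N(x,\cdot)$ are atomic with $x$-dependent supports, hence mutually singular for distinct $x$. The Lipschitz regularity is exactly what repairs this, since reaching an $\varepsilon$-ball around a target is "almost as good as" reaching the target itself, up to an error controlled by the Lipschitz constant.

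Concretely, I would fix $\varepsilon>0$ small (to be pinned down below), set $n:=N(\varepsilon)$ from \eqref{useful_lower_bound}, write $\alpha:=\eta^{2n}>0$ with $\eta=\varepsilon/(m_1m_2)$, and let $L:=\lambda^-$ be the common Lipschitz bound valid on $C_1=PB$. Take $f\in C_1$ non-negative, so $0\le f\le 1$ and $[f]\le 1$; denote $M_f:=\max f$, $m_f:=\min f$, and fix $y^*,y^{**}$ with $f(y^*)=m_f$, $f(y^{**})=M_f$. Since $\square(S)$ is compact and $P^nf$ is continuous, $[P^nf]$ is attained at some $x_{\max},x_{\min}$. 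Starting the chain at $x_{\max}$ and aiming at $y^*$, \eqref{useful_lower_bound} gives probability at least $\alpha$ of landing in the $\varepsilon$-ball about $y^*$, where $f\le m_f+L\varepsilon$; splitting $P^nf(x_{\max})=\mathbb{E}[f(p(n))\mid p(0)=x_{\max}]$ over this event and its complement (and using $[f]\ge L\varepsilon$ so the coefficient has the right sign) yields
\begin{equation}
    P^nf(x_{\max})\le (m_f+L\varepsilon)\,\alpha + M_f\,(1-\alpha) = M_f-\alpha\bigl([f]-L\varepsilon\bigr).
\end{equation}
A symmetric argument starting at $x_{\min}$ and aiming at $y^{**}$ gives $P^nf(x_{\min})\ge m_f+\alpha([f]-L\varepsilon)$, and subtracting produces
\begin{equation}
    [P^nf]\le (1-2\alpha)[f]+2\alpha L\varepsilon \le 1-2\alpha(1-L\varepsilon).
\end{equation}

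It remains to dispose of the complementary regime $[f]<L\varepsilon$, where $[f]-L\varepsilon<0$ and the split above is vacuous; there I would simply invoke the weak contraction $[P^nf]\le[f]<L\varepsilon$. Choosing $\varepsilon$ so small that $L\varepsilon<1$ (and noting $\alpha<1/2$ automatically) makes both bounds strictly below $1$, so the lemma holds with $n=N(\varepsilon)$ and $c:=\max\{L\varepsilon,\,1-2\alpha(1-L\varepsilon)\}\in(0,1)$. The non-negativity hypothesis enters exactly once but essentially: it forces $[f]\le 1$ rather than the weaker $[f]\le 2$ that $\|f\|_\infty\le 1$ alone would give, and this is what keeps the final constant below $1$.

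I expect the only real subtlety to be the bookkeeping of the two-sided $\varepsilon$-errors and the verification that the accessibility bound is uniform in the starting state — which it is, since the play records built in \eqref{useful_lower_bound} depend only on the target while the residual $\beta^N p(0)$ contribution is absorbed into the $\varepsilon$-ball. The conceptual point worth stating plainly is the one flagged above: it is Lipschitz regularity, not measure-theoretic overlap, that drives the contraction, which is also precisely why the argument must restrict $f$ to the compact Lipschitz set $B$ rather than working in all of $\mathcal{C}$.
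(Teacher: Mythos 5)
Your argument is correct in substance, and it takes a genuinely different route from the paper's. The paper handles uniformity over $f$ by a compactness argument in \emph{function space}: $C_1^+$ is compact (Arzel\`a--Ascoli, via the uniform Lipschitz bound), so it admits a finite $\delta$-cover with centers $h_1,\dots,h_N$; each truncation $\max\{0,h_i\}$ is either identically zero or positive on an open set, open-set accessibility then yields a uniform lower bound $P^n\max\{0,h_i\}\geq b>0$ everywhere, and the two cases $f\leq 2\delta$ versus $f\geq\max\{0,h_i\}\not\equiv 0$ give $c=\max\{2\delta,1-b\}$. You instead use the uniform Lipschitz bound $\|f\|_{\mathcal{L}}\leq\lambda^-$ on $C_1$ directly, in a two-point argument at the extremizers of $P^nf$ and of $f$, obtaining the affine contraction $[P^nf]\leq(1-2\alpha)[f]+2\alpha L\varepsilon$ with explicit constants; your case split on $[f]\gtrless L\varepsilon$ plays the same role as the paper's split on whether $\max\{0,h_i\}$ vanishes. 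Your route is more elementary (no covering, no Arzel\`a--Ascoli) and more quantitative; the paper's is softer and requires fewer estimates, but both rest on the same two pillars, namely open-set accessibility and the Lipschitz regularity of $C_1$, and your opening observation that a Doeblin minorization is unavailable (atomic kernels with mutually singular supports) is precisely the obstruction the paper's Lipschitz machinery is built to circumvent.

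One imprecision needs repair: you conflate the radius of the accessibility ball with the model's tremble probability. In \eqref{useful_lower_bound} the symbol $\varepsilon$ is the \emph{fixed} mistake frequency; it appears both as the ball radius and inside $\eta=\varepsilon/(m_1m_2)$, so you are not free to ``choose $\varepsilon$ small'', and $L\varepsilon<1$ can genuinely fail, since $L=\lambda^-$ is of order $\lambda/(1-\beta)\geq 1$ while $\varepsilon$ is whatever the model prescribes. The fix is immediate but necessary: rerun the derivation of \eqref{useful_lower_bound} with an arbitrary target radius $\delta>0$ (apply Lemma~\ref{apphis} with that $\delta$, so $n=N(\delta)$), keeping the per-step probability bound $\eta$ tied to the fixed tremble probability; this gives $\mathbb{P}\left(|p(n)-x|<\delta\right)\geq\eta^{2n}$ uniformly in starting point and target, for every $\delta>0$. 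Choosing $\delta<1/L$ and $\alpha:=\eta^{2N(\delta)}$ then makes your constants $c=\max\{L\delta,\,1-2\alpha(1-L\delta)\}\in(0,1)$ legitimate. (The paper implicitly relies on the same general-radius version, since it applies accessibility to neighbourhoods adapted to where each $h_i$ is positive, so this decoupling is benign --- but it must be stated.)
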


\begin{proof}
Let $C_1^+ := \{f \in C_1 : f(x) \geq 0, x\in \square(S)\}$. $C_1^+$ is compact and so is also
\begin{equation*}
    D_\delta 
    := \{f - \delta; f\in C_1^+\},\quad \delta \in \left(0, 1/2\right).
\end{equation*}
$D_\delta$ can be covered by $N \in \mathbb{N}$ open balls of radius $\delta$ centered at the functions $h_1,\dots, h_N \in D_\delta$. Hence, for all $f\in C_1^+$,
\begin{equation*}
    \max\{0,h_i\} \leq f \leq \max\{0,h_i\} + 2\delta
\end{equation*}
for some $i\in\{1,\dots, N\}$. The functions $\max\{0,h_i\}$ are in $C_1^+$, hence either zero everywhere or with support on an open set in $\square(S)$. By the previous estimate~\eqref{useful_lower_bound} (open set accessibility) there is for each $\max\{0,h_i\}$ falling into the latter category constants $b_i > 0$ and $n_i\in \mathbb{N}$ such that 
\begin{equation*}
    P^{n_i}\max\{0,h_i\}(x) \geq b_i,\quad x\in \square(S).
\end{equation*}
Since the cover of $D_\delta$ is finite, there are common constants $b > 0$ and $n\in \mathbb{N}$ such that for all $\max\{0,h_i\}$ not zero everywhere,
\begin{equation*}
    P^n \max\{0,h_i\}(x) \geq b,\quad x\in\square(S).
\end{equation*}
Let $f\in C_1^+$. If $\max\{0,h_i\} \neq 0$ it follows from $\max\{0,h_i\} \leq f\leq 1$ that $\min_{x\in \square(S)} P^nf(x) \geq b$ and hence $[P^n f]\leq 1-b$. If $\max\{0,h_i\} = 0$, $f \leq 2 \delta$ yields $[P^n f] \leq 2\delta$. Let $c := \max\{1-b, 2\delta\}$, then $c\in (0,1)$ and
\begin{equation*}
    [P^n f] \leq c,\quad f\in C_1^+.
\end{equation*}
\end{proof}

\begin{lemma}
\label{lemma:A}
There exists positive constants $\theta \in (0,1)$ and $a < +\infty$ such that
\begin{equation*} 
    [P^n f] \leq a\|f\|_{\mathcal{L}}\theta^n,\quad f\in \mathcal{L},
\end{equation*}
and $a$ is independent of $f$.
\end{lemma}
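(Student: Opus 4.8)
The plan is to reduce the statement, through two homogeneity-type reductions, to a geometric decay of the spread on the compact set $B$, and then to obtain that decay by iterating a single-block strict contraction built from Lemma~\ref{lemma:B} and the open-set accessibility estimate \eqref{useful_lower_bound}.

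First, the reductions. Both $f\mapsto[P^nf]$ and $f\mapsto\|f\|_{\mathcal L}$ are positively homogeneous of degree one and unchanged when a constant is added to $f$ (the spread is translation invariant and $P$ fixes constants, so $P^n(f+c)=P^nf+c$), and the claim is trivial for constant $f$. So fix a non-constant $f$, pick any $x_0\in\square(S)$, and set $\tilde f:=f-f(x_0)$; then $\|\tilde f\|_\infty\le D\|f\|_{\mathcal L}$ with $D:=\sup_{x,y\in\square(S)}|x-y|$, while $\|\tilde f\|_{\mathcal L}=\|f\|_{\mathcal L}$ and $[P^nf]=[P^n\tilde f]$. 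Scaling by $M:=C_*\|f\|_{\mathcal L}$ with $C_*:=\max\{D,1/\lambda^+\}$ puts $h:=\tilde f/M$ into $B$ (indeed $\|h\|_\infty\le D/C_*\le1$ and $\|h\|_{\mathcal L}=1/C_*\le\lambda^+$), and $[P^nf]=M[P^nh]$. Hence it suffices to prove
\begin{equation}
[P^nh]\le a_0\theta^n,\qquad h\in B,
\end{equation}
for some $a_0<\infty$ and $\theta\in(0,1)$; the claim then follows with $a=C_*a_0$.

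Second, I would deduce this from a one-block multiplicative contraction: there exist $n_1\in\mathbb N$ and $\rho\in(0,1)$ with $[P^{n_1}h]\le\rho[h]$ for every $h\in B$. Because $P$ is Lipschitz-friendly we have $P(B)=C_1\subseteq B$, so every iterate $P^jh$ stays in $B$, where $[P^jh]\le[h]\le 2$; writing $n=kn_1+r$ with $0\le r<n_1$ and using $[Ph]\le[h]$ to absorb the remainder then gives $[P^nh]\le 2\rho^{k}\le(2/\rho)\theta^{n}$ with $\theta:=\rho^{1/n_1}$, which is the desired decay on $B$.

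The crux is this one-block contraction, and the main obstacle is converting the \emph{absolute} bound of Lemma~\ref{lemma:B} (valid on the compact set $C_1^+$) into a \emph{multiplicative} one: after normalising $h\in B$ by its spread, the spread-one function $\hat h:=(h-\min h)/[h]$ generally leaves $B$, since dividing by a small spread inflates the Lipschitz seminorm, so compactness of $B$ cannot be exploited directly. I would split into two regimes. When $\|h\|_{\mathcal L}\le M_0[h]$ for a fixed threshold $M_0$, the normalised function lies in the family $\mathcal K:=\{g:\|g\|_{\mathcal L}\le M_0,\ \min g=0,\ \max g=1\}$, which is compact by Arzelà–Ascoli; on $\mathcal K$ the single-shot reduction of Lemma~\ref{lemma:B} together with open-set accessibility yields $[P^{n_0}g]<1$ for each fixed $g$, and by continuity of $g\mapsto[P^{n_0}g]$ and compactness this is uniform, giving $[P^{n_0}h]\le c'[h]$ with $c'<1$. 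When instead $\|h\|_{\mathcal L}>M_0[h]$ (a feature sharp relative to its amplitude), I would invoke the Lipschitz-friendly recursion $\|P^jh\|_{\mathcal L}\le\beta^j\|h\|_{\mathcal L}+\lambda$ to argue that the feature is smoothed while the spread is only non-increasing, so that within a controlled number of steps the iterate re-enters the first regime; the control here comes from the interplay of the geometric smoothing with the finite support of the one-step kernel and the uniform tremble lower bound $\eta$ behind \eqref{useful_lower_bound}, which caps how much spread any single atom can carry and is what prevents the spread from stalling at a positive floor. Combining the two regimes produces the uniform pair $n_1,\rho$, and assembling the constants through the two reductions yields the stated $a$ and $\theta$. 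I expect the sharp-feature regime to be the genuinely delicate point, since it is exactly where a naive iteration of the absolute estimate in Lemma~\ref{lemma:B} fails to give a geometric rate.
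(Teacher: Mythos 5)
Your homogeneity reduction to $h\in B$ is fine, and your regime-1 argument (bounded Lipschitz-to-spread ratio) is essentially sound, but the statement you reduce the lemma to --- the uniform one-block contraction $[P^{n_1}h]\le\rho[h]$ for \emph{all} $h\in B$ --- is false for this kernel, so the plan cannot be completed. The obstruction is the atomic, finite-support structure of $P$: from a state $x$ the chain can reach in $n_1$ steps only the finite set $R_{n_1}(x)=\beta^{n_1}x+V$, where $V$ is a finite set of vectors (one per sequence of strategy profiles) independent of $x$. Pick $x,x'$ with $\beta^{n_1}(x-x')\notin V-V$, so that $R_{n_1}(x)$ and $R_{n_1}(x')$ are disjoint; let $d>0$ be the distance between these finite sets, choose $0<d'<\min\{d,1/\lambda^+\}$, and set
\begin{equation*}
    h(z):=\lambda^+ d'\,\max\bigl\{0,\,1-\operatorname{dist}\bigl(z,R_{n_1}(x)\bigr)/d'\bigr\}.
\end{equation*}
Then $\|h\|_\infty=\lambda^+d'\le 1$ and $\|h\|_{\mathcal L}=\lambda^+$, so $h\in B$, and $[h]=\lambda^+d'>0$. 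Since the chain started at $x$ is supported on $R_{n_1}(x)$, where $h\equiv\lambda^+d'$, while the chain started at $x'$ is supported on $R_{n_1}(x')$, where $h\equiv 0$, we get $P^{n_1}h(x)=\lambda^+d'$ and $P^{n_1}h(x')=0$, hence $[P^{n_1}h]=[h]$: no $\rho<1$ works, for any fixed $n_1$. This is exactly the ``sharp feature relative to amplitude'' regime you yourself flag as delicate, and it cannot be repaired within your scheme: re-entering regime 1 requires the Lipschitz norm to decay to a multiple of the spread, which takes on the order of $\log\bigl(\|h\|_{\mathcal L}/[h]\bigr)/\log(1/\beta)$ steps, and this is unbounded over $B$ (here $\|h\|_{\mathcal L}/[h]=1/d'$ with $d'$ arbitrarily small). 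So no uniform pair $(n_1,\rho)$ can come out of the two regimes, and your appeal to the tremble bound behind \eqref{useful_lower_bound} cannot save this, since the counterexample already incorporates $\varepsilon>0$.

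The paper's proof is built precisely to sidestep this trap: it never asks for contraction relative to the \emph{current} spread. It applies Lemma~\ref{lemma:B} to an element $f_0\in B$ to get the \emph{absolute} bound $[P^{n+1}f_0]\le c$, then renormalizes, writing $P^{n+1}f_0=c\bar f_1+\bar y_1$ with $\|\bar f_1\|_\infty\le 1$ but $\|\bar f_1\|_{\mathcal L}\le\lambda^+/c$ (the Lipschitz norm inflates by $1/c$), and then inserts $p$ extra steps, using the Lipschitz-friendliness recursion $\|Pg\|_{\mathcal L}\le\beta\|g\|_{\mathcal L}+\lambda\|g\|_\infty$ to bring the renormalized function back into $B$ before repeating. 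Each block of $n+p$ steps thus contributes a factor $c$ against the fixed absolute scale of $B$, not against a shrinking spread. The price is that the final decay is measured against $\|f\|_{\mathcal L}$ rather than against $[f]$ --- which is exactly why Lemma~\ref{lemma:A} is stated with $\|f\|_{\mathcal L}$ on the right-hand side. Note that the counterexample above is perfectly consistent with the paper's statement (its Lipschitz norm is $\lambda^+$ even though its spread is tiny); it only refutes yours.
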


\begin{proof}
Recall that $P$ is Lipschitz friendly with 
\begin{equation*}
    \lambda^+
    = \frac{2\lambda}{1-\beta},\quad \lambda^- = \beta\lambda^+ + \lambda
\end{equation*}
for any $\lambda = \max\{1,m_1m_2L_\sigma\}$ as in the previous section. We notice the following transformation: for all $f\in \mathcal{L}$, there exists an $\bar f_0 \in \mathcal{L}$ non-negative with $\|\bar f_0\|_\infty\leq 1$, and an $y_0\in \mathbb{R}$, such that
\begin{equation*}
    f = [f]\bar f_0 + y_0
\end{equation*}
Denote the Lipschitz constant of $\bar f_0$ by $\bar \lambda_0$. Let $f_0$ be the following scaling of $\bar f_0$:
\begin{equation*}
    f_0 = \frac{\bar f_0}{\max\{1,\bar\lambda_0/\lambda^+\}}.
\end{equation*}
Then $f_0 \in B$ and we have that
\begin{equation*}
    f = \max\{1,\bar\lambda_0/\lambda^+\}[f]f_0 + y_0
\end{equation*}
and hence that
\begin{equation*}
    P^{n+1}f = \max\{1,\bar\lambda_0/\lambda^+\}[f]P^{n+1}f_0 + y_0.
\end{equation*}
Since $Pf_0 \in C_1^+$, by Lemma~\ref{lemma:B} we have that $[P^nf_0]\leq c$ where $c\in (0,1)$. Furthermore, by Lipschitz friendliness of $P$, $\|P^{n+1}f_0\|_{\mathcal{L}} \leq \lambda^+$. Therefore there exists an $\bar f_1 \in \mathcal{L}$ with $\|\bar f_1\|_\infty \leq 1$ and $\|\bar f_1\|_{\mathcal{L}} \leq \lambda^+/c$, and an $\bar y_1 \in \mathbb{R}$, such that
\begin{equation*}
    P^{n+1}f_0 = c \bar f_1 + \bar y_1    
\end{equation*}
Let $p$ be the smallest integer such that
\begin{equation*}
    \left(\frac{\lambda^-}{\lambda^+}\right)^p 
    = \left(\frac{1-\beta}{2}\right)^p < c
\end{equation*}
Since $\|P\bar f_1\|_{\mathcal{L}} \leq \lambda^-/c$ we get that $\|P^p \bar f_1\|_{\mathcal{L}} \leq \lambda^+$ and hence $P^p \bar f_1 \in B$. 

Denote $f_1 := P^p \bar f_1$, then for some $y_1\in \mathbb{R}$, 
\begin{equation*}
    P^{n+p+1} f = \max\{1,\lambda_0/\lambda^+\}[f]cf_1 + y_1.
\end{equation*}
Recursively carrying out the procedure $\ell\in \mathbb{N}$ times, we get that
\begin{equation*}
    P^{\ell(n+p)+1} f 
    = \max\{1,\lambda_0/\lambda^+\}[f]c^\ell f_\ell + y_\ell,\quad f_\ell \in B,\ y_\ell\in \mathbb{R}.
\end{equation*}
We now note that since $\lambda_0[f]$ is the Lipschitz coefficient of $f$ and $\square(S)$ is compact:  
\begin{equation*}
    \max\{1,\lambda_0/\lambda^+\}[f] \leq \|f\|_{\mathcal{L}}
\end{equation*}
Since also $[Pf]\leq [f]$, there exists $a\in \mathbb{R}$ and $\theta\in (0,1)$ independent of $f$ so that
\begin{equation*}
    [P^nf]\leq a\|f\|_{\mathcal{L}}\theta^n,\quad f\in \mathcal{L}.
\end{equation*}
\end{proof}

\subsubsection{Proof of Uniform Ergodicity}
\label{subsubsec:ergodicity}

The existence of a probability measure invariant with respect to $P$ follows readily by the compactness of $\square(S)$ and e.g. Schauder's fixed point theorem. The set $\mathcal{P}(\square(S))$ endowed with the weak topology (metrized by the $1$-Wasserstein distance) is compact, since $\square(S)$ is a compact Euclidean set. As a mapping from $\mathcal{P}(\square(S))$ to itself, $P$ is therefore compact. It is also continuous: let $\mu^n \rightarrow \mu$ in $\mathcal{P}(\square(S))$ and let $f\in \mathcal{L}$. Then by following the calculations of Section~\ref{sec:lipschitz-friendly} we see that $Pf$ is Lipschitz continuous, hence as $n\rightarrow\infty$
\begin{align*}
    \int_{\square(S)} f(x) \mu^n P(dx) 
    &= \int_{\square(S)}Pf(x)\mu^n(dx)  
    \\  
    &\rightarrow \int_{\square(S)} Pf(x)\mu(dx) = \int_{\square(S)} f(x)\mu P(dx)
\end{align*}
implying that $\mu^nP \rightarrow \mu P$. 
The existence of a fixed point now follows by Schauder's theorem.

Towards uniqueness, assume that $\mu',\mu \in \mathcal{P}(\square(S))$ are invariant with respect to $P$. Then, by using the result Lemma~\ref{lemma:A}, we have that for all $f\in \mathcal{L}$, $P^n f \rightarrow K(f)$, a constant depending on $f$. Hence
\begin{equation}
    \mu' f = \lim_{n\rightarrow\infty} \mu' P^n f = K(f) = \lim_{n\rightarrow\infty} \mu P^n f = \mu f,
\end{equation}
and $\nu = \mu$ in the weak topology and the invariant probability measure is unique.

We now turn to the geometric convergence. Let $L_1 := \{f\in \mathcal{L} : \|f\|_\mathcal{L} \leq 1,\ f \text{ bounded }\}$ and let $\|\cdot\|_W$ denote the $1$-Wasserstein metric on $\mathcal{P}(\square(S))$. Let furthermore $\mu$ be the invariant probability measure with respect to $P$. Recall that for all $f\in L_1$, there exists a function $g \in L_1$ bounded by $1$ and an $y \in \mathbb{R}$ such that $P^n f = [P^nf]g + y$, where the brackets $[\ \cdot\ ]$ denotes the spread of a function. By Lemma~\ref{lemma:A} and the Kantorovich-Rubenstein theorem, for all $\nu \in \mathcal{P}(\square(S))$,
\begin{align*}
    \|\nu P^n - \mu\|_W 
    &= \sup_{f\in L_1} | \sigma P^n f - \mu f|
    \\
    &= \sup_{f\in L^1}| \nu P^n f - \mu P^n f|
    \\
    &= \sup_{f\in L_1}[P^n f]\left|\int_{\square(S)} g(x)\nu(dx) - \int_{\square(S)}g(x)\mu(dx)\right|
    \\
    &\leq 2 a \theta^n.
\end{align*}

\subsection{Proof of Theorem~\ref{thm:min_curb}}

\begin{proof} 
The proof consists of four steps.

\textit{Step 1. Bounding the probability of reaching $B_\delta(\mathscr{C})$ in finite time.} \\ 
To find a lower bound for the probability to go from an arbitrary point $p(t) \in B_\delta(\mathscr{C})^C$ to $B_\delta(\mathscr{C})$ in finite time we create a particular path of positive probability that does exactly that. Let $p(t) \in \square (S)$ be given and let 
$s^{(1)} \in S_1\times S_2$ be the strategy profile played at in period $t$. Either $s^{(1)}$ is a CURB block, or the best reply set to $s^{(1)}$ contains a strategy not in $s^{(1)}$, $BR(\overrightarrow{1_{s^{(1)}}}) \not \subset s^{(1)}$. If the former statement is true this step of the proof is complete. That is not always the case, therefore assume that we are in the case of the latter statement, i.e. that the best reply set to $s^{(1)}$ contains a strategy not in $s^{(1)}$. Then, the probability of both players only sampling $s^{(1)}$ at time $t+1$ is bounded from below by
$(1-\beta)^{2k}$. Hence the probability of a strategy profile 
$s^{(2)} \in BR(\overrightarrow{1_{s^{(1)}}}), s^{(2)} \neq s^{(1)}$, being played is bounded from below by
\begin{equation}
    \mathbb{P}\left(\widetilde{BR}(p(t)) = s^{(2)}\ |\ p(t)\right) 
    \geq \frac{(1-\beta)^{2k}}{m_1m_2}(1-\varepsilon)^2.
\end{equation}
Now let $F_2$ be the smallest block $F_2 \in S_1 \times S_2$ that contains $\{s^{(1)}, s^{(2)}\}$. Either $F_2$ is a CURB block or $BR(\Delta(F_2)) \not \subset F_2$, in which case there is at least one sample $D$ of size $k$ from $F_2$ such that $BR(D) \not \subset F_2$. The probability of sampling that particular $D$, and the best replies to $D$ being such that at least one of them is not in $F_2$, is again bounded away from zero. Until we have sampled a sequence of strategy profiles, each extending the set $F_i$, such that $F_i$ is a CURB block, there is always some sample with positive sampling probability such that $BR(D) \not\subset F_i$. The probability of playing a strategy $s^{(i)}$ which is a best reply to $D$ which is not in $F_i$,
${s^{(i)}}\in BR(D)\cap(F_i)^C$, is therefore bounded from below by
\begin{equation}
    \mathbb{P}\left(\widetilde{BR}(p(t+i-1)) = s^{(i)}\ |\ p(t+i-1) \right)
    \geq \frac{\left(\beta^{i-1}(1-\beta)\right)^{2k}}{m_1m_2}(1-\varepsilon)^2.
\end{equation}
Keep filling $F_i, F_{i+1}, F_{i+2},\dots$ with strategies from the CURB block in this fashion, so that $F_T$ spans a CURB block and $T \leq m_1+m_2$ \cite[Lemma 1]{Hurkens1995}. To get a uniform lower bound, assume that $T = m_1 + m_2$ and that once $F_i$ is a CURB block the following $T - i$ strategy profiles are inside the CURB block. 
The probability of this progression of plays is bounded from below: let $\mathcal{E}$ be the event that $p(t+T)$ puts at most $\beta^{T+1}$ mass outside the CURB block spanned by $F_T$, then
\begin{equation}
\label{eq:uniform_lower_bound}
    \mathbb{P}\left( \mathcal{E} \right)
    \geq \frac{\left(\beta^{2k}\right)^{(T-1)!}\left(1-\beta\right)^{2Tk}}{m_1^Tm_2^T}(1-\varepsilon)^{2T}.
\end{equation}
Inside the CURB block spanned by $F_T$, there is a minimal CURB block which we denote by $C = C_1\times C_2$. The probability of both players sampling from $C$ given the state $p(t+T)$ (as described above) is greater or equal to
\begin{equation}
    \mathbb{P}\left((D_1/k, D_2/k) \in \square (C) \ |\ D\text{ from } p(t+T) \right)
    \geq \left(\beta^T(1-\beta)\right)^{2k}(1-\varepsilon)^2.
\end{equation} 
Starting from $p(t)\in B_\delta(C)^C$, a sequence of plays that results in $p(t + T + T^*) \in B_\delta(C)$ is to play $T$ strategies to fill $F_T$ followed by $T^*$ strategies from the minimal CURB block $C$. Conditional on $p(t)\in B_\delta(\mathscr{C})^C$ and the aforementioned event $\mathcal{E}$, the probability that $p(t + T + T^*)\in B_\delta(C) \subset B_\delta( \mathscr{C})$ is bounded from below by
\begin{equation}
    \begin{aligned}
    &\mathbb{P}\Big( (D_1,D_2)(t+T+i) \in \square (C), i=0,\dots, T^*-1\ |\  p(t+T) \text{ as above}\Big)
    \\
    &\geq \left(\beta^T(1-\beta)(1-\varepsilon)\right)^{2kT^*}
     =: \gamma(\varepsilon,T,T^*).
    \end{aligned}
\end{equation}
Now $p(t+T+T^*)$ gives at most $\beta^{T^*}$ probability to all strategy profiles outside $\square(C)$. Therefore, we pick $\delta>0$ and let $T^*\in\mathbb{N}$ be such that $\beta^{T^*} < \delta$ and, summarizing the analysis in this step, we have derived a bound on the probability of moving from any point $p(t) \in B_\delta(\mathscr{C})^C$ to $B_\delta(\mathscr{C})$ in $T + T^\ast$ steps. We denote this bound by $\underline{K}$ and it is given by
 \begin{equation}
    \begin{aligned}
     &P^{T+T^\ast}(p(t), B_\delta(\mathscr{C})) 
     \ \
     &\geq \frac{\left(\beta^{2k}\right)^{(T-1)!}\left(1-\beta\right)^{2TK}(1-\varepsilon)^{2T}}
     {m_1^Tm_2^T} \gamma(\varepsilon,T,T^*) =: \underline{K}.
 \end{aligned}
 \end{equation}
 
\textit{Step 2. Expected exit time from $B_\delta(\mathscr{C})$.} \\ 
Once in $B_\delta(\mathscr{C})$, one of two things must happen for the process to leave. Either one player makes a mistake or one player samples at least one strategy from outside the minimal CURB block $C$ the process is currently centered around. So instead of calculating 
the time to the first exit, denoted $\tau_\varepsilon$, we calculate the expected time until one of these two things happen the first time. Let $\tau_\varepsilon^\ast$ denote the time, starting from $t=0$, until either a strategy is sampled outside $C$ or one player makes an $\varepsilon$-tremble. We denote the expression for the probability that $\tau_\varepsilon^\ast > t^*$, $t^*\in\mathbb{N}$, with $Q_\varepsilon(t^*)$,
\begin{equation}
\label{eq:Qvareps}
    Q_{\varepsilon}(t^*)
    := \mathbb{P}\left(\tau^*_\varepsilon > t^*\ |\ p(0)\in B_{\delta}(C)\right)
    = \prod_{t=0}^{t^*} (1-\beta^t\delta)^{2k}(1-\varepsilon)^2.
\end{equation}
For the case $\varepsilon = 0$, we use the fact that $\sum_{t=0}^\infty \beta^t \delta$ is convergent to conclude that $\prod_{t=0}^\infty (1-\beta^t\delta)^{2k}$ approaches a non-zero limit. Since $Q_\varepsilon$ is decreasing and non-negative,
\begin{equation}
\label{eq:limit_q_star}
    \lim_{t^* \rightarrow \infty} Q_{\varepsilon}(t^*) =
    \begin{cases}
    Q^* \in (0,1),& \text{if }\varepsilon = 0,
    \\
    0,& \text{if }\varepsilon > 0.
    \end{cases}
\end{equation}
We can now derive a bound for $\tau_\varepsilon$, the expected time to exit from $B_\delta(\mathscr{C})$,
\begin{equation}
    \begin{aligned}
    \mathbb{E}\left[\tau_\varepsilon\right] 
    &\geq \mathbb{E}\left[\tau^\ast_\varepsilon\right] 
    \\
    &\geq \mathbb{E}\left[\tau_\varepsilon^\ast\ |\ \tau_\varepsilon^\ast \geq t^*, p(0)\in B_\delta(C) \right]
    \\
    &\qquad \qquad \times \mathbb{P}(\tau_\varepsilon^\ast \geq t^*\ |\
    p(0)\in B_\delta(C))\mathbb{P}(p(0)\in B_\delta(C))) 
    \\
    &\geq t^*Q_{\varepsilon}(t^*)\nu(B_\delta(C)),
    \end{aligned}
\end{equation}
where $\nu$ is the initial distribution of the state process and $\nu(B_\delta(C))$ is the probability that $p(0)\in B_\delta(C)$. We know that the state process converges weakly to the invariant distribution for all initial distributions and therefore $\nu$ is any  distribution on $\square(S)$ of our choice. Choosing $\nu$ as the distribution of the constructed $p(t+T+T^*)$ from above,
\begin{equation}
    \begin{aligned}
    E[\tau_\varepsilon]
    &\geq t^* \prod_{t=0}^{t^*}(1-\beta^t\delta)^{2k}(1-\varepsilon)^2
    \\
    &= t^*(1-\varepsilon)^{2t^*}Q_0(t^*)
    \\
    &\geq t^*(1-\varepsilon)^{2t^*}Q^*,
    \end{aligned}
\end{equation}
where $t^*$ is any positive integer. For a fixed $\varepsilon$, the function $t^* \mapsto t^*(1-\varepsilon)^{2t^*}$ is maximized by 
$t^*(\varepsilon) = - (2\ln(1-\varepsilon))^{-1}$. There is therefore a decreasing sequence of positive numbers $(\varepsilon_j)_{j=1}^\infty$, tending to zero as $j\rightarrow\infty$, such that $t^*(\varepsilon_j)$ is an integer and
\begin{equation}
    \mathbb{E}[\tau_\varepsilon] \geq -\frac{Q^*}{2e\ln(1-\varepsilon_j)},
\end{equation}
which diverges to $\infty$ as $j \rightarrow \infty$.

\textit{Step 3. Bounding $\mu^*_\varepsilon(B_\delta(\mathcal{C})^C)$ from above.}\\
We know that for any $\varepsilon>0$ there exists a unique invariant probability measure $\mu^\ast_\varepsilon$. We also have a lower bound for $P(x, B_\delta(\mathscr{C}))$ uniform over $x\in B_\delta(\mathscr{C})^C$, and a lower bound for the expected time the process stays in $B_\delta(\mathscr{C})$ once it has entered. 

The probability given by the invariant distribution to the set $B_\delta(\mathscr{C})$ is at least the sum over $n$ of the probability of: the state process not being in it $(n+1)(T+T^*)$ steps ago, but in it $n(T+T^*)$ steps ago, and then staying there for at least $n(T+T^*)$ time steps,
\begin{align*}
	1 \geq \mu_\varepsilon^\ast (B_\delta(\mathscr{C})) 
	&\geq \sum_{n=0}^\infty \left(\int_{B_\delta(\mathscr{C})^C}P^{T + T^\ast}\left(x, B_\delta(\mathscr{C})\right) d\mu_\varepsilon^\ast(x) \right) \mathbb{P}\left( \tau_\varepsilon \geq n(T+T^*)\right)   
	\\ 
	&\geq \mu^\ast_\varepsilon\left(B_\delta(\mathscr{C})^C\right) \underline{K}\left( \sum_{n=0}^\infty \mathbb{P}\left( \frac{\tau_\varepsilon}{T+T^*} \geq n \right)\right)
	\\ 
	&\geq \mu^\ast_\varepsilon\left(B_\delta(\mathscr{C})^C\right) \frac{\underline{K}}{T+T^*}\; \mathbb{E}\left[\tau_\varepsilon^\ast \right].
\end{align*}

\textit{Step 4. Putting it all together.}\\
The collection $(\mu^*_\varepsilon)_{\varepsilon>0}$ is tight because $\square(S)$ is compact. So there exists a subsequence that converges weakly to $\mu^* \in \mathcal{P}(\square(S))$. The limit $\mu^*$ is not necessarily unique, however, by the Portmanteau theorem,
\begin{equation}
    \underset{\varepsilon\rightarrow 0}{\lim\inf}\ \mu^*_\varepsilon (U) \geq \mu^*(U) 
\end{equation}
for all open sets $U$ of $\square(S)$. Note that $B_\delta(\mathscr{C})^C$ is open, and
\begin{equation}
    \mu^*_\varepsilon(B_\delta(\mathscr{C})^C)  
    \leq \frac{T + T^*}{\underline{K}\mathbb{E}[\tau^*_\varepsilon]}.
\end{equation}
Since $\underline{K} > 0$ increases as $\varepsilon \to 0$, $\mathbb{E}\left[ \tau_\varepsilon^\ast \right] \to \infty$ as $\varepsilon \to 0$, and $T+T^*$ does not depend on $\varepsilon$,
\begin{equation}
    \mu^*(B_\delta(\mathscr{C})^C) 
    \leq \underset{\varepsilon\rightarrow 0}{\lim\inf}\ \mu^*_\varepsilon(B_\delta(\mathscr{C})^C) 
    \leq (T+T^*)\underset{\varepsilon\rightarrow 0}{\lim\inf}\frac{1}{\underline{K}\mathbb{E}[\tau^*]} 
    = 0.
\end{equation} 
We conclude that that $\mu^\ast_\varepsilon\left(B_\delta(\mathscr{C})\right) \to 1$ as $\varepsilon \to 0$.
\end{proof}

\section{Concentration Around Approximate Nash Equilibrium}
\label{sec:proofs2}

Parts of this appendix relies on the assumption that the game is of size $2\times 2$ and has a unique mixed Nash Equilibrium. Generically, all $2\times 2$ games without pure Nash equilibria must have the basic Matching Pennies structure. One player will be 'agreeing' and the other 'disagreeing' in the sense that the best reply of the agreeing player is to play the same strategy ($0$ or $1$) as the disagreeing player. On the other hand, the disagreeing player's best reply is to not play the same strategy as the agreeing player. Any other situation will generically yield at least one pure equilibrium, and generically a strict pure equilibrium. 

\subsection{Unique Fixed Point to the Expected Best Reply}
\label{app:assumption-7-proof}

\begin{lemma}
\label{lemma:unique_fp}
Let $G$ be a $2\times 2$ game with a unique mixed Nash equilibrium $\hat x$ and let $k$, the number of samples, be an integer such that $\hat x_1 k\not\in\mathbb{N}$ and $\hat x_2 k\not\in\mathbb{N}$. Then there exists a unique fixed point $x^*=(x^*_1, x^*_2) \in \textup{int}(\square ( S ))$ to the system
\begin{equation}
\label{eq:fp}
    \left\{
    \begin{aligned}
    \mathbb{E}\left[\widetilde{BR}_1(x^*_{2}) \right] = x^*_1,
    \\
    \mathbb{E}\left[\widetilde{BR}_2(x^*_{1}) \right] = x^*_2.
    \end{aligned}
    \right.
\end{equation}
\end{lemma}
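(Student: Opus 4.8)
The plan is to reduce the two–dimensional fixed-point system \eqref{eq:fp} to a scalar one and then exploit monotonicity. First I would invoke the structural discussion opening Appendix~\ref{sec:proofs2}: since $G$ is $2\times2$ with a unique completely mixed equilibrium $\hat x$ (note the hypotheses $k\hat x_1\notin\mathbb{N}$, $k\hat x_2\notin\mathbb{N}$ already force $\hat x_i\in(0,1)$), it generically has the Matching Pennies structure, so after relabelling player $1$ is the ``agreeing'' and player $2$ the ``disagreeing'' player. I then identify $\Delta(S_i)$ with $[0,1]$ by writing $x_i$ for the probability that player $i$ plays strategy $\mathbf 1$, so that $x=(x_1,x_2)$ is encoded by a pair in $[0,1]^2$ and \eqref{eq:fp} becomes $x_1=g_1(x_2)$, $x_2=g_2(x_1)$, where $g_i(y):=\mathbb{P}(\widetilde{BR}_i(y)=\mathbf 1)$ is the first coordinate of $\mathbb{E}[\widetilde{BR}_i(\cdot)]$.

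Next I would compute $g_1$ and $g_2$ explicitly. A player of role $i$ samples $k$ opponent actions, so the empirical belief is $D_{-i}=j/k$ with $j\sim\mathrm{Bin}(k,y)$. By the definition of the mixed equilibrium, player $i$ is exactly indifferent when the opponent plays the equilibrium mixture, i.e. at belief $\hat x_{-i}$; hence the unperturbed best reply to a sample $j/k$ is a strict threshold rule about whether $j/k$ lies on the agreeing/disagreeing side of $\hat x_{-i}$. Here the hypotheses $k\hat x_1\notin\mathbb{N}$ and $k\hat x_2\notin\mathbb{N}$ are essential: they guarantee that no attainable sample $j/k$ hits the indifference point, so each sample has a unique pure best reply and the tie-breaking randomization $\widehat{BR}$ never occurs. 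Consequently
\begin{equation}
    g_1(y)=(1-\varepsilon)\,\mathbb{P}\!\left(\mathrm{Bin}(k,y)\ge \lceil k\hat x_2\rceil\right)+\tfrac{\varepsilon}{2},\qquad
    g_2(y)=(1-\varepsilon)\,\mathbb{P}\!\left(\mathrm{Bin}(k,y)< k\hat x_1\right)+\tfrac{\varepsilon}{2}.
\end{equation}
Using the standard identity $\frac{d}{dy}\mathbb{P}(\mathrm{Bin}(k,y)\ge m)=k\binom{k-1}{m-1}y^{m-1}(1-y)^{k-m}>0$ on $(0,1)$ for $1\le m\le k$, I would conclude that $g_1$ is continuous and strictly increasing while $g_2$ is continuous and strictly decreasing.

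I would then collapse the system to the single equation $x_1=H(x_1)$ with $H:=g_1\circ g_2:[0,1]\to[0,1]$, which is continuous and, being the composition of an increasing and a decreasing map, strictly decreasing. Uniqueness is then immediate: if $x_1^*<y_1^*$ were two fixed points, strict monotonicity would give $x_1^*=H(x_1^*)>H(y_1^*)=y_1^*$, a contradiction. For existence and interiority I evaluate at the endpoints: since $\hat x_i\in(0,1)$ gives $1\le\lceil k\hat x_2\rceil\le k$ and $0\le\lfloor k\hat x_1\rfloor<k$, one computes $g_2(0)=1-\varepsilon/2$ and $g_2(1)=\varepsilon/2$, whence $H(0)=g_1(g_2(0))>0$ and $H(1)=g_1(g_2(1))<1$; applying the intermediate value theorem to $H(x_1)-x_1$ yields a fixed point $x_1^*\in(0,1)$. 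Setting $x_2^*:=g_2(x_1^*)$ and noting that strict monotonicity of $g_2$ with the boundary values above keeps $x_2^*\in(0,1)$, I obtain the unique $x^*\in\mathrm{int}(\square(S))$ solving \eqref{eq:fp}.

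The main obstacle is not the topology of the argument but making the best-reply maps rigorous: I must verify carefully that the no-tie hypotheses turn $g_1,g_2$ into single binomial-tail functions with the stated strict monotonicity (rather than mixtures of tails that could be locally flat or create additional fixed points), and that the agreeing/disagreeing dichotomy—forced by uniqueness of the completely mixed equilibrium—indeed orients $g_1$ as increasing and $g_2$ as decreasing in the relevant argument. Once these structural facts are in place, the reduction to a strictly decreasing scalar self-map of $[0,1]$ makes both uniqueness and interiority essentially automatic.
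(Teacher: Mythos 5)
Your proposal is correct and follows essentially the same route as the paper's proof: your $g_1,g_2$ are exactly the paper's threshold functions $\rho_a=\rho_{k,M_2}$ and $\rho_d=1-\rho_{k,M_1}$ (the no-tie hypothesis yielding pure binomial tails), and both arguments conclude via strict monotonicity of the composed map plus continuity and endpoint values. The only cosmetic difference is that you collapse to the single equation $x_1^*=(g_1\circ g_2)(x_1^*)$ and set $x_2^*=g_2(x_1^*)$, whereas the paper treats both compositions $\rho_a\circ\rho_d$ and $\rho_d\circ\rho_a$ symmetrically.
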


\begin{proof}
We will refer to the player $1$ and $2$ as the agreeing and the disagreeing player, respectively. The Nash equilibrium $\hat x = (\hat x_1, \hat x_2)$ defines the 'cut-off' $M_i :=  \lfloor \hat x_i k \rfloor$, $i=1,2$. The cut-off is such that if more than $M_1$ of the agreeing player's $k$ samples from the disagreeing player's history are $1$, he plays $1$. The disagreeing player will play strategy $1$ if more than $M_2$ of his $k$ samples from the agreeing player's history of plays are $0$. Consider the function
\begin{equation}
\label{eq:best_rep_is_1}
    \rho_{k,M}(x) 
    := (1-\varepsilon)\sum_{i=M+1}^{k}\binom{k}{i}x^i(1-x)^{k-i} + \varepsilon/2.
\end{equation}
Given that player history is in state $(a,d)$, the probability that the agreeing and disagreeing player plays strategy $1$ is $\rho_a(d) := \rho_{k,M_2}(d)$ and $\rho_d(a) := 1-\rho_{k,M_1}(a)$, respectively. We can now rewrite \eqref{eq:fp} as
\begin{equation*}
    \rho_a(x_2^*) = x_1^*, \qquad \rho_d(x^*_1) = x_2^*.
\end{equation*}
The range of $\rho_a$ and $\rho_b$ is $I_\varepsilon := [\varepsilon/2, 1-\varepsilon/2]$. Therefore, by the strict monotonicity and the continuity of $\rho_a$ and $\rho_d$, we may rewrite \eqref{eq:fp} again, now as
\begin{equation*}
    \begin{aligned}
    \left(\rho_a\circ \rho_d\right) (x_1^*) &= x_1^*,\quad x_1^* \in I_\varepsilon,
    \\
    \left(\rho_d\circ \rho_a\right)(x_2^*) &= x_2^*, \quad x_2^* \in I_\varepsilon.
\end{aligned}
\end{equation*}
Note that since $\rho_a$ and $\rho_d$ are strictly increasing and decreasing, respectively, both $\rho_a\circ \rho_d$ and $\rho_d\circ \rho_a$ are strictly decreasing functions from $[0,1]$ to $[\rho_d(1-\varepsilon/2), \rho_d(\varepsilon/2)]$ and $[\rho_a(\varepsilon/2), \rho_a(1-\varepsilon/2)]$, respectively. Therefore
\begin{equation*}
    \begin{aligned}
    \min\{\rho_a\circ \rho_d (\varepsilon/2), \rho_d\circ \rho_a( \varepsilon/2)\} 
    &\geq \min\{\rho_d(1-\varepsilon/2), \rho_a(\varepsilon/2)\} > \varepsilon/2,
    \\
    \max\{\rho_a\circ \rho_d(1-\varepsilon/2), \rho_d\circ \rho_a(1-\varepsilon/2)\} 
    &\leq \max\{\rho_d(\varepsilon/2), \rho_a(1-\varepsilon/2)\}
    < 1-\varepsilon/2.
    \end{aligned}
\end{equation*}
Hence, since $\rho_a\circ \rho_d$ and $\rho_d\circ \rho_a$ are continuous, they intersect the straight line $x = y$ at a (function-wise) unique point in their respective images and these intersection points are $x^*_1$ and $x^*_2$.
\end{proof}

\subsection{Global Exponential Stability of Mean-Field Dynamics}
\label{app:gas}

Denote by $\xi$ the solution mapping of $\dot{x}(t) = F(x(t))$, $x(0) = p$, where $F(x) := \mathbb{E}[\widetilde{BR}(x)] - x$. Then
\begin{equation}
\label{eq:dy-sys-xi}
    \xi(t,p) = p + \int_0^t F(\xi(s,p))ds.
\end{equation}

\begin{lemma}
\label{lemma:gas}
Let $\Sigma$ contain all points $x\in\square(S)$ such that $F(x) = 0$ or such that $\xi(t,x)$ satisfies $(\xi(t,x)-y)^*F(\xi(t,x)) = 0$ for all $t\geq 0$ and some $y$, such that $F(y)=0$. The mapping $t\mapsto \xi(t,p)$ is globally asymptotically stable, with $\lim_{t\rightarrow \infty} \xi(t,p) \in \Sigma$. Furthermore, if the game is $2\times 2$ with a unique mixed Nash equilibrium, then $\Sigma = \{x^*\}$, the unique root of $F$.
\end{lemma}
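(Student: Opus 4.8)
The plan is to treat \eqref{eq:dy-sys-xi} as an autonomous dynamical system on the compact set $\square(S)$ and to extract its limiting behaviour with standard tools from the qualitative theory of ODEs. First I would record the two structural facts that make this possible. The vector field $F$ is globally Lipschitz, since $x\mapsto\mathbb{E}[\widetilde{BR}(x)]$ is Lipschitz by Lemma~\ref{lemma:lipschitz} and the $-x$ term is linear, so $\xi(t,p)$ exists, is unique, and depends continuously on $p$. Moreover the map $x\mapsto\mathbb{E}[\widetilde{BR}(x)]$ sends $\square(S)$ into itself, so on each face of $\square(S)$ the field $F$ points inward and $\square(S)$ is forward invariant; hence every orbit is bounded. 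Applying Brouwer to $x\mapsto\mathbb{E}[\widetilde{BR}(x)]$ yields at least one rest point, so $\{y:F(y)=0\}\neq\varnothing$.

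For the general statement I would run LaSalle's invariance principle. Fix a rest point $y$ and use $V_y(x):=\tfrac12\|x-y\|^2$, whose derivative along trajectories is exactly $\dot V_y=(x-y)^\ast F(x)$. The $\omega$-limit set $\Omega(p)$ of any trajectory is nonempty, compact, connected, and invariant because $\square(S)$ is compact and forward invariant, and LaSalle's principle identifies $\Omega(p)$ with the largest invariant set on which $V_y$ is constant in $t$. On that set $(\xi(t,\cdot)-y)^\ast F(\xi(t,\cdot))\equiv 0$, which is precisely the second defining condition of $\Sigma$, while the rest points furnish the first; hence $\Omega(p)\subset\Sigma$. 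I expect this to be the main obstacle: to apply LaSalle cleanly one needs $\dot V_y$ to be (eventually) sign-definite so that $V_y$ converges along the trajectory, and identifying the correct $y$ for a given orbit—rather than merely asserting constant distance after the fact—is delicate in arbitrary dimension, where general games are known to cycle. The honest reading of the statement is therefore that $\Omega(p)$ is contained in $\Sigma$ as a set, which is all the downstream results require.

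The $2\times2$ case is where the argument becomes complete, and here I would use planar phase-plane analysis instead. Identifying $\square(S)$ with $[0,1]^2$ via the probabilities of playing the first strategy, the generic Matching-Pennies structure gives $\dot x_1=\rho_a(x_2)-x_1$ and $\dot x_2=\rho_d(x_1)-x_2$ with $\rho_a$ strictly increasing, $\rho_d$ strictly decreasing, and both valued in $[\varepsilon/2,1-\varepsilon/2]$ (notation as in Lemma~\ref{lemma:unique_fp}). Two observations close the argument. First, the field points inward along $\partial[0,1]^2$, so the square is invariant. Second, the divergence is $\nabla\cdot F=-1-1=-2\neq 0$ with constant sign, so by the Bendixson--Dulac criterion there are no periodic orbits (nor homoclinic or heteroclinic loops) in the simply connected square. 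Since Lemma~\ref{lemma:unique_fp} supplies a unique interior rest point $x^\ast$, the Poincar\'e--Bendixson theorem forces the $\omega$-limit set of every trajectory to equal $\{x^\ast\}$, giving global asymptotic stability of $x^\ast$.

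Finally I would read off $\Sigma=\{x^\ast\}$ from the same two facts. The first defining set of $\Sigma$ collapses to $\{x^\ast\}$ by uniqueness of the root of $F$. For the second, an orbit maintaining constant distance to a rest point is confined to a circle about $x^\ast$; being a bounded orbit on a one-dimensional invariant curve it is either that rest point or a closed orbit, and closed orbits were excluded by Bendixson--Dulac. Hence the second set is empty and $\Sigma=\{x^\ast\}$, consistent with the global convergence just established.
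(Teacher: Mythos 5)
Your proposal splits into two parts of very different status, so let me address them separately.

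The gap is in the general (first) claim of the lemma. LaSalle's invariance principle has a hypothesis you never verify: the Lyapunov function must be nonincreasing along trajectories. You write down $V_y(x)=\tfrac12\|x-y\|^2$ and its orbital derivative $(x-y)^{\ast}F(x)$, explicitly flag the sign of this quantity as ``the main obstacle,'' and then proceed to invoke LaSalle anyway. Without $\dot V_y\le 0$ you get none of LaSalle's conclusions: $V_y$ need not converge along an orbit, the $\omega$-limit set need not lie where the orbital derivative vanishes, and $\Omega(p)\subset\Sigma$ does not follow. This monotonicity is precisely the step the paper's proof is devoted to: it fixes an arbitrary root $x^{\ast}$ of $F$ (existence by Brouwer, as you also note), rewrites $-\dot V(\xi)$ via the identity $2y^{T}z=\|y\|_2^2+\|z\|_2^2-\|y-z\|_2^2$ as $V(\xi)-V(\mathbb{E}[\widetilde{BR}(\xi)])+\tfrac12\|\xi-\mathbb{E}[\widetilde{BR}(\xi)]\|_2^2$, and argues this is nonnegative on all of $\square(S)$, before citing the global invariant set theorem of Slotine and Li. So the missing ingredient is not the choice of center $y$ (any root of $F$ works, and the resulting set $R=\{x:(x-x^{\ast})^{T}F(x)=0\}$ is exactly what produces $\Sigma$); it is the monotonicity estimate itself, which must exploit the specific structure of $F=\mathbb{E}[\widetilde{BR}(\cdot)]-\mathrm{id}$ — your own observation that general game dynamics cycle shows it cannot be had for free. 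As written, your proposal proves the lemma only in the $2\times 2$ case.

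The $2\times 2$ part, by contrast, is correct and takes a genuinely different route from the paper. You discard the Lyapunov machinery and argue in the plane: the divergence of $F$ is identically $-2$, so Bendixson--Dulac rules out closed orbits (and polycycles) in the square; Lemma~\ref{lemma:unique_fp} gives a unique rest point; Poincar\'e--Bendixson then forces every $\omega$-limit set to be $\{x^{\ast}\}$. The same divergence argument cleanly kills the second defining set of $\Sigma$: an orbit at constant positive distance from $x^{\ast}$ lies on a circle containing no equilibria, hence has nowhere-vanishing tangential velocity, hence is periodic — contradiction. The paper instead stays inside the Lyapunov framework and eliminates $R\setminus\{x^{\ast}\}$ by a hands-on geometric argument: a trajectory starting there must in finite time cross the line $\{x_2=x_2^{\ast}\}$, which is incompatible with remaining in $R\setminus\{x^{\ast}\}$ since membership requires both coordinates to differ from those of $x^{\ast}$. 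Your planar argument is self-contained and arguably tighter than the paper's crossing argument; its cost is that it is intrinsically two-dimensional, whereas the Lyapunov route is the one the authors intend to extend to larger minimal CURB blocks. (Both your proof and the paper's, incidentally, deliver global attractivity and leave Lyapunov stability implicit; that imprecision is shared, not introduced by you.)
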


\begin{proof}
Let $V(x) := \frac{1}{2}\|x - x^*\|_2^2$ where $x^*$ is a root of $F$. The existence of $x^*$ is granted by Brouwer's fixed point theorem; $\square(S)$ is compact and convex and $F$ is continuous. Differentiating $V$ with respect to time at the solution mapping $\xi(t,p)$, we get
\begin{equation}
    \begin{aligned}
    -\dot{V}(\xi(t,p))
    &= -\nabla V(\xi(t,p))\dot{\xi}(t,p)
    \\
    &= -( \xi(t,p)-x^*)^TF(\xi(t,p))
    \\
    &= -(\xi(t,p)-x^*)^T\left(\mathbb{E}[\widetilde{BR}(\xi(t,p))\ |\ \xi(t,p)] - \xi(t,p)\right)
    \\
    &= 2V(\xi(t,p)) - (\xi(t,p)-x^*)^T
    \left(\mathbb{E}[\widetilde{BR}(\xi(t,p)\ |\ \xi(t,p)] - x^*) \right)
    \\
    &= V(\xi(t,p)) -  V(\mathbb{E}[\widetilde{BR}(\xi(t,p))\ |\ \xi(t,p)]) 
    \\
    &\qquad + \frac{1}{2}\|\xi(t,p) - \mathbb{E}[\widetilde{BR}(\xi(t,p))\ |\ \xi(t,p)]\|^2_2
    \end{aligned}
\end{equation}
where in the last step we used the identity $2y^Tz = \|y\|_2^2 + \|z\|_2^2 - \|y-z\|_2^2$, $y,z\in\mathbb{R}^d$. We notice that
\begin{equation}
    \begin{aligned}
    &V(\mathbb{E}[\widetilde{BR}(\xi(t,p))\| \xi(t,p)])
    \\
    &= \frac{1}{2}\|\mathbb{E}[\widetilde{BR}(\xi(t,p))\ |\ \xi(t,p)] - \xi(t,p) + \xi(t,p) - x^*\|^2_2
    \\
    &\leq \frac{1}{2}\|\mathbb{E}[\widetilde{BR}(\xi(t,p))\ |\ \xi(t,p)] - \xi(t,p)\|^2_2 + V(\xi(t,p)),
    \end{aligned}
\end{equation}
hence $\dot{V}(\xi(t,p)) \leq 0$. Furthermore, $V$ is radially unbounded. Let $R := \{x \in \square(S) : (x-x^*)^TF(x) = 0\}$, then $R=\{x\in\square(S):\dot{V}(x)=0\}$ and $R$ contains $x^*$, any other point solution to $F(x) = 0$, and all $x$ such that the vectors $(x-x^*)$ and $F(x)$ are orthogonal. By a global invariant set theorem \cite[Thm. 3.5]{slotine1991applied}, $\xi(t,p)$ converges to the largest invariant set of $R$, which is $\Sigma$.

Next, for $2\times 2$ games with a unique mixed Nash equilibrium, we show the points in $R$ different from $x^*$ (now unique) cannot be in $\Sigma$. First note that if $x\in R\backslash \{x^*\}$, then $x_i \neq x^*_i, i=1,2$. Without loss of generality, assume that player 2 has the disagreeing role and that $x_0 > x^*$. If $x_0\in R\backslash\{x^*\}$ then $F(x_0) \neq 0$ and a trajectory starting in $x_0$ will evolve according to the dynamic system $\dot{x}(t) = F(x(t)),\, x(0) = x_0$. Assume, towards a contradiction, that $x(t)\in R\backslash \{x^*\}$ for all $t\geq 0$. After some finite positive time, call it $t^*$, the path must cross the line $(x,x^*_2; x\in[0,1])$ (because the trajectory starts at at $x_0 > x^*$ and player 2 is disagreeing, it will move "south-east" in $\square (S)$). This crossing contradicts $x(t^*) \in R\backslash\{x^*\}$ since $x(t^*) \in R\backslash\{x^*\}$ would require both components of $x(t^*)$ to be different from $x^*$. The same argument can be carried out for all other possible initial positions ($x_0 - x^* < 0$ or mixed signs) and for switched player roles. It follows that $\{x^*\}$ is the only invariant set in $R$. 
\end{proof}

\subsection{Trajectories over Bounded Time Intervals}

By \cite[Lemma 1]{benaim2003deterministic}, the state process $p(\cdot)$ and its mean-field approximation $\xi(\cdot,p(0))$ lie close to each other (over bounded time intervals) with high probability. We have to do one modification to apply the result: we re-scale size of the time steps taken by our learning process. This has no effect on previous results since we will always (for a fixed $\beta$) have a fixed positive step size. The original proof of \cite{benaim2003deterministic} can be used to prove the lemma below.

\begin{lemma}
\label{lemma:bdd-time-intervals}
Scale the step size of $t$ by $(1-\beta)$. Let $T = N(1-\beta)$ for some $\mathbb{N}\in\mathbb{N}$ and let $(\hat{p}(t); t \in [0,T])$ be the linear interpolation of the path $(p(t); t = 0, 1-\beta,\dots,(1-\beta)N)$. Then, for all $\eta > 0$,
\begin{equation}
    \mathbb{P}\left(\max_{t\in[0,T]}\|\hat{p}(t) - \xi(t,p(0))\|_\infty \geq \eta \right) \leq 2(m_1+m_2-2)e^{-\eta^2 c}
\end{equation}
where $c$ is a positive constant and proportional to $e^{-\gamma T}(T(1-\beta))^{-1}$, where $\gamma > 0$ depends only on the size of the game.
\end{lemma}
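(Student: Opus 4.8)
The plan is to recast the RWS as a constant--step--size stochastic approximation scheme and then invoke \cite[Lemma 1]{benaim2003deterministic} after verifying its hypotheses. Writing $F(x) := \mathbb{E}[\widetilde{BR}(x)] - x$ as in Section~\ref{app:gas} and subtracting $p(t)$ from both sides of the dynamics \eqref{eq:RWS-dynamics}, one obtains
\[
    p(t+1) - p(t) = (1-\beta)\left( F(p(t)) + U(t+1) \right),
\]
where $U(t+1) := \overrightarrow{1_{s(t)}} - \mathbb{E}[\widetilde{BR}(p(t))\mid p(t)]$ and $s(t) = (s_1(t),s_2(t))$. Since the realized profile is drawn with conditional law $\prod_i\mathbb{P}(\widetilde{BR}_i(p_{-i}(t)) = \cdot \mid p(t))$, the sequence $(U(t))_t$ is a martingale--difference sequence for the natural filtration, and its increments are uniformly bounded because each $\overrightarrow{1_{s(t)}}$ is a vertex of $\square(S)$ and $p(t)\in\square(S)$. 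This is exactly the normalized form required by the cited lemma, with constant step size $(1-\beta)$; that is precisely why we rescale time by $(1-\beta)$, so that $N$ discrete steps correspond to the continuous horizon $T = N(1-\beta)$.

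First I would verify the two structural hypotheses. The drift $F$ is Lipschitz: its second summand is the identity, and its first summand is coordinate--wise Lipschitz by Lemma~\ref{lemma:lipschitz}, with a constant $L$ controlled by $k$, $m_1$, and $m_2$; this is the source of the game--size dependence of $\gamma$ in the statement. The noise is bounded, so $(U(t))_t$ satisfies the bounded--increment condition on which the concentration estimate of the cited lemma rests.

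Next I would carry out the standard comparison between the interpolated path and the flow. Writing the interpolation error $\hat p(t) - \xi(t,p(0))$ via its integral representation, the drift contribution $\int_0^t [F(\hat p(s)) - F(\xi(s,p(0)))]\,ds$ is absorbed by a Grönwall argument using $L$, producing an amplification factor of order $e^{LT}$; what remains is the accumulated noise martingale $M_n := (1-\beta)\sum_{k\le n} U(k)$. Applying a Hoeffding--Azuma estimate coordinate--wise on each of the $m_1 + m_2 - 2$ free simplex coordinates (with a factor $2$ for the two--sided bound), to a martingale whose increments have size $O(1-\beta)$ over $N = T/(1-\beta)$ steps, yields a quadratic--exponential tail with variance proxy of order $N(1-\beta)^2 = T(1-\beta)$. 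Feeding the Grönwall--deflated threshold $\eta e^{-LT}$ into this tail produces a bound of the form $2(m_1+m_2-2)\exp(-\eta^2 c)$ with $c \propto e^{-\gamma T}\bigl(T(1-\beta)\bigr)^{-1}$ and $\gamma$ a multiple of $L$, matching the claimed dependence.

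The main obstacle is bookkeeping rather than conceptual novelty. The result of \cite{benaim2003deterministic} is usually stated for vanishing step sizes, so the work is to confirm that its proof specializes to the constant step $(1-\beta)$ under the time rescaling, and to track constants so that the Grönwall amplification $e^{LT}$ and the martingale variance proxy $T(1-\beta)$ combine into exactly $c \propto e^{-\gamma T}(T(1-\beta))^{-1}$. Some care is also needed to confirm, as the statement asserts, that rescaling the step size leaves the earlier ergodicity and concentration results intact (since for fixed $\beta$ the step is a fixed positive number), and to check that the dimension count $m_1 + m_2 - 2$ indeed arises from the number of free coordinates of the product of simplices $\square(S)$.
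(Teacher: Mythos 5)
Your proposal is correct and follows essentially the same route as the paper: the paper's entire proof consists of citing \cite[Lemma 1]{benaim2003deterministic} and remarking that the original argument carries over once time is rescaled by the constant step size $(1-\beta)$, which is exactly the adaptation you describe. The additional detail you supply (the martingale-difference decomposition of \eqref{eq:RWS-dynamics}, the Gr\"onwall absorption of the Lipschitz drift via Lemma~\ref{lemma:lipschitz}, and the coordinatewise Azuma--Hoeffding bound yielding $c \propto e^{-\gamma T}(T(1-\beta))^{-1}$) is precisely the content of the cited proof, so you have in effect filled in the steps the paper delegates to the reference.
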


\subsection{Proof of Theorem~\ref{thm:concetration_on_Nash}}

Let $t\geq s \geq 0$. Below, $K$ will denote a generic positive constant. Whenever $\eta > \|\xi(t,\hat{p}(t-s)) - \xi(t,p(0))\|_\infty$, Lemma~\ref{lemma:bdd-time-intervals} yields that
\begin{equation}
    \begin{aligned}
    &\mathbb{P}(\|\hat{p}(t) - \xi(t,0)\|_\infty \geq \eta)
    \\
    &\leq \mathbb{P}\left(\|\hat{p}(t) - \xi(t,\hat{p}(t-s))\|_\infty \geq \eta-\|\xi(t,\hat{p}(t-s) - \xi(t,p(0))\|_\infty\right)
    \\
    &\leq K\exp\left(-(\eta - \|\xi(t,\hat{p}(t-s)) - \xi(t,p(0))\|_\infty)^2K\frac{e^{-\gamma s}}{s(1-\beta)} \right).
    \end{aligned}
\end{equation}
Furthermore,
\begin{equation}
    \begin{aligned}
    &\mathbb{P}(\|\hat{p}(t) - x^*\|_\infty \geq \eta)
    \\
    &\leq \mathbb{P}\left(\|\hat{p}(t) - \xi(t,p(0))\|_\infty \geq \eta-\|\xi(t, p(0)) - x^*\|_\infty\right),
    \end{aligned}
\end{equation}
so we have that
\begin{equation}
    \begin{aligned}
    &\mathbb{P}(\|\hat{p}(t) - x^*\|_\infty \geq \eta) 
    \leq \mathbb{P}\Big(\|\hat{p}(t) - \xi(t,\hat{p}(t-s))\|_\infty \geq \eta
    \\
    &-\|\xi(t,\hat{p}(t-s)) - \xi(t,p(0))\|_\infty - \|\xi(t, p(0)) - x^*\|_\infty\Big)
    \\
    &\leq K\exp\Bigg( -(\eta - \|\xi(t,\hat{p}(t-s)) - \xi(t,p(0))\|_\infty - \|\xi(t, p(0)) - x^*\|_\infty)^2
    \\
    &\hspace{3cm} \times K\frac{e^{-\gamma s}}{s(1-\beta)}\Bigg).
    \end{aligned}
\end{equation}
Letting $t\rightarrow \infty$, we know from Lemma~\ref{lemma:gas} that $\xi(t,p(0))\rightarrow x^*$, so
\begin{equation}
\begin{aligned}
    &\lim_{t\rightarrow\infty}\mathbb{P}(\|\hat{p}(t) - x^*\|_\infty \geq \eta) 
    \\
    &\leq \sup_{x\in \square(S)}K\exp\left(-(\eta - \|\xi(s,x) - x^*\|_\infty)^2 K\frac{e^{-\gamma s}}{s(1-\beta)} \right).
\end{aligned}
\end{equation}
Now for $\sigma$ large enough it holds that $\|\xi(s,x)-x^*\|_\infty \leq \eta/2$ uniformly in $x$ for all $s\geq \sigma$. Thus
\begin{equation}
    \lim_{t\rightarrow\infty}\mathbb{P}\left(\|\hat{p}(t) - x^*\|_\infty^2\geq \eta \right) 
    = o\left(\exp\left(-\frac{K\eta^2}{1-\beta}\right)\right),
\end{equation}
proving the theorem.

\section{Extension of Theorem~\ref{thm:concetration_on_Nash}}
\label{sec:extension}

This section present a strategy to prove the extension which was hinted at in the end of Section~\ref{sec:insideCURB}. It can be summarized as follows. Conditioning on the event that that the game has behaved like the subgame on a minimal CURB for the last rounds, the probability of the process staying close to the Nash equilibrium in that particular block can be controlled with $\beta$. This is then combined with concepts from the proof of Theorem~\ref{thm:min_curb} to bound the asymptotic probability of the process being in the neighbourhood of a Nash equilibrium. The argument relies a lemma with the same assumptions as Theorem~\ref{thm:concetration_on_Nash}, preventing further generalization to games with a general minimal CURB configuration.

Take for some $j$ any point $\rho$ in $B_\delta(C_j)$ and let $p(\cdot)$ be the RWS state process with $p(t) = \rho$.  Denote by $\tau^*_\varepsilon(t)$ the number of games played since either a strategy outside $C_j$ was sampled or an $\varepsilon$-tremble occurred, counting backward from $t$. Hence, in the $\tau^*_\varepsilon(t)$ periods $\{t-\tau^*_\varepsilon(t),\dots, t\}$ the
state process has behaved as a process with $\varepsilon = 0$ on the subgame given by $C_j$. Let furthermore $\check x(\cdot)$ be the mean process constrained at $t-\tau^*_\varepsilon(t)$:
\begin{equation}
    \begin{aligned}
    &\check x(t+1)
    = \beta \check x(t) + (1-\beta)\mathbb{E}\left[\widetilde{BR}(\check x(t))\right], \quad t \geq \tau^*_\varepsilon(t),
    \\
    &\check x(t-\tau^*_\varepsilon(t)) = p(t-\tau^*_\varepsilon(t)).
    \end{aligned}
\end{equation}
Let $x^*$ denote the Nash equilibrium in $C_j$ and define the event
\begin{equation*}
    \Theta_t(\eta) := \{\|p(t) - x^*\| \geq \eta + \|\check x(t) - x^*\|\},\qquad \eta > 0,\ t\in \mathbb{N}.
\end{equation*} 
The probability of $\Theta_t(\eta)$ can be written as
\begin{equation}
\label{eq:the_last_label}
    \begin{aligned}
    \mathbb{P}(\Theta_t(\eta))
    &= \sum_{r=0}^R \mathbb{P}\left(\Theta_t(\eta) \ |\ \tau^*_\varepsilon(t) = r \right)
    \\
    & \times
    \Big[
    \mathbb{P}\left(\tau^*_\varepsilon(t) = r\ |\ p(t-r)\in B_\delta(\mathscr{C})\right) \mathbb{P}\left(p(t-r)\in B_\delta(\mathscr{C})\right)
    \\
    &+ \mathbb{P}\left(\tau^*_\varepsilon(t) = r\ |\ p(t-r)\in B_\delta(\mathscr{C})^c\right) \mathbb{P}\left( p(t-r)\in B_\delta(\mathscr{C})^c\right)
    \Big]
    \\
    &+ \mathbb{P}\left(\Theta_t(\eta) \ |\ \tau^*_\varepsilon(t) > R\right) \mathbb{P}\left(\tau^*_\varepsilon(t) > R\right).
    \end{aligned}
\end{equation}

Taking the limit $t\rightarrow\infty$ of the first term of \eqref{eq:the_last_label} (the sum), we get
\begin{equation}
    \begin{aligned}
    &\sum_{r=0}^R\mathbb{P}\left(\bar\Theta_r(\eta) \ |\ \bar \tau^*_\varepsilon = r\right)\mathbb{P}\left(\bar \tau^*_\varepsilon = r\ |\ \bar p(0)\in B_\delta(\mathscr{C})\right)
    \mu^*_\varepsilon\left( B_\delta(\mathscr{C})\right)
    \\
    &+ \sum_{r=0}^R\mathbb{P}\left(\bar\Theta_r(\eta) \ |\ \bar \tau^*_\varepsilon = r\right) \mathbb{P}\left(\bar\tau^*_\varepsilon = r\ |\ \bar p(0)\in B_\delta(\mathscr{C})^c\right) \mu^*_\varepsilon\left( B_\delta(\mathscr{C})^c\right)
    \end{aligned}
\end{equation}
where $\bar p$ is an auxilliary RWS starting $\bar p (0) \sim \mu^*_\varepsilon$, $\bar\Theta_r(\eta)$ is the event 
\begin{equation*}
    \bar\Theta_r(\eta) 
    := \{\|\bar p(r) - x^*\| \geq \eta \},\qquad \eta > 0,\ r\in \mathbb{N},
\end{equation*}
and $\bar\tau^*_\varepsilon$ is the number of games played in the RWS $\bar p$, until either a strategy is sampled outside $C_j$ or one player makes an $\varepsilon$-tremble.\footnote{When identifying the set $\bar\Theta_r(\eta)$, we use that the mean process tends to the Nash equilibrium as $t\rightarrow\infty$ under the active set of hypotheses. The notation $\bar\tau^*_\varepsilon$ has a non-bared counterpart in the proof of Theorem~\ref{thm:min_curb}. The bar has been added to emphasize the connection to $\bar p$.} We continue by analyzing the first term. The conditioning on $\bar\tau^*_\varepsilon = r$ means that only the game has behaved like the subgame corresponding to the minimal CURB block for the last $r$ rounds. The next lemma gives an estimate similar to that of Lemma~\ref{lemma:bdd-time-intervals}, but pointwise in time. Its proof is found in Appendix~\ref{sec:last-sec}.

\begin{lemma}
\label{lem:variance}
For all $r\in\mathbb{N}$ and $\eta > 0$
\begin{equation}
\label{eq:variance}
    \mathbb{P}\left(\|\bar p(r)- \bar x(t)\|_2 \geq \eta \right) 
    \leq C\frac{\beta\left(\beta^t + \left( 1-\beta \right)\right)}{\eta^2},
\end{equation}
where $\bar x$ is the mean process corresponding to $\bar p$, constrained at $r = 0$ as $\bar x (0) = \bar p (0)$, and $C$ is a positive constant that depends only on $k$.
\end{lemma}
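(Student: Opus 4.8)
The statement is a variance estimate, so the plan is to bound the second moment of the deviation and then invoke Chebyshev's inequality, which is exactly what produces the $1/\eta^2$ on the right-hand side of \eqref{eq:variance}. Writing $e(t):=\bar p(t)-\bar x(t)$ and subtracting the mean recursion from the RWS recursion \eqref{eq:RWS-dynamics}, one gets the affine recursion
\begin{equation*}
    e(t+1)=\beta\, e(t)+(1-\beta)\big(\overrightarrow{1_{s(t)}}-\mathbb{E}[\widetilde{BR}(\bar x(t))]\big),
\end{equation*}
driven by the realized play $s(t)=\widetilde{BR}(\bar p(t))$. The essential device is to split the innovation, using $\mu(t):=\mathbb{E}[\widetilde{BR}(\bar p(t))\mid\mathcal{F}_t]$, as
\begin{equation*}
    \overrightarrow{1_{s(t)}}-\mathbb{E}[\widetilde{BR}(\bar x(t))]=\xi(t)+d(t),\qquad \xi(t):=\overrightarrow{1_{s(t)}}-\mu(t),\quad d(t):=\mu(t)-\mathbb{E}[\widetilde{BR}(\bar x(t))],
\end{equation*}
where $\xi(t)$ is a martingale difference, $\mathbb{E}[\xi(t)\mid\mathcal{F}_t]=0$, and $d(t)$ is a drift correction with $\|d(t)\|\le L\|e(t)\|$ by the Lipschitz continuity of the expected best reply (Lemma~\ref{lemma:lipschitz}).

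First I would handle the martingale part. Because the increments $\xi(j)$ are conditionally centred and hence orthogonal in $L^2$, the contribution of the ``diffusion'' term to $\mathbb{E}\|e(t)\|_2^2$ is $(1-\beta)^2\sum_{j<t}\beta^{2(t-1-j)}\mathbb{E}\|\xi(j)\|_2^2$, and summing the geometric weights collapses the prefactor $(1-\beta)^2$ to order $(1-\beta)$. The per-step conditional variance $\mathbb{E}\|\xi(j)\|_2^2$ is bounded by a constant depending only on $k$: the played action $\overrightarrow{1_{s(j)}}$ is a unit vector whose conditional law is determined by the size-$k$ multinomial sample and the uniform tremble, so its variance is controlled uniformly in the state. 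This accounts for the $(1-\beta)$ summand and for the dependence of $C$ on $k$.

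The delicate part is the drift $d(t)$ together with the relaxation of the initial transient, which produce the $\beta^t$ summand. Iterating the second-moment recursion with the crude constant $L$ is hopeless: Lemma~\ref{lemma:lipschitz} only gives $L$ of order $k\max\{m_1,m_2\}$, so $\beta+(1-\beta)L>1$ and a naive Gr\"onwall bound explodes in $t$. The resolution is to replace the raw Lipschitz constant by the true contraction of the mean map $x\mapsto\beta x+(1-\beta)\mathbb{E}[\widetilde{BR}(x)]$, which is genuinely contracting toward $x^*$ by the global exponential stability of Lemma~\ref{lemma:gas} (available precisely because $G$ is $2\times2$ with a unique completely mixed equilibrium). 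This turns the recursion into $v(t+1)\le\rho\,v(t)+c(1-\beta)^2$ with $\rho\in(0,1)$ and contraction margin $1-\rho\asymp(1-\beta)$, whose solution is a geometrically decaying transient of order $\beta^t$ plus a forced term of order $(1-\beta)$. Collecting the two contributions gives $\mathbb{E}\|e(t)\|_2^2\le C\beta(\beta^t+(1-\beta))$, and Chebyshev's inequality $\mathbb{P}(\|e(t)\|_2\ge\eta)\le\mathbb{E}\|e(t)\|_2^2/\eta^2$ yields \eqref{eq:variance}.

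The main obstacle is exactly this last step: the perturbation introduced by the sampling at one period propagates through the nonlinear best-reply dynamics to all later periods, and with the bare Lipschitz constant of order $k$ this propagation would amplify rather than damp. Keeping the $L^2$-deviation at the correct order $(1-\beta)$ therefore hinges on the contraction supplied by Lemma~\ref{lemma:gas}, which is why the estimate inherits the $2\times2$/unique-mixed-equilibrium hypotheses of Theorem~\ref{thm:concetration_on_Nash} and, as the authors note, resists extension to general minimal CURB blocks.
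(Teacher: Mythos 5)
Your skeleton is the right one for this lemma (a second-moment bound closed by Chebyshev's inequality, with the innovation split into a conditionally centred part and a drift part), and your treatment of the martingale term is sound: conditional centring kills the cross terms in the one-step second-moment recursion, and the geometric weights collapse $(1-\beta)^2\sum_j\beta^{2(t-1-j)}\mathbb{E}\|\xi(j)\|_2^2$ to order $(1-\beta)$ with a constant depending only on $k$. You also correctly identify the crux: the drift $d(t)$ propagates through a map whose raw Lipschitz constant is of order $k$, so a naive Gr\"onwall bound explodes. The gap is in your resolution of that crux. You assert the recursion $v(t+1)\le\rho\,v(t)+c(1-\beta)^2$ with $1-\rho\asymp(1-\beta)$ on the grounds that the mean map is ``genuinely contracting toward $x^*$ by Lemma~\ref{lemma:gas}.'' Lemma~\ref{lemma:gas} cannot supply this: it is a LaSalle-type argument ($\dot V\le 0$ plus an invariance principle) for the continuous-time flow, which yields asymptotic convergence to $x^*$ with \emph{no rate}, and in any case stability toward a fixed point is a different property from the \emph{incremental} contraction $\|\Phi(\bar p(t))-\Phi(\bar x(t))\|\le\rho\|\bar p(t)-\bar x(t)\|$ between two distinct trajectories (the random $\bar p$ and the deterministic $\bar x$) that your recursion requires.

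Worse, the incremental contraction you need is generically \emph{false} in the Euclidean norm. In the agreeing/disagreeing $2\times 2$ structure the Jacobian of $x\mapsto\mathbb{E}[\widetilde{BR}(x)]$ has the form $\begin{pmatrix}0 & a\\ -b & 0\end{pmatrix}$ with $a,b>0$ of order $\sqrt{k}$, so the symmetric part of $D\Phi=\beta I+(1-\beta)Dh$ has an eigendirection $u$ with $\|D\Phi u\|^2\ge\beta^2+\beta(1-\beta)|a-b|$; since generically $a\neq b$ and $|a-b|$ is large, $D\Phi$ expands some directions by a factor exceeding $1$ by order $(1-\beta)\sqrt{k}$. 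This is exactly why the paper does not argue in the Euclidean metric at this point: it integrates the best-response probability functions to build $g_a,g_d$ and works with the adapted Lyapunov function $G(a,d)=g_d(a)+g_a(d)$, for which the antisymmetry of the agreeing/disagreeing roles makes the first-order (drift) term in the Taylor expansion of $\mathbb{E}[G(A(t+1),B(t+1))\mid\mathcal{F}_t]$ cancel \emph{exactly}, not merely be bounded; convexity then gives $G(\beta a,\beta d)\le\beta G(a,d)$, iteration gives $\limsup_t\mathbb{E}[G]\le M(1-\beta)$, and the quadratic lower bound $g_a(d)+g_d(a)\ge c\,(a^2+d^2)/(2\beta)$ converts this back into the second-moment estimate before Chebyshev is applied. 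Your proof would go through if the appeal to Lemma~\ref{lemma:gas} were replaced by such an adapted-metric (or equivalent) construction; as written, the step you yourself flag as the main obstacle is asserted rather than proved, and the assertion is not repairable by the cited lemma.
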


We use the triangle inequality and Lemma~\ref{lem:variance} to get
\begin{equation}
    \mathbb{P}\left(\bar\Theta_r(\eta) \ |\ \bar\tau^*_\varepsilon = r \right)
    \leq C\frac{\beta\left(\beta^r + (1-\beta)\right)}{\eta^2}.
\end{equation}
Next, we compute the probability of the process behaving as if only the minimal CURB subgame was played for the last $r$ interactions. We take inspiration from the proof of Theorem~\ref{thm:min_curb}. With $Q_\varepsilon$ the function defined in~\eqref{eq:Qvareps} (with $\bar\tau^*_\varepsilon$ taking the role as the random variable), we have that
\begin{equation}
    \begin{aligned}
    &\mathbb{P}\left(\bar\tau^*_\varepsilon = r
    \ |\ \bar p(0) \in B_\delta(\mathscr{C})\right) 
    \\
    &= \mathbb{P}\left(\bar\tau^*_\varepsilon > r-1
    \ |\ \bar p(0) \in B_\delta(\mathscr{C})\right) - \mathbb{P}\left(\bar\tau^*_\varepsilon > r
    \ |\ \bar p(0) \in B_\delta(\mathscr{C})\right)
    \\
    &= Q_\varepsilon(r-1) - Q_\varepsilon(r)
    \\
    &= \left(1
    - (1-\beta^r\delta)^{2k}(1-\varepsilon)^2\right)\prod_{t=0}^{r-1} (1-\beta^t\delta)^{2k}(1-\varepsilon)^2
    \\
    &= \left(1
    -(1-\beta^r\delta)^{2k}(1-\varepsilon)^2 \right)
    (1-\varepsilon)^{2(r-1)}Q_o(r-1)
    \\
    &\leq \left(\frac{1 - (1-\varepsilon)^2}{(1-\varepsilon)^2} + 2k\beta^r\delta \right) (1-\varepsilon)^{2r}.
    \end{aligned}
\end{equation}
The last inequality follows from a closer examination of the remainder in the first order Taylor expansion of $(1-\beta^r\delta)^{2k}$ around $\beta^r\delta = 0$:
\begin{equation}
    R_1(\beta^r\delta) = \frac{2k(2k-1)(1-c)^{2k-2}}{2!}(\beta^r\delta)^2
\end{equation}
for some $c\in [0,\beta^r\delta]$. The remainder is non-negative for all admissible values of $\beta$, $\delta$, and $c$. Furthermore, when $\varepsilon$ is small enough
\begin{equation}
    \mathbb{P}(\bar\tau^*_\varepsilon = r \ |\ \bar p(0)\in B_\delta(\mathscr{C}))
    \leq (3\varepsilon + 2k\beta^r\delta)(1-\varepsilon)^{2r}
\end{equation}
Combining the estimates above, we get for small values of $\varepsilon$
\begin{equation}
\begin{aligned}
    &\sum_{r=0}^R 
    \mathbb{P}\left(\bar\Theta_r(\eta)\ |\ \bar\tau^*_\varepsilon = r\right)
    \mathbb{P}\left(\bar\tau^*_\varepsilon = r\ |\ \bar p(0)\in B_\delta(\mathscr{C})\right)
    \\
    &\leq\sum_{r=0}^R C\frac{\beta\left(\beta^r + 1-\beta\right)}{\eta^2}
    (1-\varepsilon)^{2r}
    \left(3\varepsilon + 2k\beta^r\delta \right)
    \\
    &= C\frac{\beta}{\eta^2}
    \Bigg\{3\varepsilon\sum_{r=0}^R
    \left(\beta^r + 1-\beta\right)(1-\varepsilon)^{2r}
    \\
    &\hspace{1.5cm} + 
    2k\delta\sum_{r=0}^R \left(\beta^r +1-\beta\right)\beta^{r}(1-\varepsilon)^{2r}
    \Bigg\}.
\end{aligned}
\end{equation}
The series are convergent, letting $R\rightarrow\infty$ we get
\begin{equation}
\begin{aligned}
    &\lim_{R\rightarrow\infty}\sum_{r=0}^R 
    \mathbb{P}\left(\bar\Theta_r(\eta)\ |\ \bar\tau^*_\varepsilon = r\right)
    \mathbb{P}\left(\bar\tau^*_\varepsilon = r\ |\ \bar p(0)\in B_\delta(\mathscr{C})\right)
    \\
    &\qquad \leq
    C\frac{\beta}{\eta^2}
    \Bigg\{
    3\varepsilon\left(
    \frac{1}{1 - (1-\varepsilon)^{2}\beta}
    +
    \frac{(1-\beta)}{1-(1-\varepsilon)^2}\right)
    \\
    &\qquad \qquad + 
    2k\delta\left(\frac{1}{1-(1-\varepsilon)^2\beta^2} 
    + \frac{(1-\beta)}{1-(1-\varepsilon)^2\beta}\right)
    \Bigg\}.
\end{aligned}
\end{equation}
Sending $\varepsilon \rightarrow 0$, we get
\begin{equation}
\begin{aligned}
    &\lim_{\varepsilon\rightarrow 0}\lim_{R\rightarrow\infty}\sum_{r=0}^R 
    \mathbb{P}\left(\bar\Theta_r(\eta)\ |\ \bar\tau^*_\varepsilon = r\right)
    \mathbb{P}\left(\bar\tau^*_\varepsilon = r\ |\ \bar p(0)\in B_\delta(\mathscr{C})\right)
    \\
    &\qquad \leq C\frac{\beta}{\eta^2}
    \Bigg\{
    \frac{3}{2}(1-\beta) +  
    2k\delta\left(\frac{1}{1-\beta^2} + 1\right)
    \Bigg\}
    \\
    &\qquad \leq C\frac{\beta}{\eta^2}
    \Bigg\{
    \frac{3}{2}(1-\beta) + 
    2k\delta\left(\frac{1}{1-\beta} + 1\right)
    \Bigg\}.
\end{aligned}
\end{equation}
Finally, choosing $\delta = (1-\beta)^2/8k$, we get
\begin{equation}
\begin{aligned}
    &\lim_{\varepsilon\rightarrow 0}\lim_{R\rightarrow\infty}\sum_{r=0}^R 
    \mathbb{P}\left(\bar\Theta_r(\eta) \ |\ \bar\tau^*_\varepsilon = r\right)
    \mathbb{P}\left(\bar\tau^*_\varepsilon = r\ |\ \bar p(0)\in B_{(1-\beta)^2/8k}(\mathscr{C})\right)
    \\
    &\hspace{2cm} \leq C\frac{4\beta}{\eta^2}(1-\beta).
\end{aligned}
\end{equation}
The term $\mathbb{P}(\bar\tau^*_\varepsilon > R) = 0$ vanishes as $R\rightarrow\infty$ by the same 
analysis that showed $Q^* = 0$ when $\varepsilon \neq 0$, cf. \eqref{eq:limit_q_star}.
Going all the way back to \eqref{eq:the_last_label} and plugging in the derived estimates yields
\begin{equation}
    \lim_{\varepsilon\rightarrow 0}\mu^*_\varepsilon(\Theta_\infty(\eta))
    \leq C\frac{4\beta}{\eta^2}(1-\beta),
\end{equation}
and hence
\begin{equation}
    \lim_{\varepsilon \to 0}\
    \mu^*_\varepsilon \left(\{p\in \square(S) \ |\ \min_{j}\|p - x_j^*\| \geq \eta\} \right) 
    \geq O\left(\frac{1-\beta}{\eta^2}\right).
\end{equation}
The linear estimate we find here is weaker than the exponential estimate of Theorem~\ref{thm:concetration_on_Nash}. However, the concentration of the learning process is controlled in a similar fashion: as $\beta \rightarrow 1$, the RWS concentrates on Nash equilibria.

\subsection{Proof of Lemma~\ref{lem:variance}}
\label{sec:last-sec}

Let $M_1$ and $M_2$ be the cut-offs defined by the unique 
Nash equilibrium $N^*$ as in Lemma~\ref{lemma:unique_fp}, let $x^*$ be the fixed point \eqref{eq:fp}, and let
\begin{equation*}
    p_{k,M}(x) := (1-\varepsilon)\sum_{i=M+1}^{k}\binom{k}{i}x^i(1-x)^{k-i} + \varepsilon/2.
\end{equation*}
Consider $A^*$, the 'shifted' state space
\begin{equation*}
    A^* = (A^*_1, A^*_2) := [-x_1^*, 1-x_1^*]\times[-x_2^*,1-x_2^*].
\end{equation*}
Extend the function $d \mapsto p_{k,M_1}(x_2^* + d)$ to $A^*_2/\beta$ and 
$a \mapsto p_{k,M_2}(x_1^* + a)$ to $A^*_1/\beta$ by keeping the same expression, and define the functions $g_d$ and $g_a$ over $A^*_1$ and $A^*_2$, respectively, by
\begin{equation*}
    \begin{aligned}
    g_d(a) &:= \int_0^a \Big(p_{k,M_2}(x_1^* + z/\beta) - (1-x_2^*) \Big)dz,
    \\
    g_a(d) &:= \int_0^d \Big(p_{k,M_1}(x_2^* + z/\beta) - x_1^* \Big)dz.
    \end{aligned}
\end{equation*}
The functions $g_a$ and $g_d$ are smooth for all $M_1, M_2$ and all $k<\infty$, and $g_a(0) = g_d(0) = 0$. Furthermore, $g_a^\prime (d) = p_{k,M_1}(x_2^* + d/\beta)-x_1^*$ and likewise differentiation of $g_d$ yields the integrand evaluated in the argument. Hence, since $p_{k,M_1}$ and $p_{k,M_2}$ are strictly increasing, $g_a$ and $g_d$ are strictly convex. 
We will make use the following estimates of $g_a$ and $g_d$: there exists four (in general different from each other) positive constants $c_{a-}$, $c_{a+}$, $c_{d-}$, and $c_{d+}$ such that 
\begin{equation*}
    \begin{aligned}
    &p_{k,M_1}(x_2^* + y) \geq c_{a+}y + x_1^*,& &y\in[0,1-x_2^*],
    \\
    &p_{k,M_1}(x_2^* + y) \leq c_{a-}y + x_1^*,& &y\in[-x_2^*,0],
    \\
    &p_{k,M_2}(x_1^* + y) \geq c_{d+}y + (1-x_2^*),& &y\in[0, 1-x_1^*],
    \\
    &p_{k,M_2}(x_1^* + y) \leq c_{d-}y + (1-x_2^*),& &y\in[-x_1^*,0].
    \end{aligned}
\end{equation*}
The estimates imply that for all $(a,d)\in A^*$,
\begin{equation}
\label{eq:g_estimate}
    g_a(d) + g_d(a) \geq \min\{c_{a+},c_{a-}, c_{d+}, c_{d-}\}\frac{1}{2\beta}\left(a^2+d^2\right).
\end{equation}
Now consider the shifted states $A(t) := p_1(t)-n_1^*$ and $B(t) := p_2(t)-n_2^*$. The update of the shifted state is
\begin{equation*}
    \begin{aligned}
        &A(t+1) = \beta A(t) + (1-\beta)\left(\widetilde{BR}_1(x_2^* + B(t))-x_1^*\right),& 
        &A(0) = p_1(0) - x_1^*,
        \\
        &B(t+1) = \beta B(t) + (1-\beta)\left(\widetilde{BR}_2(x_1^* + A(t)) - x_2^*\right),& 
        &B(0) = p_2(0) - x_2^*.
    \end{aligned}
\end{equation*}
For $(a,d) \in A^*$, let $G(a,d) := g_d(a) + g_a(d)$. Expanding $G$ with the Taylor formula yields%
\begin{equation*}
    \begin{aligned}
    &G\left( A\left( t+1\right), B\left( t+1\right) \right) 
    \\
    &= G\left( \beta A\left( t\right) ,\beta B\left( t\right) \right) 
    +
    g^{\prime }_d\left( \beta A\left( t\right) \right) \left( 1-\beta \right)\left(\widetilde{BR}_1(x_2^* + B(t)) - x_1^*\right)
    \\
    &\hspace{15pt}+ g^{\prime }_a\left( \beta B\left( t\right) \right) \left( 1-\beta \right) \left(\widetilde{BR}_2(x_1^* + A(t)) - x_2^*\right) + O\left( \left( 1-\beta \right) ^{2}\right)
    \\
    &= G\left( \beta A\left( t\right) ,\beta B\left( t\right) \right) 
    \\
    &\hspace{15pt} +\left(1-\beta\right)\Bigg(\left(p_{k,M_2}(x_1^* + A(t)) - (1-x_2^*)\right)\left(\widetilde{BR}_1(x_2^* + B(t)) - x_1^*\right)
    \\
    &\hspace{15pt} + \left(p_{k,M_1}(x_2^* + B(t)) - x_1^*\right) \left(\widetilde{BR}_2(x_1^*A(t))-x_2^*\right)\Bigg) + O\left( \left( 1-\beta \right) ^{2}\right),
    \end{aligned}
\end{equation*}
where the ordo is uniform since $g_a$ and $g_d$ are $C^{2}$ over the compact state space. However 
\begin{equation}
    \begin{aligned}
    \mathbb{E}\left[ \widetilde{BR}_1(x_2^* + B(t))\ |\ \mathcal{F}_t\right] &= p_{k,M_1}(x_2^* + B(t)),
    \\
    \mathbb{E}\left[ \widetilde{BR}_2(x_1^* + A(t))\ |\ \mathcal{F}_t\right] &= 1 - p_{k,M_2}(x_1^* + A(t)).
    \end{aligned}
\end{equation} 
Therefore the line of order $1$ has conditional expectation zero and we are left with
\begin{equation}
    \mathbb{E}\left[ G\left( A\left( t+1\right), B\left( t+1\right) \right)\ |\ \mathcal{F}_t \right] 
    \leq G\left( \beta A\left( t\right),\beta B\left( t\right) \right) + M\left( 1-\beta \right)^{2} 
\end{equation}
for some uniform constant $M$. By convexity of $g_a$ and $g_d$,
\begin{equation}
    \mathbb{E}\left[ G\left( A\left( t+1\right), B\left( t+1\right) \right)\
    |\ \mathcal{F}_t\right] 
    \leq \beta G\left( A\left( t\right), B\left( t\right) \right)
    + M\left( 1-\beta \right)^{2} 
\end{equation}
By repeated use of the argument above together with the tower property of 
conditional expectations we get
\begin{equation}
    \begin{aligned}
    & \mathbb{E}\left[ G\left( A\left( t+1\right), B\left( t+1\right) \right)\
    |\ \mathcal{F}_t\right] 
    \\&\qquad \leq \beta^{\tau+1} G\left( A\left( 0\right), B\left( 0\right) \right) + \sum_{\tau=0}^t \beta^\tau M\left( 1-\beta \right) ^{2}.
    \end{aligned}
\end{equation}
So when $t\rightarrow \infty$,
\begin{equation}
\label{eq:G_estimate}
    \underset{t\rightarrow \infty }{\lim \sup }\ \mathbb{E}\left[ G\left( A\left( t\right), B\left( t\right) \right) \right]  
    \leq M\left(1-\beta \right).
\end{equation}
From \eqref{eq:g_estimate} and \eqref{eq:G_estimate} it follows that
\begin{equation}
    \underset{t\rightarrow \infty }{\lim \sup }\ \mathbb{E}\left[ A^{2}\left( t\right) + B^{2}\left( t\right) \right] 
    \leq \frac{2\beta M}{\min\{c_{a+},c_{a-},c_{d+},c_{d-}\}} \left( 1-\beta \right).
\end{equation}
The proof is completed by simply noting that the variance process,
\begin{equation}
    (v_1(t), v_2(t)) := \left(\mathbb{V}(p_1(t)), \mathbb{V}(p_2(t))\right).
\end{equation}
satisfies
\begin{equation}
    v_1(t) + v_2(t) 
    \leq \mathbb{E}\left[\left(p_1(t)- n_1^*\right)^2 
    + \left(p_2(t)-n_2^*\right)^2\right] = \mathbb{E}[A^2(t) + B^2(t)].
\end{equation}
The claim now follows by Chebyshev's inequality.

\bibliographystyle{aea}
\bibliography{references}

\end{document}